\theoremstyle{thmstyleone}%
\newtheorem{theorem}{Theorem}
\theoremstyle{thmstyletwo}%
\theoremstyle{thmstylethree}%
\newcommand\rc{\mathrm{c}}
\newcommand\rd{\mathrm{d}}
\newcommand\req{\mathrm{eq}}
\newcommand\rcap{\mathrm{cap}}
\newcommand\rex{\mathrm{ex}}
\newcommand\rmax{\mathrm{max}}
\newcommand\rmin{\mathrm{min}}
\newcommand\rP{\mathrm{P}} 
\newcommand\rt{\mathrm{t}}
\newcommand\rTr{\mathrm{Tr}}
\newcommand\kB{k_\mathrm{B}}
\newcommand\kT{k_\mathrm{B}T}
\renewcommand{\vec}[1]{\mathbf{#1}}
\newcommand{\mat}[1]{\mathbf{\underline{#1}}}
\begin{document}

\small

\title[Geometric measures of convex uniaxial solids of revolution in $\mathbb{R}^4$]{Geometric measures of uniaxial solids of revolution in $\mathbb{R}^4$ and their relation to the second virial coefficient}

\author{\fnm{Markus} \sur{Kulossa}}\email{markus.kulossa@uni-rostock.de}

\author*{\fnm{Joachim} \sur{Wagner}}\email{joachim.wagner@uni-rostock.de}

\affil{\orgdiv{Institut f\"{u}r Chemie}, \orgname{Universit\"{a}t Rostock}, \orgaddress{\postcode{18051} \city{Rostock}, \country{Germany}}}

\abstract{We provide analytical expressions for the second virial coefficients of hard, convex, monoaxial solids of revolution in $\mathbb{R}^4$. 
The excluded volume per particle and thus the second virial coefficient is calculated using quermassintegrals and rotationally invariant mixed volumes based on the Brunn-Minkowski theorem. 
We derive analytical expressions for the mutual excluded volume of four-dimensional hard solids of revolution in dependence on their aspect ratio $\nu$ including the limits of infinitely thin oblate and infinitely long prolate geometries.  Using reduced second virial coefficients $B_2^*=B_2/V_{\rP}$ as size-independent quantities with $V_{\rP}$ denoting the $D$-dimensional particle volume, the influence of the 
particle geometry to the mutual excluded volume is analyzed for various shapes. Beyond the aspect ratio $\nu$, the detailed particle shape influences the reduced second virial coefficients $B_2^*$.
We prove that for $D$-dimensional spherocylinders in arbitrary-dimensional Euclidean spaces $\mathbb{R}^D$ their excluded volume solely depends on at most three intrinsic volumes, whereas for different convex geometries
$D$ intrinsic volumes are required. For $D$-dimensional ellipsoids of revolution, the general parity $B_2^*(\nu)=B_2^*(\nu^{-1})$ is proven.
}

\keywords{convex solids, quermassintegrals, excluded volumes}

\maketitle

\section{Introduction}\label{sec1}

Place two identical coins on a table and roll one coin completely around the second coin's rim: The moving coin completes two revolutions during the turn around the
first coin which is known as the coin rotation paradox. 
To analyze this phenomenon, in addition to a cardioid curve generated by a point on the perimeter of the second coin, the excluded volume between two rigid bodies can be used: 
The excluded volume is the volume inaccessible for the second body in the vicinity of the first one. The perimeter of the excluded volume is the curve generated
by the center of the second body during its turn around the first one preserving contact distance. 
As visualized in Fig. \ref{fig:vex}, for two circles with radius $r_0$ the two-dimensional excluded volume (or area) is a circle with radius $2r_0$ and area $4\pi r_0^2$, i.e.,
per circle $2\pi r_0^2$  and thus twice the two-dimensional volume of one circle. Hence, the number of revolutions of the turning coin is the ratio of the excluded volume's perimeter
to the coin's perimeter.

While this problem can be generalized to coins with unequal radii, also rolling a disk around other, possibly anisotropic shapes is feasible. 
When both objects are anisotropic, the excluded area additionally depends on the relative orientation of these bodies as shown in Fig. \ref{fig:vex} (right) for stadia. 
Here, an orientation-averaged, two-dimensional excluded volume results which can be generalized to arbitrary, convex hard bodies in Euclidean spaces $\mathbb{R}^D$.

The excluded volume of hard solids has implications on the physics of multi-particle systems consisting of such objects: The phase behavior and equation of state of
hard-body systems are governed by their geometry \cite{Kihara1953,Torquato2010}. 
Kamerling-Onnes \cite{KamerlinghOnnes1902} proposed the virial expansion
\begin{align}
\label{eq:virial_expansion_rho}
Z &=\dfrac{p}{\varrho \kB T} = 1 + B_2 \varrho + B_3 \varrho^2 + \dots = 1 + \sum\limits_{i=2}^\infty B_i \varrho^{i-1}
\end{align}
for the real gas factor $Z$ of imperfect gases as a heuristic equation of state, an expansion in powers of the number density $\varrho$ 
where $p$ denotes the pressure, $\kB$ Boltzmann's constant, and $T$ the temperature. The coefficients $B_i$ are
virial coefficients of order $i$. Introducing the packing fraction $\eta=\varrho V_{\rm P}$ with $V_{\rm P}$ denoting the $D$-dimensional
particle volume, Eq. \eqref{eq:virial_expansion_rho} can be reformulated as an expansion in powers of the packing fraction $\eta$
\begin{align}
\label{eq:real_gas_factor_volume_fraction}
Z &= \dfrac{p V_\rP}{\eta\kB T} = 1 + B_2^*\eta + B_3^*\eta^2 + \dots = 1 + \sum\limits_{i=2}^\infty B_i^* \eta^{i-1}
\end{align}
with the dimensionless, reduced virial coefficients $B_i^*=B_i/V_{\rm P}^{i-1}$ as expansion coefficients. Introducing a generalized pressure as
ratio of energy and $D$-dimensional volume, the virial expansion can be used in Euclidean spaces $\mathbb{R}^D$ with arbitrary dimension $D$.

\begin{figure}[ht]
\centering
\includegraphics[width=0.30\textwidth, valign=c]{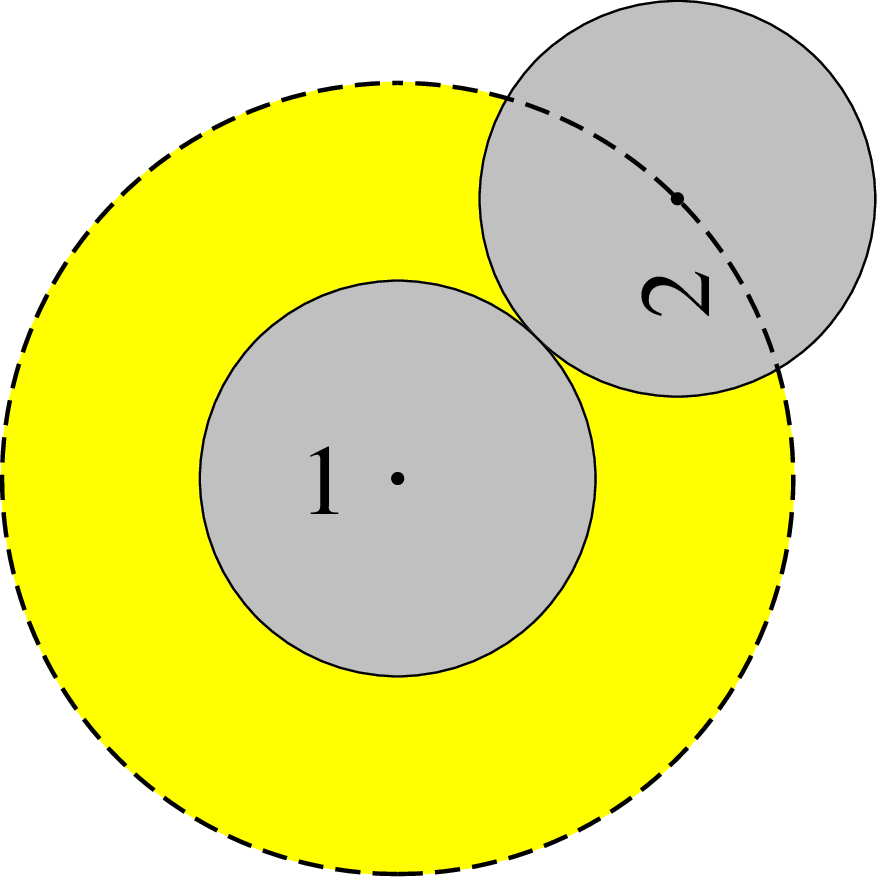} \qquad \quad
\includegraphics[width=0.40\textwidth, valign=c]{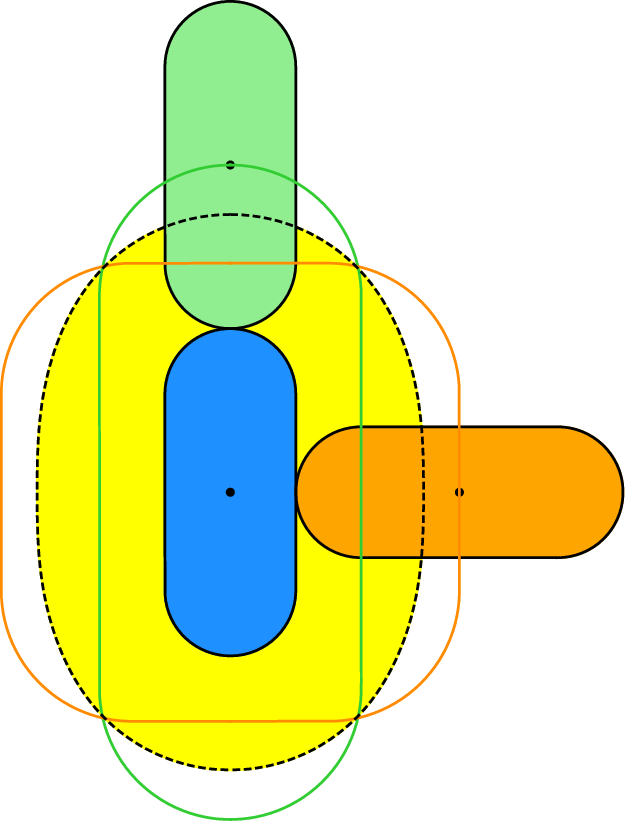}
\caption{Left: Excluded volume of two hard disks. Right: Orientation-averaged excluded volume of two stadia with aspect ratio $\nu=2.5$.}\label{fig:vex}
\end{figure}

The first virial coefficients were calculated for hard spheres in $\mathbb{R}^3$. Starting point is the calculation of the second virial coefficient by Jäger \cite{Jaeger1896}. Later, Boltzmann, van der Waals, and van Laar provided analytical solutions for virial coefficients of hard spheres up to the fourth order \cite{Boltzmann1896,Waals1899,Boltzmann1899,VanLaar1899,Nijboer1952}. Up to now, analytical solutions for virial coefficients of order five and higher are not available. Later on, analytical expressions for virial coefficients of isotropic particles in $\mathbb{R}^2$ \cite{Tonks1936,Rowlinson1964,Hemmer1965} and Euclidean spaces with dimension $D>3$ \cite{Luban1982,Clisby2004,Lyberg2005,Urrutia2022} were reported up to the fourth order.  

Starting point for studies of non-spherical particles is the seminal work of Onsager \cite{Onsager1949} with infinitely thin hard rods as a model system for liquid crystals \cite{degennes1993,Mederos2014,khoo2022}.
Few years later, Isihara and Hadwiger independently provided an analytical expression for the second virial coefficient of arbitrary convex particles in $\mathbb{R}^3$ \cite{Isihara1950,Isihara1951,Hadwiger1951}. The analogue 
for convex solids in $\mathbb{R}^2$ is provided by Boubl{\'\i}k \cite{Boublik1975, Tarjus1991}. For concave geometries in $\mathbb{R}^2,\mathbb{R}^3$, and $\mathbb{R}^4$ only few analytical or semianalytical expressions
for the second virial coefficient are available \cite{Isihara1951a,Rowlinson1985,Boublik1986,Kulossa2023}. For selected concave, rigid shapes, numerical values for their second virial coefficients as the mutually excluded volume per particle have been provided \cite{Gravish2012,Kornick2021}. 

Higher order virial coefficients, even for convex, anisometric particles are only numerically available, e.g., by means of Monte Carlo integration \cite{Singh2004,Marienhagen2021}. 
Analytical expressions for the second virial coefficients of solids of revolution in $\mathbb{R}^3$ employing the Isihara-Hadwiger theorem are provided in a previous work \cite{Herold2017}.
Recently, the Brunn-Minkowski theorem has been applied to calculate second virial coefficients of various convex particles in Euclidean spaces with dimension $D>3$ \cite{Schneider2013,Santalo2004,Torquato2022}. 

In this work, we use the Brunn-Minkowski theorem to provide analytical expressions for the second virial coefficients $B_2$ of convex, uniaxial solids of revolution in $\mathbb{R}^4$. 
Therefore, analytical expressions for their quermassintegrals (and intrinsic volumes) are derived depending on their meridian curve. The results for selected geometries are compared and the detailed influence of the particle shape is analyzed. The comparison of solids of revolution with identical meridian curve in Euclidean spaces with different dimensionality can provide fruitful insights to the self organization of condensed matter in $\mathbb{R}^3$. Beyond the impact on real physical systems, e.g., equation-of-state data and phase behavior, insights into close packings in higher dimensional spaces can be expected based on the provided geometric measures \cite{Torquato2013,Torquato2018,Schultz2022,Nezbeda2024}. 


\section{Theoretical background}\label{sec2} 

\subsection{Virial theory}\label{virial_theory}
Mayer and Mayer have shown by means of statistical mechanics, starting from the grand canonical partition function, that the virial coefficient of order $i$ depends on interactions in an $i$-particle
cluster \cite{Mayer1937,Mayer1940}. The second virial coefficient, describing the initial departure from the ideal-gas behavior in the low-density limit, reads as
\begin{align}
B_2 = -\dfrac{1}{2V} \left(Z_2 - Z_1^2 \right)
\end{align}
with $Z_N$ denoting the configuration integral for $N$ particles in a system volume $V$. Let $U(\vec{r}_1,\vec{r}_2)$ be the potential energy of two particles with centers located at $\vec{r}_1$ and $\vec{r}_2$, the second virial coefficient can be written as 
\begin{align}
B_2 = -\dfrac{1}{2V} \iint \left\{\exp\left[-\dfrac{U(\vec{r}_1,\vec{r}_2)}{\kT} \right]-1 \right\} \,\rd^D \vec{r}_1 \,\rd^D \vec{r}_2\,.
\end{align}
With the coordinate system's origin at $\vec{r}_1$ and the distance vector $\vec{r}_{12}=\vec{r}_2-\vec{r}_1$,
performing the integration over $\rd^D\vec{r}_1$, the second virial coefficient can be reformulated as
\begin{align}
\label{eq:B2_r12}
B_2 = - \dfrac{1}{2} \int \left\{\exp\left[-\dfrac{U(\vec{r}_{12})}{\kT} \right]-1 \right\} \,\rd^D \vec{r}_{12}\,.
\end{align}
Here, the integration over $\rd^D\,\vec{r}_1$ with
\begin{align}
\rd^D \vec{r}  &= r^{D-1}\,\rd r \prod\limits_{i=0}^{D-2} \sin^i \left( \phi_i \right) \,\rd \phi_i = \rd^D V
\end{align}
as $D$-dimensional volume element using polar coordinates results in the system volume $V$.
The integrand in Eq.\eqref{eq:B2_r12} 
\begin{align}
\label{eq:Mayer_f_function}
f_{12} = \exp\left[-\dfrac{U (\vec{r}_{12})}{\kT}\right]-1 
\end{align}
is the Mayer $f$ function depending on the interaction of both particles. For hard-body interaction, the
potential reads as 
\begin{align}
\label{eq:hard_body_interaction}
U(\vec{r}_{12}) = \left\{ \begin{array}{rcl}
\infty & : & r_{12} < \sigma \\
0 & : & r_{12} \geq \sigma \\
\end{array} \right.
\end{align} 
with $\sigma$ denoting the contact distance which in the case of uniaxial, anisometric shapes depends on the unit vectors 
$\hat{\vec{u}}_1$, $\hat{\vec{u}}_2$ denoting their orientation, and the direction of the distance vector $\hat{\vec{r}}_{12}=\vec{r}_{12}/r_{12}$.
Hence, combining Eqs. \eqref{eq:Mayer_f_function} and \eqref{eq:hard_body_interaction} results in 
\begin{align}
f_{12} = \left\{ \begin{array}{rcl}
-1 & : & r_{12} < \sigma  \\
 0 & : & r_{12} \geq \sigma \\
\end{array} \right.
\end{align} 
for the Mayer $f$ function, irrespective of the thermal energy $\kB T$: For an overlap, $f_{12}=-1$ is obtained while this integrand vanishes for non-overlap configurations.

For $D$-dimensional spheres with radius $r_0$ as radially symmetric bodies, the contact distance is, independent of the orientation, $\sigma=2 r_0$,
leading to 
\begin{align}
B_2 = \dfrac{1}{2} \beta_D \int_0\limits^\sigma r^{D-1} \rd r
\end{align}
with 
\begin{align}\label{eq:beta_D}
\beta_D = \dfrac{2 \pi^{D/2}}{\Gamma(D/2)}
\end{align}
denoting the surface area of a $D$-dimensional sphere with radius $r_0=1$ where
\begin{align}
\Gamma(z) = \int\limits_0^\infty t^{z-1} \exp\left( -t \right) \,\rd t
\end{align}
denotes the Euler gamma function. Performing the remaining integration over $\rd r$ leads
to
\begin{align}\label{eq:B2_sph}
B_2 = \dfrac{1}{2} \kappa_D (2 r_0)^D
\end{align}
with the volume 
\begin{align}\label{eq:kappa}
\kappa_D = \dfrac{\pi^{D/2}}{\Gamma (1+D/2)}
\end{align}
of the $D$-dimensional unit sphere for the second virial coefficient of a $D$-dimensional sphere.

Since in the case of uniaxial, anisometric shapes the contact distance depends on the orientations $\hat{\vec{u}}_1$, $\hat{\vec{u}}_2$, and $\hat{\vec{r}}_{12}$, the 
orientation-averaged Mayer $f$ function has to be used as integrand leading to
\begin{align}
B_2 = - \dfrac{1}{2} \int \langle f_{12}\rangle_{\hat{\vec{u}}_2} \,\rd^D \vec{r}_{12}
\end{align}
where the direction $\hat{\vec{u}}_1$ defines the orientation of the coordinate system. 

For hard-body interaction, the second virial coefficient 
\begin{align}\label{eq:B2Vex}
B_2 = \dfrac{1}{2} V_\rex
\end{align}
is the mutual excluded volume per particle and thus solely related to geometric properties of the considered shape.


\subsection{Excluded volumes of convex solids}

Let $K,L\in\mathbb{R}^D$ be two convex bodies. The rotation-averaged excluded volume $V_{\rex}(K,L)$ is
\begin{align}\label{eq:Vex_KL}
V_\rex (K,L)=\dfrac{1}{\kappa_D} \sum\limits_{i=0}^D \binom{D}{i}  W_i(K)\, W_{D-i}(L)\,,
\end{align}
known as Brunn-Minkowski theorem \cite{Groemer1972,Schneider2013,Torquato2022}.
The quantity $W_i(K)$ is the mixed volume of $(D-i)$ convex bodies $K$ and $i$ $D$-dimensional unit spheres 
denoted as $i$-th quermassintegral of $K$ which relates to the intrinsic volume $\upsilon_i$ as
\begin{align}\label{eq:intrinsic_volumes}
\upsilon_i (K) = \binom{D}{i} \dfrac{W_{D-i}(K)}{\kappa_{D-i}}
\end{align}
for a convex set. 

If $L$ is a $D$-dimensional sphere with radius $\varepsilon$, using 
\begin{align}
W_i^{(\rm sph)}(\varepsilon)&= \kappa_D \varepsilon^{D-i}
\end{align}
as a sphere's quermassintegral of order $i$ \cite{Schneider2013}, Eq. \eqref{eq:Vex_KL}
reads as
\begin{align}
V_\rex(K,\varepsilon) &= \sum\limits_{i=0}^D \binom{D}{i} W_i(K)\varepsilon^i\,,
\end{align}
also known as Steiner formula \cite{Morvan2008}. This excluded volume $V_\rex(K,\varepsilon)$ is also the volume of $K$'s $\varepsilon$-neighborhood as visualized in Fig. \ref{fig:varepsilon_neighorhood}. 

\begin{figure}
\centering
\includegraphics[width=0.3\columnwidth]{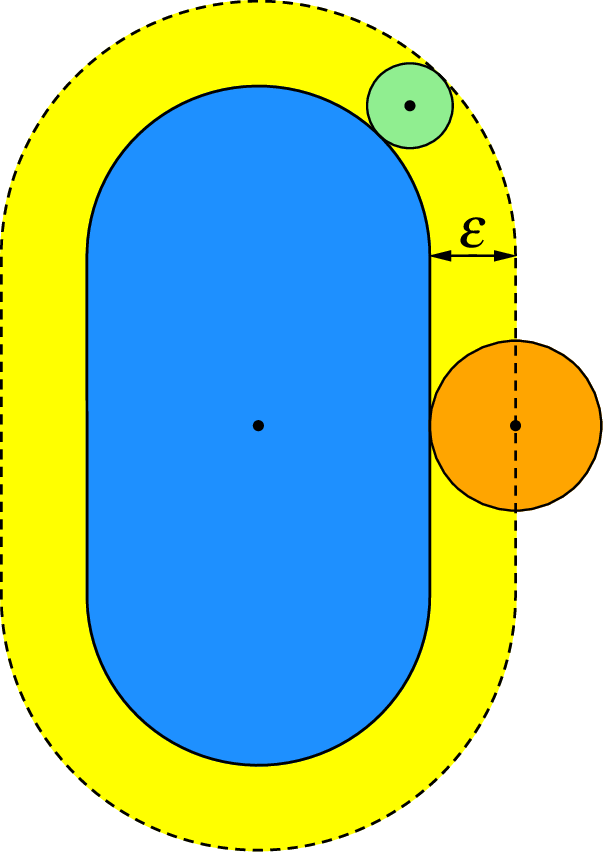} 
\caption{\label{fig:varepsilon_neighorhood} The area of the $\varepsilon$-neighborhood of a stadium is identical to the excluded area of a stadium and a disk with radius $\varepsilon$. } 
\end{figure}

For identical convex bodies $K=L$, the Brunn-Minkowski theorem reads as 
\begin{align}\label{eq:Vex_K}
V_\rex(K) = V_\rex(K,K) = \dfrac{1}{\kappa_D} \sum\limits_{i=0}^D \binom{D}{i} W_i(K) W_{D-i}(K) 
\end{align}
and thus provides an analytical expression for the second virial coefficient $B_2(K) =V_\rex(K)/2$ of a convex particle since due to the rotational invariance of quermassintegrals directly the
orientation average is obtained.

Quermassintegrals $W_i$ of orders $i=\{0,1\}$ and $i=\{D-1,D\}$ are related to common geometric measures as
\begin{subequations}\label{eq:W}
\begin{align}
W_0(K) &= V_\rP(K)\,, \label{eq:W_0} \\
W_1(K) &= \dfrac{1}{D} \,S_\rP(K)\,, \\
W_{D-1}(K) &= \tilde{R}_\rP(K) \kappa_D\,, \label{eq:W_D-1} \\
W_D(K) &= \kappa_D
\end{align}
\end{subequations}
where $V_\rP$ denotes the volume of $K$, $S_\rP$ its total surface area, and $\tilde{R}_\rP$ its mean radius of curvature \cite{Torquato2013,Torquato2022}.

Combining Eqs. \eqref{eq:B2Vex}, \eqref{eq:Vex_K}, \eqref{eq:W},  and \eqref{eq:kappa}, in the two- and three-dimensional Euclidean space, virial coefficients for arbitrary convex solids $K$
can be calculated. In the two-dimensional Euclidean space $\mathbb{R}^2$, 
\begin{align}\label{eq:B2_2d}
B_2(K) = V_\rP(K) + \dfrac{1}{2} S_\rP(K)\, \tilde{R}_\rP(K) = V_\rP(K) + \dfrac{S_\rP(K)^2}{4 \pi}
\end{align}
is obtained \cite{Boublik1975}. The relation for the three-dimensional Euclidean space $\mathbb{R}^3$
\begin{align}
B_2(K) &= V_\rP(K) + S_\rP(K)\, \tilde{R}_\rP(K)
\end{align}
is known as Isihara-Hadwiger theorem, independently derived by Isihara \cite{Isihara1950,Isihara1951,Isihara1951a}  and Hadwiger \cite{Hadwiger1951}.  
Using these relations, second virial coefficients for various convex shapes in $\mathbb{R}^2$ and $\mathbb{R}^3$ have been calculated \cite{Torquato2013,Herold2017,Irrgang2017,Kulossa2023b}.

In $\mathbb{R}^4$, the second virial coefficient reads as
\begin{align}\label{eq:B2(K)}
B_2(K) &= V_\rP(K) + S_\rP(K) \, \tilde{R}_\rP(K) + \dfrac{6}{\pi^2} W_2^2(K)\,.
\end{align}
To calculate $W_2(K)$, the surface integral over continuous surface curvature
\begin{align}\label{eq:W2(K)}
W_2(K) &= \dfrac{1}{D}\oint H_\rP(K) \,\rd^{D-1} S
\end{align}
is required where $H_\rP$ is the mean curvature
\begin{align}\label{eq:HP_D}
H_\rP &= \dfrac{1}{D-1}\sum\limits_{i=1}^{D-1} \dfrac{1}{R_i}
\end{align}
depending on the principal radii of curvature $R_i$ of a convex solid \cite{Groemer1972,Herold2017}.


\section{Geometric measures of four-dimensional uniaxial solids of revolution}

Let $K$ be a convex particle in $\mathbb{R}^4$. Its second virial coefficient $B_2(K)$ can analytically be calculated with Eq. \eqref{eq:B2(K)} using its geometric measures volume $V_\rP(K)$, surface area $S_\rP(K)$, second quermassintegral $W_2(K)$, and mean radius of curvature $\tilde{R}_\rP(K)$. 
Uniaxial solids of revolution with meridian curve $r(z)$ can be parameterized conveniently in hypercylindrical coordinates  
\begin{align}\label{eq:hcyl_coordinates}
\Psi = \begin{pmatrix}
w \\
x \\
y \\
z \\
\end{pmatrix}= \begin{pmatrix}
r \sin \vartheta \sin \chi \cos \varphi \\
r \sin \vartheta \sin \chi \sin \varphi \\
r \sin \vartheta \cos \chi \\
r \cos \vartheta \\
\end{pmatrix} = \begin{pmatrix}
r(z) \sin \chi \cos \varphi \\
r(z) \sin \chi \sin \varphi \\
r(z) \cos \chi \\
z \\
\end{pmatrix}\,,
\end{align}
related to Cartesian or polar coordinates.
Since the volume of a uniaxial solid of revolution in $\mathbb{R}^D$ is simply
\begin{align}\label{eq:V_P^D}
V_\rP(K) = \kappa_{D-1} \int r^{D-1}(z)\,\rd z\,,
\end{align}
using hypercylindrical coordinates, the expression
\begin{align}\label{eq:VP}
V_\rP(K) 
	  &= \dfrac{4}{3} \pi  \int\limits_{z_{\rm min}}^{z_{\rm max}} r^3(z)\,\rd z
\end{align}
results in $\mathbb{R}^4$ for a particle with length  $l=z_{\rm max}-z_{\rm min}$.
If $r(z_{\rm min})=r(z_{\rm max})=0$, its surface area can be written as
\begin{align}\label{eq:S_P^D}
S_\rP(K) = \beta_{D-1} \int r^{D-2}(z) \left\{1+\left[\dfrac{\rd r(z)}{\rd z} \right]^2 \right\}^{1/2}\,\rd z
\end{align}
in $\mathbb{R}^D$ with 
\begin{align}\label{eq:ds}
\rd s &= \left\{\left[\rd r(z)\right]^2 + \left[\rd z\right]^2 \right\}^{1/2} 
= \left\{1+\left[\dfrac{\rd r(z)}{\rd z} \right]^2 \right\}^{1/2}\,\rd z 
\end{align}
being the infinitesimal arc length element of the meridian curve with the result
\begin{align}\label{eq:MP}
S_\rP(K) &= 4 \pi \int\limits_{z_{\rm min}}^{z_{\rm max}} r^2(z) \left[1+\dot{r}^2(z) \right]^{1/2}\,\rd z
\end{align}
in $\mathbb{R}^4$ \cite{Herold2017,Kulossa2023}.

The second quermassintegral $W_2(K)$ depends on the principal radii of curvature $R_i$ of the convex solid [Eqs. \eqref{eq:W2(K)} and \eqref{eq:HP_D}],
accessible as reciprocal eigenvalues of the Weingarten map 
\begin{align}
\underline{\mathcal{W}} &= \mat{I}_\rP^{-1}  \mat{II}_\rP\,.
\end{align}
The Weingarten map $\underline{\mathcal{W}}$ is also known as shape operator where $\mat{I}_{\rP}$ and $\mat{II}_{\rP}$ denote the first and second fundamental form of the solid's surface $\Psi$.

Using the first derivatives
\begin{align}
\dfrac{\partial \Psi}{\partial \varphi} &= \begin{pmatrix}
-r(z) \sin \chi \sin \varphi \\
r(z) \sin \chi \cos \varphi \\
0 \\
0
\end{pmatrix}\,, ~
\dfrac{\partial \Psi}{\partial \chi} = \begin{pmatrix}
r(z) \cos \chi \cos \varphi \\
r(z) \cos \chi \sin \varphi \\
-r(z) \sin \chi \\
0
\end{pmatrix}\,,
~ 
\dfrac{\partial \Psi}{\partial z} = \begin{pmatrix}
\dot{r}(z) \sin \chi \cos \varphi \\
\dot{r}(z) \sin \chi \sin \varphi \\
\dot{r}(z) \cos \chi \\
1
\end{pmatrix}
\end{align}
the first fundamental form 
\begin{align}
\mat{I}_\rP = \begin{bmatrix}
\left(\dfrac{\partial \Psi}{\partial \varphi} \right) \cdot \left(\dfrac{\partial \Psi}{\partial \varphi} \right) & \quad \left(\dfrac{\partial \Psi}{\partial \varphi} \right) \cdot \left(\dfrac{\partial \Psi}{\partial \chi} \right) & \quad \left(\dfrac{\partial \Psi}{\partial \varphi} \right) \cdot \left(\dfrac{\partial \Psi}{\partial z} \right) \vspace{3mm}\\
\left(\dfrac{\partial \Psi}{\partial \chi} \right) \cdot \left(\dfrac{\partial \Psi}{\partial \varphi} \right) &\quad \left(\dfrac{\partial \Psi}{\partial \chi} \right) \cdot \left(\dfrac{\partial \Psi}{\partial \chi} \right) &\quad \left(\dfrac{\partial \Psi}{\partial \chi} \right) \cdot \left(\dfrac{\partial \Psi}{\partial z} \right) \vspace{3mm} \\
\left(\dfrac{\partial \Psi}{\partial z} \right) \cdot \left(\dfrac{\partial \Psi}{\partial \varphi} \right) &\quad \left(\dfrac{\partial \Psi}{\partial z} \right) \cdot \left(\dfrac{\partial \Psi}{\partial \chi} \right) &\quad \left(\dfrac{\partial \Psi}{\partial z} \right) \cdot \left(\dfrac{\partial \Psi}{\partial z} \right) \\
\end{bmatrix}
\end{align}
of a uniaxial solid of revolution in $\mathbb{R}^4$ reads as 
\begin{align}
\mat{I}_\rP = \begin{bmatrix}
r^2(z) \sin^2 \chi & 0 & 0 \vspace{3mm}\\
0 & r^2(z) & 0 \vspace{3mm} \\
0 & 0 & 1+\dot{r}^2(z) \\
\end{bmatrix}\,.
\end{align}
The second fundamental form can be written as 
\begin{align}\label{eq:2nd_fundamental}
\mat{II}_\rP = \begin{bmatrix}
\hat{\vec{n}} \cdot \dfrac{\partial^2 \Psi}{\partial \varphi^2} &\quad \hat{\vec{n}} \cdot \dfrac{\partial^2 \Psi}{\partial \varphi \partial \chi } &\quad \hat{\vec{n}} \cdot \dfrac{\partial^2 \Psi}{\partial \varphi \partial z } \vspace*{3mm}\\
\hat{\vec{n}} \cdot \dfrac{\partial^2 \Psi}{\partial \chi \partial \varphi} &\quad \hat{\vec{n}} \cdot \dfrac{\partial^2 \Psi}{\partial \chi^2 } &\quad \hat{\vec{n}} \cdot \dfrac{\partial^2 \Psi}{\partial \chi \partial z } \vspace*{3mm}\\
\hat{\vec{n}} \cdot \dfrac{\partial^2 \Psi}{\partial z \partial \varphi} &\quad \hat{\vec{n}} \cdot \dfrac{\partial^2 \Psi}{\partial z \partial \chi } &\quad \hat{\vec{n}} \cdot \dfrac{\partial^2 \Psi}{\partial z^2 }\\
\end{bmatrix}
\end{align}
with the second derivatives 
\begin{subequations}
\begin{align}
\dfrac{\partial^2 \Psi}{\partial \varphi^2} &= \begin{pmatrix}
-r(z) \sin \chi \cos \varphi \\
-r(z) \sin \chi \sin \varphi \\
0 \\
0
\end{pmatrix}\,, \quad ~
\dfrac{\partial^2 \Psi}{\partial \varphi \partial \chi } = \dfrac{\partial^2 \Psi}{\partial \chi \partial \varphi } = \begin{pmatrix}
-r(z) \cos \chi \sin \varphi \\
r(z) \cos \chi \cos \varphi \\
0 \\
0
\end{pmatrix}\,, \\
\nonumber \\
\dfrac{\partial^2 \Psi}{\partial \chi^2} &= \begin{pmatrix}
-r(z) \sin \chi \cos \varphi \\
-r(z) \sin \chi \sin \varphi \\
-r(z) \cos \chi \\
0
\end{pmatrix}\,, \quad ~
\dfrac{\partial^2 \Psi}{\partial \chi \partial z } = \dfrac{\partial^2 \Psi}{\partial z \partial \chi } = \begin{pmatrix}
\dot{r}(z) \cos \chi \cos \varphi \\
\dot{r}(z) \cos \chi \sin \varphi \\
-\dot{r}(z) \sin \chi \\
0
\end{pmatrix}\,, \\
\nonumber \\
\dfrac{\partial^2 \Psi}{\partial z^2} &= \begin{pmatrix}
\ddot{r}(z) \sin \chi \cos \varphi \\
\ddot{r}(z) \sin \chi \sin \varphi \\
\ddot{r}(z) \cos \chi \\
0
\end{pmatrix}\,, \qquad 
\dfrac{\partial^2 \Psi}{\partial \varphi \partial z } = \dfrac{\partial^2 \Psi}{\partial z \partial \varphi } = \begin{pmatrix}
-\dot{r}(z) \sin \chi \sin \varphi \\
\dot{r}(z) \sin \chi \cos \varphi \\
0 \\
0
\end{pmatrix}\,,
\end{align}
\end{subequations}
and the normal field $\hat{\vec{n}} \equiv \hat{\vec{n}}(\varphi, \chi, z)$ of the solid's surface. 
 
A normal, perpendicular to three derivatives $\vec{a}=\partial\Psi/\partial\varphi$, $\vec{b}=\partial\Psi/\partial\chi$ and $\vec{c}=\partial\Psi/\partial z$ can in $\mathbb{R}^4$ be written as
\begin{align}
\vec{n}(\vec{a},\vec{b},\vec{c}) = \det \left(\begin{array}{cccc}
e_w & a_w & b_w & c_w \\
e_x & a_x & b_x & c_x \\
e_y & a_y & b_y & c_y \\
e_z & a_z & b_z & c_z \\
\end{array} \right)
\end{align}
using the canonical basis $(e_w,e_x,e_y,e_z)$ with the parities 
\begin{align}
  \vec{n} \left(\vec{a}, \vec{b}, \vec{c}\right) &= - \vec{n} \left(\vec{a}, \vec{c}, \vec{b}\right) \nonumber \\
= \vec{n} \left(\vec{b}, \vec{c}, \vec{a }\right)&= - \vec{n} \left(\vec{b}, \vec{a}, \vec{c}\right) \nonumber \\
= \vec{n} \left(\vec{c}, \vec{a}, \vec{b}\right) &= - \vec{n} \left(\vec{c}, \vec{b}, \vec{a}\right).
\end{align}
A uniaxial solid's of revolution second fundamental form using the normal field 
\begin{align}
\hat{\vec{n}} = \dfrac{\vec{n}}{\Vert\vec{n}\Vert}
 = \dfrac{1}{\left[1+\dot{r}^2(z) \right]^{1/2}} \begin{pmatrix}
-\sin \chi \cos \varphi \\
-\sin \chi \sin \varphi \\
-\cos \chi \\
\dot{r}(z) \\
\end{pmatrix}
\end{align}
reads as
\begin{align}
\mat{II}_\rP = \dfrac{1}{\left[1+\dot{r}^2(z) \right]^{1/2}} \begin{bmatrix}
r(z) \sin^2 \chi & 0 & 0 \\
0 & r(z) & 0 \\
0 & 0 & -\ddot{r}(z) \\
\end{bmatrix}\,.
\end{align}
From both fundamental forms, the Weingarten map 
\begin{align}
\underline{\mathcal{W}} = \begin{pmatrix}
 \dfrac{1}{r(z) \left[1+\dot{r}^2(z)\right]^{1/2}} & 0 & 0\\
0 &  \dfrac{1}{r(z) \left[1+\dot{r}^2(z) \right]^{1/2}} & 0 \\
0 & 0 & - \dfrac{\ddot{r}(z)}{\left[1+\dot{r}^2(z) \right]^{3/2}} \\
\end{pmatrix}
\end{align}
is obtained as a main diagonal matrix. Hence, the eigenvalues are directly accessible from the diagonal elements indicating the principal curvatures.
The mean curvature thus can be written with the trace of the Weingarten map as 
\begin{align}\label{eq:HP_W}
H_\rP = \dfrac{1}{D-1} \rTr\left(\underline{\mathcal{W}} \right) =  \dfrac{1}{D-1} \rTr\left(\mat{I}_\rP^{-1}  \mat{II}_\rP \right)\,.
\end{align}
Hence, for a uniaxial solid of revolution in $\mathbb{R}^4$ three principal radii of curvature 
\begin{align}
R_1(z) &= R_2(z) = r(z) \left[1+\dot{r}^2(z)\right]^{1/2} \label{eq:R1R2} \\
R_3(z) &= - \dfrac{1}{\ddot{r}(z)} \left[1+\dot{r}^2(z)\right]^{3/2} \label{eq:R3} 
\end{align}
result as inverses of the principal curvatures where two of them are identical.

Combining Eqs. \eqref{eq:R1R2}, \eqref{eq:R3}, and \eqref{eq:HP_D}, the mean curvature can be written as 
\begin{align}\label{eq:HP_R_4d}
H_\rP = \dfrac{1}{3} \left[\dfrac{1}{R_1(z)} + \dfrac{1}{R_2(z)} + \dfrac{1}{R_3(z)} \right]
\end{align}
in dependence on the meridian curve $r(z)$. Using the surface element $\rd^3 S = r^2(z) \sin \chi \, \rd \varphi \, \rd \chi \, \rd s$, the second quermassintegral in $\mathbb{R}^4$ with Eqs. \eqref{eq:W2(K)} and \eqref{eq:ds}, performing the integration over the angular coordinates $\varphi$ and $\chi$, finally reads as 
\begin{align}\label{eq:W2_an}
W_2(K) = \dfrac{\pi}{3}  \int\limits_{z_{\rm min}}^{z_{\rm max}}  r^2(z)  \left[\dfrac{1}{R_1(z)}+\dfrac{1}{R_2(z)}+\dfrac{1}{R_3(z)}\right] \left[1+\dot{r}^2(z) \right]^{1/2} \,\rd z
\end{align}
for convex solids with continuous surface curvatures.

The mean radius of curvature of a convex solid with continuous surface curvature can, in general, be calculated from its principal radii of curvature as 
\begin{align}
\tilde{R}_\rP = \dfrac{1}{\beta_D} \oint \dfrac{1}{D-1} \left[ \sum\limits_{i=1}^{D-1} R_i(\phi_0, \phi_1, \dots \phi_{D-2}) \right] \prod\limits_{j=0}^{D-2} \sin^j \left(\phi_j \right)\,\rd \phi_j
\end{align}
normalized to the surface area of a $D$-dimensional unit sphere.
In $\mathbb{R}^4$,
\begin{align}
\tilde{R}_\rP(K) = \dfrac{1}{6 \pi^2} \int\limits_0^{\pi} \int\limits_0^\pi \int\limits_0^{2\pi} \left[R_1(\vartheta, \chi, \varphi) + R_2(\vartheta, \chi, \varphi) + R_3(\vartheta, \chi, \varphi) \right] \sin^2 \vartheta \sin \chi\, \rd \varphi \, \rd \chi \, \rd \vartheta
\end{align}
is obtained. Employing the surface element 
\begin{align}\label{eq:SP_R1R2R3}
\rd^3 S = R_1 R_2 R_3 \sin^2 \vartheta \sin \chi \,\rd \varphi \,\rd \chi \,\rd \vartheta = r^2(z) \sin \chi \left[1+ \dot{r}^2(z) \right]^{1/2}  \,\rd \varphi \, \rd \chi \, \rd z\,,
\end{align}
the angular integral can be written as a surface integral 
\begin{align}\label{eq:RP_an}
\tilde{R}_\rP(K) = \dfrac{2}{3\pi} \int\limits_{z_{\rm min}}^{z_{\rm max}} r^2(z) \left[\dfrac{1}{R_2(z) R_3(z)}+\dfrac{1}{R_1(z) R_3(z)}+\dfrac{1}{R_1(z) R_2(z)}\right]  \left[1+\dot{r}^2(z) \right]^{1/2} \,\rd z
\end{align}
where again the integration over the angular coordinates $\varphi$ and $\chi$ results in a factor $4\pi$.

Using the relations to the other quermassintegrals in Eq. \eqref{eq:W}, analytical expressions for them can be summarized as 
\begin{subequations}\label{eq:Wi_K}
\begin{align}
W_0(K) &= V_\rP(K) \\
W_1(K) &= \dfrac{1}{4} \oint  \, \rd^3 S = \dfrac{1}{4} S_\rP (K) \\
W_2(K) &= \dfrac{1}{4} \oint \dfrac{1}{3} \left[\dfrac{1}{R_1} + \dfrac{1}{R_2} + \dfrac{1}{R_3} \right] \rd^3 S \label{eq:W2K}\\
W_3(K) &= \dfrac{1}{4} \oint \dfrac{1}{3} \left[\dfrac{1}{R_1 R_2} + \dfrac{1}{R_1 R_3} + \dfrac{1}{R_2 R_3} \right] \rd^3 S = \dfrac{1}{4}\tilde{R}_\rP \beta_4 = \tilde{R}_\rP \kappa_4 \label{eq:W3K}\\
W_4(K) &= \dfrac{1}{4} \oint \dfrac{1}{R_1 R_2 R_3} \,\rd^3 S = \dfrac{1}{4} \beta_4 = \kappa_4 
\end{align}
\end{subequations}
for convex solids with continuous surface curvature in $\mathbb{R}^4$. 

For arbitrary Euclidean spaces $\mathbb{R}^D$, this can, in general, be written as 
\begin{align}
\label{eq:W_i_continuous}
W_i(K) = \left\{ \begin{array}{lcl}
V_\rP(K) & :  &i=0 \vspace*{2mm}\\
\dfrac{1}{D} \bigointsss \,\rd^{D-1} S & :  &i=1 \vspace*{2mm}\\
\dfrac{1}{D} \bigointsss \left[\begin{pmatrix}D-1 \\ i-1 \end{pmatrix} \right]^{-1} \left[\sum\limits_{j=1}^{D-i+1} a_{D-i+1,j}\right] \, \rd^{D-1} S & : &2 \leq i \leq D
\end{array} \right. 
\end{align}
with 
\begin{align}
a_{ij} = \left\{ \begin{array}{lcl}
\dfrac{1}{R_j} & : & i=D-1 \vspace*{3mm}\\
\dfrac{1}{R_j}\sum\limits_{k=j+1}^{D-1} \dfrac{1}{R_k}\quad & : & i=D-2 \vspace*{3mm}\\ 
\dfrac{1}{R_j} \sum\limits_{k=j+1}^{i+1} a_{i+1,k} \quad & : & i<D-2
\end{array}\right. 
\end{align}
and 
\begin{align}
\rd^D S = \left[\,\prod\limits_{i=1}^{D} R_i \right] \left[ \prod\limits_{j=0}^{D-1} \sin^j\left(\phi_j\right) \,\rd \phi_j \right]
\end{align}
for the surface element. 

For convex solids with discontinuities of their surface curvature, additional contributions to quermassintegrals $W_i(K)$ need to be considered:
Additional contributions to $W_1(K)$ exist for uniaxial solids of revolution in $\mathbb{R}^4$ if $r(z_{\rm min})\neq 0$ or $r(z_{\rm max})\neq 0$. One- and two-dimensional singularities of the surface curvature where in the case of uniaxial solids of revolution in $\mathbb{R}^4$ only two-dimensional singularities can exist 
cause additional contributions to $W_2(K)$ and $W_3(K)$. Zero-dimensional singularities which can for uniaxial solids of revolution only exist at $r(z)=0$ do not cause additional contributions to $W_i$ for $0\le i\le D$.   
In the following, we provide quermassintegrals for various uniaxial solids of revolution in $\mathbb{R}^4$ including shapes with discontinuities in their surface curvature.


\section{Specific geometric measures for selected solids of revolution}\label{eq_sec:measures_4d_specific}

In the following subsections, we provide the geometric measures volume $V_\rP$, total surface area $S_\rP$, second quermassintegral $W_2$,
and mean radius of curvature $\tilde{R}_\rP$ for selected convex geometries. 
Uniaxial solids of revolution in $\mathbb{R}^4$ are defined by their meridian curve $r(z)$.
Based on the meridian curve $r(z)$, specific geometric measures of these geometries and the related, 
second virial coefficients are analytically accessible employing Eqs. \eqref{eq:W}, \eqref{eq:Vex_K}, and \eqref{eq:B2Vex}.

\subsection{Solids of revolution with continuous surface curvature}\label{sec:measures_without_sing}

\subsubsection{Hypersphere}

The simplest solid of revolution in the four-dimensional Euclidean space $\mathbb{R}^4$ is the isotropic hypersphere. 
Placing the center of a hypersphere in the coordinate system's origin, the meridian curve $r(z)$ of the hypersphere reads as 
\begin{align}\label{eq:Hsp_rz}
r(z) = \left(r_0^2 -z^2 \right)^{1/2}
\end{align}
with $r_0$ being the radius of the hypersphere and $z$ the coordinate on the axis of revolution resulting from hypercylindrical coordinates [Eq. \eqref{eq:hcyl_coordinates}]. 
Using Eq. \eqref{eq:VP}, the hypervolume of a hypersphere can simply be written as 
\begin{align}\label{eq:hsp_V}
V_\rP &= \dfrac{4}{3} \pi \int\limits_{-r_0}^{r_0}  \left(r_0^2-z^2 \right)^{3/2} \, \rd z =  \dfrac{1}{2} \pi^2 r_0^4\,
\end{align}
which is identical to $V_\rP = \kappa_4 r_0^4$ using Eq. \eqref{eq:kappa}. 
The total area $S_\rP$ of a surface without singularities is simply the lateral surface area $S_\rP^\prime$. 
The lateral surface area $S_\rP^\prime$ of a hypersphere can be calculated with Eqs. \eqref{eq:MP} and  \eqref{eq:Hsp_rz} 
leading to 
\begin{align}\label{eq:hsp_S}
S_\rP &= S_\rP^\prime = 4 \pi r_0  \int\limits_{-r_0}^{r_0} \left(r_0^2-z^2 \right)^{1/2} \, \rd z =2 \pi^2 r_0^3
\end{align}
as the expected results, identical to $S_\rP = \beta_4 r_0^3$ using Eq. \eqref{eq:beta_D} with $D=4$. 

For the second quermassintegral $W_2$ and the mean radius of curvature $\tilde{R}_\rP$ of a four-dimensional, convex solids of revolution the principal radii of curvature $R_1$, $R_2$, and $R_3$ are required [Eqs. \eqref{eq:W2_an} and \eqref{eq:RP_an}]. 
These can be calculated in dependence on the meridian curve as 
\begin{align}\label{eq:hsp_R1R2R3}
R_1 = R_2 = R_3 = r_0
\end{align}
for a hypersphere employing Eqs. \eqref{eq:R1R2} and \eqref{eq:R3}.  
Its principal radii of curvature are simply the radius of the hypersphere $r_0$.  
For the second quermassintegral $W_2$, Eq. \eqref{eq:W2_an} leads to 
\begin{align}\label{eq:hsp_W2}
W_2 &= \pi \int\limits_{-r_0}^{r_0} \left(r_0^2-z^2 \right)^{1/2}\,\rd z = \dfrac{1}{2} \pi^2 r_0^2
\end{align}
and the mean radius of curvature from Eq. \eqref{eq:RP_an} with 
\begin{align}\label{eq:hsp_R}
\tilde{R}_\rP &= \dfrac{2}{\pi r_0} \int\limits_{-r_0}^{r_0} \left(r_0^2 - z^2 \right)^{1/2} \, \rd z = r_0
\end{align}
is again simply the radius of the hypersphere $r_0$: The radius of curvature of a hypersphere is just the radius of the hypersphere. 

Using the relations in Eq. \eqref{eq:W}, the quermassintegrals $W_i$ of a four-dimensional hypersphere can be written as
\begin{align}
W_i = \dfrac{1}{2} \pi^2 r_0^{4-i},
\end{align}
agreeing with the general solution for $D$-dimensional hyperspheres 
\begin{align}
W_i = \kappa_D r_0^{D-i}
\end{align}
from the literature \cite{Santalo2004,Torquato2022}.

\subsubsection{Hyperellipsoid}

A uniaxial hyperellipsoid of revolution is the affine transformation of a hypersphere with radius $r_\req$ in $z$-direction by a factor $\nu$. 
The meridian curve $r(z)$ of this hyperellipsoid of revolution with the center in the origin of the coordinate system reads as 
\begin{align}\label{eq:Hell_rz}
r(z) = \left[r_\req^2 - \left(\dfrac{z}{\nu} \right)^2 \right]^{1/2}\,.
\end{align}
For the aspect ratio $\nu =1$, a hypersphere results. Hyperellipsoids of revolution with $\nu >1$ are prolate and with $\nu < 1$ are oblate solids of revolution.
The two-dimensional sections of such geometries are shown in Fig. \ref{fig:hell}. 

\begin{figure}[ht]
\centering
\includegraphics[width=0.2\textwidth,valign=c]{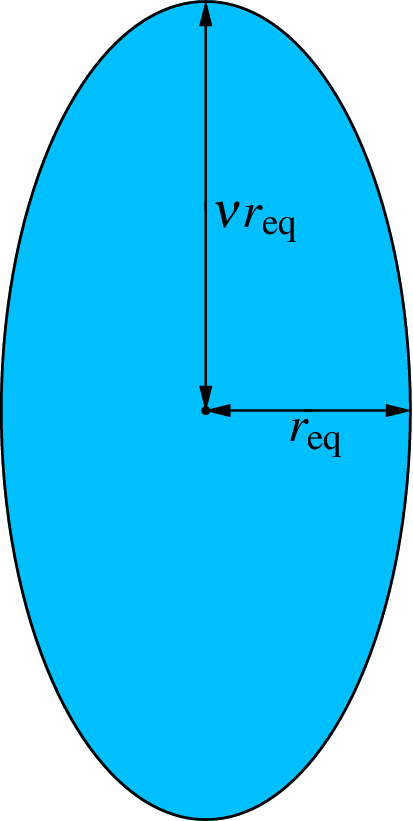} \qquad \qquad
\includegraphics[width=0.4\textwidth,valign=c]{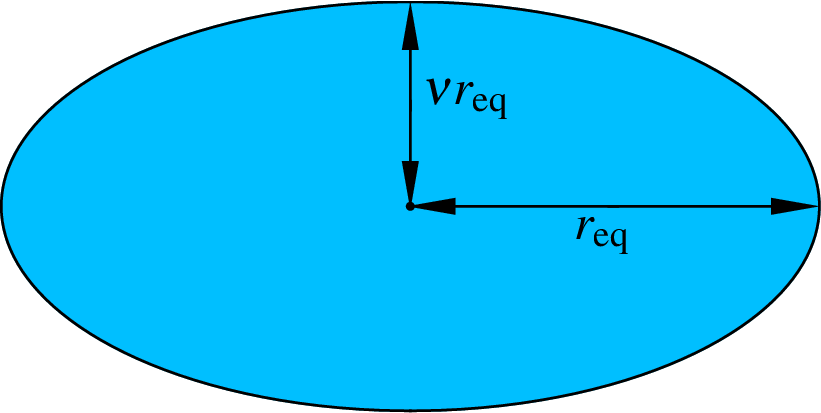}
\caption{Two-dimensional sections of prolate and oblate uniaxial hyperellipsoids of revolution with radius $r_\req$, aspect ratio $\nu=2$ (lhs) and aspect ratio $\nu=1/2$ (rhs).}\label{fig:hell}
\end{figure}

For the volume of a four-dimensional hyperellipsoid of revolution, Eq. \eqref{eq:VP} leads to 
\begin{align}\label{eq:Hell_VP}
V_\rP &= \dfrac{4}{3} \pi \int\limits_{- \nu r_\req}^{\nu r_\req} \left(r_\req^2 - \dfrac{z^2}{\nu^2} \right)^{3/2} \,\rd z = \dfrac{1}{2} \pi^2 \nu r_\req^4
\end{align}
for prolate and oblate hyperellipsoids. 
The total surface area $S_\rP$ for this geometry is the lateral surface area $S_\rP^\prime$ which can with Eq. \eqref{eq:MP} be written as  
\begin{align}
S_\rP &= S_\rP^\prime = 4 \pi  \int\limits_{- \nu r_\req}^{\nu r_\req} \left(r_\req^2 - \dfrac{z^2}{\nu^2}  \right) \left[\dfrac{\nu^4 r_\req^2 + z^2 \left(1-\nu^2 \right)}{\nu^2 \left(\nu^2 r_\req^2 -z^2 \right)} \right]^{1/2} \,\rd z \nonumber \\
&= 2 \pi^2 \nu^2 r_\req^3  \,_2\mathcal{F}_1\left(\dfrac{5}{2}, \dfrac{1}{2}, 2; 1-\nu^2 \right)
\end{align} 
where $_2\mathcal{F}_1$ is the Gauss hypergeometric function defined as 
\begin{align}\label{eq:F_P}
_2\mathcal{F}_1 \left(a,b,c;z \right) 
&= \dfrac{\Gamma(c)}{\Gamma(b)\Gamma(c-b)} \int\limits_0^1 t^{b-1} \left(1-t \right)^{c-b-1} \left(1-zt \right)^{-a} \rd t \nonumber \\
&= \sum\limits_{n=0}^\infty \dfrac{\left(a \right)_n \left(b \right)_n}{\left(c \right)_n} \dfrac{z^n}{n!} 
\end{align}
and 
\begin{align}
\left(q \right)_n = \prod_{k=0}^{n-1} \left(q+k \right) = \dfrac{\Gamma(q+n)}{\Gamma(q)}
\end{align}
the Pochhammer symbol. 

Using linear transformation and identities of the hypergeometric function \cite{NIST:DLMF}, the total surface area can also be denoted as 
\begin{align}\label{eq:SP_hell_EK}
S_\rP = \left\{\begin{array}{ccl}
\dfrac{8 \pi \nu r_\req^3}{3 (\nu^2-1)} \left[\left(2\nu^2-1 \right) \mathcal{E}\left(\dfrac{\sqrt{\nu^2-1}}{\nu} \right)- \mathcal{K} \left(\dfrac{\sqrt{\nu^2-1}}{\nu} \right) \right] 
 & : & \nu > 1 \vspace{3mm} \\
\dfrac{8 \pi \nu r_\req^3}{3 (1-\nu^2)} \left[\left(\dfrac{1-2\nu^2}{\nu} \right) \mathcal{E}\left(\sqrt{1-\nu^2} \right)+\nu \mathcal{K} \left(\sqrt{1-\nu^2} \right) \right] & : & \nu < 1
\end{array} \right. 
\end{align} 
with the complete elliptic integral of the first kind 
\begin{align}\label{eq:K_F}
\mathcal{K}(k) = \dfrac{\pi}{2} \,_2\mathcal{F}_1\left(\dfrac{1}{2}, \dfrac{1}{2},1;k^2 \right)
\end{align}
and the complete elliptic integral of the second kind
\begin{align}\label{eq:E_F}
\mathcal{E}(k) = \dfrac{\pi}{2} \,_2\mathcal{F}_1\left(-\dfrac{1}{2}, \dfrac{1}{2},1;k^2 \right)\,.
\end{align}
For the limiting case of a hypersphere with $\nu \to 1$, for both branches 
\begin{align}
\lim\limits_{\nu \to 1^+} S_\rP(\nu) = \lim\limits_{\nu \to 1^-} S_\rP(\nu) = 2 \pi^2 r_\req^3
\end{align}
results which agrees with the expression in Eq. \eqref{eq:hsp_S}. 
For infinitely thin hyperellipsoids of revolution 
\begin{align}
\lim\limits_{\nu \to 0} S_\rP(\nu) = \dfrac{8}{3} \pi r_\req^3
\end{align}
results. 

For the second quermassintegral $W_2$ and the mean radius of curvature $\tilde{R}_\rP$, again, the principal radii of curvature $R_1(z)$, $R_2(z)$, and $R_3(z)$ need to be calculated. 
Using Eqs. \eqref{eq:R1R2} and \eqref{eq:R3} 
\begin{align}
R_1(z) = R_2(z) = \dfrac{1}{\nu^2} \left[\nu^4 r_\req^2 + z^2\left(1-\nu^2 \right) \right]^{1/2}
\end{align}
and
\begin{align}
R_3(z) = \dfrac{1}{\nu^4 r_\req^2} \left[\nu^4 r_\req^2 + z^2\left(1-\nu^2 \right) \right]^{3/2}
\end{align}
are obtained. 
Then the second quermassintegral is, using Eq. \eqref{eq:W2_an}, 
\begin{align}
W_2 &= \dfrac{\pi}{3 \nu^3} \int\limits_{-\nu r_\req}^{\nu r_\req}  \left(\nu^2 r_\req^2 - z^2 \right)^{1/2} \left[\dfrac{\nu^4 r_\req^2}{\nu^4 r_\req^2 + z^2\left(1-\nu^2 \right)} + 2\nu^2 \right] \,\rd z \nonumber \\
&=\dfrac{1}{3} \pi^2 r_\req^2 \left(\dfrac{1-\nu^3}{1-\nu^2} \right)\,,
\end{align}
both for prolate and oblate hyperellipsoids of revolution. In the limit of a four-dimensional hypersphere 
\begin{align}\label{eq:hell_W2_lim1}
\lim\limits_{\nu\to 1^+} W_2\left( \nu \right) = \lim\limits_{\nu\to 1^-} W_2\left( \nu \right) = \dfrac{1}{2} \pi^2 r_\req^2
\end{align}
is obtained while 
\begin{align}\label{eq:hell_W2_lim0}
\lim\limits_{\nu\to 0} W_2\left( \nu \right) = \dfrac{1}{3} \pi^2 r_\req^2
\end{align}
results for infinitely thin hyperellipsoids of revolution.  Eq. \eqref{eq:hell_W2_lim1} is identical to the expression in Eq. \eqref{eq:hsp_W2}. 

Finally, the mean radius of curvature $\tilde{R}_\rP$ for hyperellipsoids of revolution can be written as 
\begin{align}\label{eq:RP_hell_F}
\tilde{R}_\rP &= \dfrac{2}{3 \pi \nu^3 }  \int\limits_{-\nu r_\req}^{\nu r_\req}  \left(\nu^2 r_\req^2 -z^2 \right)^{1/2} \left\{ \dfrac{2 \nu^6 r_\req^2 + \nu^4 \left[\nu^4 r_\req^2 + z^2 \left(1-\nu^2 \right) \right]}{\left[\nu^4 r_\req^2 + z^2 \left(1-\nu^2 \right) \right]^{3/2}} \right\} \,\rd z \nonumber \\
&= \nu^4 r_\req \,_2\mathcal{F}_1 \left(\dfrac{5}{2}, \dfrac{3}{2}, 2; 1-\nu^2 \right)
\end{align}
using Eq. \eqref{eq:RP_an}. 
Using complete elliptic integrals [Eqs. \eqref{eq:K_F} and \eqref{eq:E_F}], it can alternatively be denoted as
\begin{align}\label{eq:RP_hell_EK}
\tilde{R}_\rP=\left\{\begin{array}{lcc}
\dfrac{4}{3 \pi} \dfrac{\nu r_\req}{\nu^2-1} \left[\left(\nu^2-2 \right)\mathcal{E}\left(\dfrac{\sqrt{\nu^2-1}}{\nu}\right) +\mathcal{K}\left(\dfrac{\sqrt{\nu^2-1}}{\nu}\right) \right]  & : & \nu > 1 \vspace{3mm} \\
\dfrac{4}{3 \pi} \dfrac{r_\req}{1-\nu^2} \left[\left(2-\nu^2 \right)\mathcal{E}\left(\sqrt{1-\nu^2}\right) -\nu^2 \mathcal{K}\left(\sqrt{1-\nu^2}\right) \right]  & : & \nu < 1 \\
\end{array} \right.
\end{align}
for prolate and oblate hyperellipsoids of revolution. In the limit of a hypersphere,
\begin{align}
\lim\limits_{\nu\to 1^+} \tilde{R}_\rP\left( \nu \right) = \lim\limits_{\nu\to 1^-} \tilde{R}_\rP\left( \nu \right) = r_\req
\end{align}
results [Eq. \eqref{eq:hsp_R}] while for infinitely thin hyperellipsoids 
\begin{align}\label{eq:hell_RP_lim0}
\lim\limits_{\nu\to 0}\tilde{R}_\rP\left( \nu \right) = \dfrac{8}{3 \pi} r_\req
\end{align}
is obtained.

Using Eqs. \eqref{eq:W}, the general relation
\begin{align}
W_i &= \dfrac{1}{2} \pi^2 \nu^{i+1} r_\req^{4-i} \,_2\mathcal{F}_1\left(\dfrac{5}{2},\dfrac{i}{2}, 2; 1-\nu^2 \right)\,,
\end{align}
with
\begin{align}
_2\mathcal{F}_1\left(\dfrac{5}{2},0, 2; 1-\nu^2 \right) = 1
\end{align}
and 
\begin{align}
_2\mathcal{F}_1\left(\dfrac{5}{2},1, 2; 1-\nu^2 \right) = \dfrac{2}{3 \nu^3} \,\dfrac{1-\nu^3}{1-\nu^2}
\end{align}
is obtained in $\mathbb{R}^4$ \cite{NIST:DLMF}. 
This agrees with the general solution for oblate, $D$-dimensional ellipsoids of revolution 
\begin{align}\label{eq:hell_Wi}
W_i = \kappa_D \nu^{i+1} r_\req^{D-i} \,_2\mathcal{F}_1\left(\dfrac{D+1}{2},\dfrac{i}{2}, \dfrac{D}{2}; 1-\nu^2 \right)
\end{align} 
provided in the literature \cite{Santalo2004,Torquato2022}. 
Using the Pfaff transformation \cite{NIST:DLMF} with
\begin{align}
_2\mathcal{F}_1\left(a,b,c; z \right) = \left(1-z \right)^{-b}\,_2\mathcal{F}_1\left(c-a,b,c; \dfrac{z}{z-1} \right)\,,
\end{align}
the general quermassintegrals can also be reformulated as 
\begin{align}
W_i = \kappa_D \nu r_\req^{D-i} \, _2\mathcal{F}_1\left(-\dfrac{1}{2},\dfrac{i}{2}, \dfrac{D}{2}; 1-\dfrac{1}{\nu^2} \right)
\end{align}
for prolate, $D$-dimensional ellipsoids of revolution.

\subsubsection{Hyperspherocylinder} 

A $D$-dimensional spherocylinder is the union of a $D$-dimensional cylinder with height $h=2 \left(\nu-1 \right) r_\req$ and radius $r_\req$, capped by two $D$-dimensional hemispheres with radius $r_\req$. 
A four-dimensional hyperspherocylinder is a prolate solid of revolution without singularities on its surface curvature where the meridian curve $r(z)$ can be written as
\begin{align}\label{eq:Hsc_rz}
r(z)=\left\{\begin{array}{ccl}
\left\{r_\req^2-\left[z - \left(\nu -1\right) r_\req \right]^2 \right\}^{1/2} & : & \phantom{-} \left(\nu-1\right) r_\req < z \leq \nu r_\req \vspace{3mm}\\
r_\req & : & -\left( \nu-1\right) r_\req \leq z \leq \left(\nu-1 \right) r_\req \vspace{3mm} \\
\left\{r_\req^2-\left[z + \left(\nu -1\right) r_\req \right]^2 \right\}^{1/2} & : & \phantom{(-1)} -\nu r_\req \leq z <  -\left( \nu-1\right) r_\req \\
\end{array} \right. 
\end{align}
for aspect ratios $\nu \geq 1$ with the center of the hyperspherocylinder being in the origin of the coordinate system.
In the limit $\nu\to 1^+$, a hypersphere results. Opposite, in the limit $\nu\to\infty$, a hard needle in $\mathbb{R}^4$ results.  A two-dimensional section of this geometry is depicted in Fig. \ref{fig:hsc}. 
\begin{figure}[ht]
\centering
\includegraphics[width=0.2\textwidth]{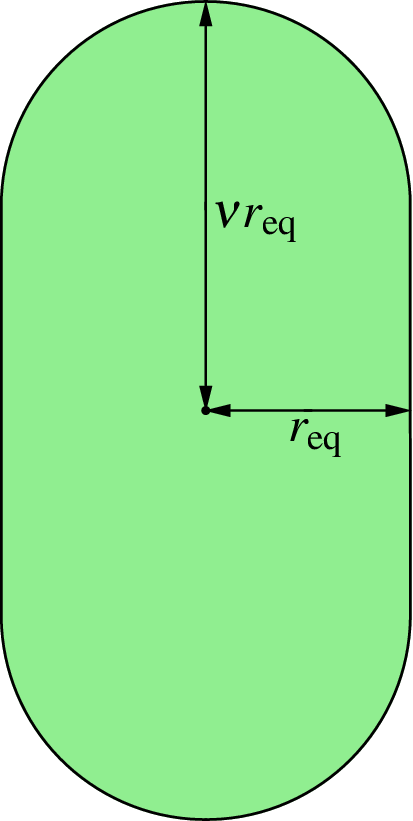}
\caption{Two-dimensional section of a hyperspherocylinder with radius $r_\req$ and aspect ratio $\nu=2$.}\label{fig:hsc}
\end{figure} 

With Eq. \eqref{eq:VP} and its meridian curve [Eq. \eqref{eq:Hsc_rz}], the volume of a hyperspherocylinder reads as
\begin{align}\label{eq:Hsc_VP}
V_\rP &= \left[\dfrac{1}{2} \pi^2 + \dfrac{8}{3} \left(\nu-1\right) \pi \right] r_\req^4\,.
\end{align}
The total surface area $S_\rP$ of a hyperspherocylinder, again, is the lateral surface area $S_\rP^\prime$ 
\begin{align}\label{eq:Hsc_SP}
S_\rP = S_\rP^\prime = \left[2 \pi^2 + 8 \left(\nu-1 \right) \pi \right] r_\req^3
\end{align}
[Eq. \eqref{eq:MP}]. 

A hyperspherocylinder's principal radii of curvature [Eqs. \eqref{eq:R1R2} and \eqref{eq:R3} using Eq. \eqref{eq:Hsc_rz}] are 
\begin{align}\label{eq:Hsc_R1R2}
R_1 = R_2 = r_\req
\end{align}
and 
\begin{align}\label{eq:Hsc_R3}
\vert R_3(z) \vert = \left\{\begin{array}{rcl} r_\req & : & \phantom{-}\left(\nu-1\right) r_\req < z \leq \nu r_\req \vspace{3mm}  \\
\infty & : & -\left( \nu-1\right) r_\req \leq z \leq \left(\nu-1 \right) r_\req \vspace{3mm} \\
r_\req & : & \phantom{(-1)}-\nu r_\req \leq z <  -\left( \nu-1\right) r_\req \\
\end{array} \right.
\end{align} 
for the cylindrical and hemispherical parts. 
Therefore, the second quermassintegral $W_2$ reads as 
\begin{align}\label{eq:Hsc_W2}
W_2 = \left[ \dfrac{1}{2} \pi^2 + \dfrac{4}{3} \left( \nu-1 \right) \pi  \right] r_\req^2
\end{align} 
using Eq. \eqref{eq:W2_an} and, using Eq. \eqref{eq:RP_an}, the mean radius of curvature as 
\begin{align}\label{eq:Hsc_RP}
\tilde{R}_\rP = \left[1+ \dfrac{4}{3 \pi} \left(\nu-1 \right) \right] r_\req\,.
\end{align}
Eqs. \eqref{eq:Hsc_VP}, \eqref{eq:Hsc_SP}, \eqref{eq:Hsc_W2}, and \eqref{eq:Hsc_RP} agree with the literature \cite{Kulossa2022} and, using Eqs. \eqref{eq:W}, 
\begin{align}
W_i= \left[\dfrac{1}{2} \pi^2 + \dfrac{8}{3} \left(\nu-1 \right) \dfrac{4-i}{4} \pi \right] r_\req^{4-i}
\end{align}
is obtained as general expression for the quermassintegrals of a four-dimensional hyperspherocylinder which is in accordance to the relation
\begin{align}\label{eq:Hsc_Wi_D}
W_i = \left[ \kappa_D + 2 \left(\nu-1 \right)\, \dfrac{D-i}{D}\, \kappa_{D-1} \right] r_\req^{D-i}
\end{align}
for $D$-dimensional spherocylinders as provided in the literature \cite{Santalo2004,Torquato2022}.


\subsection{Solids of revolution with singularities on their surface curvature} 

For convex solids of revolution with singularities of their surface curvature, the derivative of the meridian curve $\dot{r}(z)$ is discontinuous. 
In this case, in addition to the contributions from the piecewise integration over continuous parts  [Eqs. \eqref{eq:VP}, \eqref{eq:MP}, \eqref{eq:W2_an}, and \eqref{eq:RP_an}],
the singularities' contributions have to be considered. As in Sec. \ref{sec:measures_without_sing}, we describe selected geometries and calculate their 
geometric measures $V_\rP$, $S_\rP$, $W_2$, and $\tilde{R}_\rP$ in the following.

\subsubsection{Spherical hyperplates}

The spherical hyperplate as an infinitely thin hyperellipsoid is the limit of a hyperellipsoid with aspect ratio $\nu \to 0$ and radius $r_0$. 
It is a $(D-1)$-dimensional sphere with radius $r_0$ in $\mathbb{R}^D$. 
While in $\mathbb{R}^3$ a hard disk results, in the four-dimensional space a three-dimensional sphere  with a four-dimensional (hyper-)volume  
\begin{align}
V_\rP = 0
\end{align}
is obtained as analogue to the volumeless disk in $\mathbb{R}^3$. 
This can also be seen from Eq. \eqref{eq:Hell_VP} with a limiting aspect ratio of $\nu \to 0$. 
Despite for this geometry the $D$-dimensional volume $V_\rP$ vanishes, the further geometric measures are finite. 
Again, exemplarily the limiting case of a hyperellipsoid with $\nu \to 0$ can be used to obtain
\begin{align}\label{eq:Sph_plate_SP}
S_\rP = \dfrac{8}{3} \pi r_0^3 
\end{align}
for the total surface area of a four-dimensional spherical hyperplate. 
Note, that analogously to a disk in the three-dimensional space the hyperplate has a top and a bottom cell. 
Therefore, the total surface area $S_\rP$ of a $D$-dimensional spherical hyperplate is two times the volume of a $(D-1)$-dimensional sphere
and reads as 
\begin{align}
S_\rP = 2 \kappa_{D-1} r_0^{D-1}
\end{align} 
in Euclidean spaces $\mathbb{R}^D$. The second quermassintegral $W_2$ for a spherical hyperplate in $\mathbb{R}^4$ is
\begin{align}\label{eq:Sph_plate_W2}
W_2 = \dfrac{1}{3} \pi^2 r_0^2
\end{align}
[Eq. \eqref{eq:hell_W2_lim0}] and its mean radius of curvature
\begin{align}\label{eq:Sph_plate_RP}
\tilde{R}_\rP = \dfrac{8}{3 \pi} r_0
\end{align}
results from Eq. \eqref{eq:hell_RP_lim0}. 

Using Eqs. \eqref{eq:W}, the obtained results agree with the general formulation of quermassintegrals $W_i$ for $(D-1)$-dimensional spheres in $\mathbb{R}^D$
\begin{align}
W_i = \left\{\begin{array}{lcc}
0 & : & i=0 \vspace*{2mm} \\
\dfrac{\kappa_{D-1}}{\kappa_{i-1}} \dfrac{i}{D} \kappa_i r_0^{D-i} & : & i \geq 1\\
\end{array} \right.
\end{align}
as provided in the literature \cite{Torquato2022}.

\subsubsection{Hypercylinder}\label{sec:hcyl}

The meridian curve of a hypercylinder with height $h=2 \nu r_\req$ and radius $r_\req$ simply reads as 
\begin{align}\label{eq:Hc_rz}
r(z) = r_\req
\end{align} 
with vanishing derivatives $\dot{r}(z)=0$ and $\ddot{r}(z)=0$. Hence, only $R_1=R_2=r_\req$ are finite while $R_3$ is infinite with vanishing principal curvature $1/R_3$.
With $\nu=h/(2r_\req)$ as shown in Fig. \ref{fig:hc}, the volume of a four-dimensional hypercylinder is 
\begin{align}\label{eq:Hcyl_VP}
V_\rP 
&= \dfrac{8}{3} \pi  \nu r_\req^4
\end{align}
according to Eq. \eqref{eq:VP}. 
Its total surface area $S_\rP$ is the sum of the lateral surface area 
\begin{align}
S_\rP^\prime 
&= 8 \pi \nu r_\req^3
\end{align}
[Eq. \eqref{eq:MP}] and the two contributions $S_\rP^{\prime \prime}$ at $z=\pm \nu r_\req$ with
\begin{align}
S_\rP^{\prime \prime} = \dfrac{4}{3} \pi r_\req^3\,.
\end{align}
Hence, the total surface area of a four-dimensional hypercylinder reads as
\begin{align}\label{eq:Hcyl_SP}
S_\rP &= S_\rP^\prime + 2 S_\rP^{\prime \prime} = \dfrac{8}{3} \pi \left(3\nu + 1 \right) r_\req^3\,.
\end{align}
\begin{figure}[ht]
\centering
\includegraphics[width=0.2\textwidth]{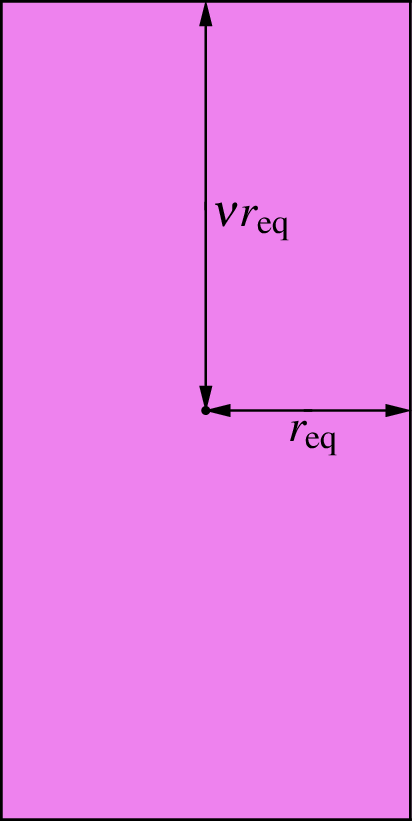}
\caption{Two-dimensional section of a hypercylinder with radius $r_\req$ and aspect ratio $\nu=2$.}\label{fig:hc}
\end{figure}

The contribution of the continuous part $W_2^\prime$ to the second quermassintegral $W_2$  is 
\begin{align}
W_2^\prime 
&= \dfrac{4}{3} \pi \nu r_\req^2
\end{align}
[Eq. \eqref{eq:W2_an}]. 
For both, top and bottom contribution of the singularity as for the spherical hyperplate 
\begin{align}
W_2^{\prime \prime} = \dfrac{1}{6} \pi^2 r_\req^2
\end{align}
results [Eq. \eqref{eq:Sph_plate_W2}]. Hence,
\begin{align}\label{eq:Hcyl_W2}
W_2 &= W_2^\prime + 2 W_2^{\prime \prime} = \dfrac{1}{3} \pi \left(4 \nu + \pi \right) r_\req^2
\end{align}
is obtained for the second quermassintegral $W_2$ of a four-dimensional hypercylinder. 

The contribution of the continuous cylindrical part $\tilde{R}_\rP^\prime$ to the mean radius of curvature $\tilde{R}_\rP$ is 
\begin{align}
\tilde{R}_\rP^\prime 
&= \dfrac{4}{3 \pi} \nu r_\req 
\end{align}
[Eq. \eqref{eq:RP_an}] and each contribution of the singularities at $z=\pm \nu r_\req$ reads as 
\begin{align}
\tilde{R}_\rP^{\prime \prime} = \dfrac{4}{3 \pi} r_\req
\end{align}
identical to the top and bottom contribution of the spherical hyperplate [Eq. \eqref{eq:Sph_plate_RP}]. In total,
\begin{align}\label{eq:Hcyl_RP}
\tilde{R}_\rP &= \tilde{R}_\rP^\prime + 2 \tilde{R}_\rP^{\prime \prime} = \dfrac{4}{3 \pi} \left(\nu +2 \right) r_\req
\end{align}
results for the mean radius of curvature $\tilde{R}_\rP$ of a four-dimensional hypercylinder.
Using Eqs. \eqref{eq:W}, the general expression for the quermassintegrals $W_i$ of this geometry 
\begin{align}
W_i = \left\{\begin{array}{lcc}
\dfrac{8}{3} \pi \nu r_\req^4 & : & i=0 \vspace*{2mm} \\
\dfrac{2}{3} \pi \left[\left(4-i \right) \nu + \dfrac{\beta_{i}}{2 \kappa_{i-1}}  \right] r_\req^{4-i} & : & i \geq 1\\
\end{array} \right.\,,
\end{align}
results in accordance to the general result for $D$-dimensional cylinders 
\begin{align}\label{eq:Hcyl_Wi_D}
W_i = \left\{\begin{array}{lcc}
2 \nu \kappa_{D-1}  r_\req^D & : & i=0 \vspace*{2mm} \\
\dfrac{2}{D} \kappa_{D-1} \left[\left(D-i \right) \nu + \dfrac{\beta_{i}}{2 \kappa_{i-1}}  \right] r_\req^{D-i} & : & i \geq 1\\
\end{array} \right.
\end{align}
provided in the literature \cite{Santalo2004,Torquato2022}. 


\subsubsection{Hyperspindle}

Analogously to a three-dimensional spindle \cite{Herold2017}, a hyperspindle with center in the origin of the coordinate system can be described by the meridian curve 
\begin{align}\label{eq:Hspindle_rz}
r(z) = \left(\dfrac{1- \nu^2}{2} \right) r_\req + \dfrac{1}{2} \left[\left(\nu^2 +1 \right)^2 r_\req^2 - 4z^2 \right]^{1/2}
\end{align}
as a prolate, four-dimensional, uniaxial solid of revolution with radius $r_\req$ and aspect ratio $\nu \geq 1$. 

\begin{figure}[ht]
\centering
\includegraphics[width=0.2\textwidth]{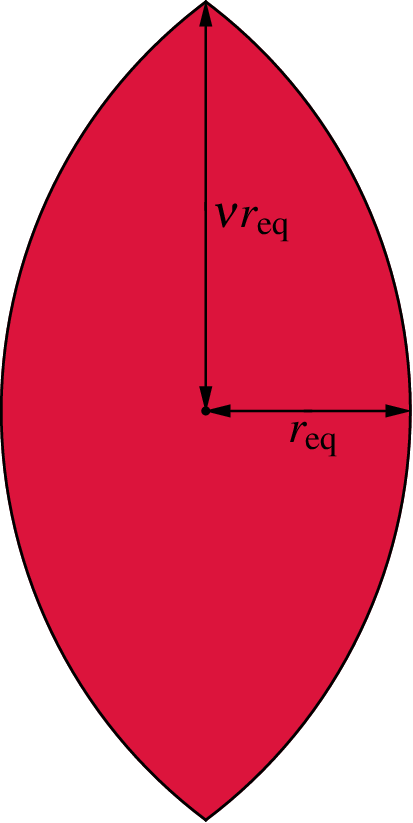}
\caption{Two-dimensional section of a hyperspindle with radius $r_\req$ and aspect ratio $\nu=2$.}\label{fig:hspindle}
\end{figure}

As shown by the two-dimensional section of this geometry in Fig. \ref{fig:hspindle}, a spindle has two singularities of curvature, located at $z=\pm \nu r_\req$ as an upper and lower apex. 
In the limit $\nu \to 1$, the hyperspindle approaches a hypersphere and the contributions of the singularities on the hypersurface vanish. 
The volume of a four-dimensional hyperspindle using Eq. \eqref{eq:VP} reads as 
\begin{align}
V_\rP 
&= \dfrac{5}{16} \pi r_\req^4 \left[\left(\nu^4 - \dfrac{6}{5}\nu^2+1 \right) \left(\nu^2+1 \right)^2 \arcsin \left(\dfrac{2\nu}{\nu^2 +1} \right) - \right. \nonumber \\
& \qquad \left. 2 \nu^7 - \dfrac{14}{15} \nu^5 + \dfrac{14}{15} \nu^3 + 2 \nu \right] 
\end{align}
in dependence on the radius $r_\req$ and the aspect ratio $\nu$. 
The total surface area $S_\rP$ can be written as 
\begin{align}
S_\rP = \dfrac{1}{2} \pi r_\req^3 \left[\left(3 \nu^6 + \nu^4 + \nu^2 +3 \right) \arcsin \left(\dfrac{2 \nu}{\nu^2 +1} \right) -6\nu^5 +6\nu \right]  
\end{align}
using Eq. \eqref{eq:MP} with Eq. \eqref{eq:Hspindle_rz} since at the apices $r(z_\rmin) = r(z_\rmax) =0$, the zero-dimensional singularities do not contribute to $W_1$ and $S_\rP$. 

To calculate the contribution of the apical singularities to $W_2$ and $\tilde{R}_\rP$, in the interval $\nu r_\req - \epsilon \leq \vert z \vert \leq \nu r_\req$ the meridian curve $r(z)$ can continuously be replaced by the meridian curve $r^\sharp (z)$ of a hypersphere 
\begin{align}
r^\sharp (z) = \left[r_0^2 - \left(z- \xi \right)^2 \right]^{1/2}
\end{align}
with 
\begin{align}
r \left( \nu r_\req - \epsilon \right) = r^\sharp \left( \nu r_\req - \epsilon \right)
\end{align}
and 
\begin{align}
\dot{r} \left( \nu r_\req - \epsilon \right) = \dot{r}^\sharp \left( \nu r_\req - \epsilon \right)
\end{align}
as visualized in Fig. \ref{fig:spindle_approx}. 
Here 
\begin{align}
\xi \left(r_\req, \nu, \epsilon \right) = r_\req \left(\nu^2-1 \right) \left(\nu r_\req - \epsilon \right) \left[\left(\nu^2 +1 \right)^2 r_\req^2 - 4 \left(\nu r_\req - \epsilon \right)^2 \right]^{-1/2}
\end{align} 
is the $z$-position of the hypersphere's center  and 
\begin{align}
r_0 \left(r_\req, \nu, \epsilon \right) = \dfrac{\nu^2+1}{2} r_\req \left\{1+ \dfrac{\left(1-\nu^2 \right) r_\req}{ \left[\left(\nu^2+1 \right)^2 r_\req^2 - 4 \left(\nu r_\req - \epsilon \right)^2 \right]^{1/2}} \right\}
\end{align}
its radius. 

\begin{figure}[ht]
\centering
\includegraphics[width=0.5\textwidth, valign=c]{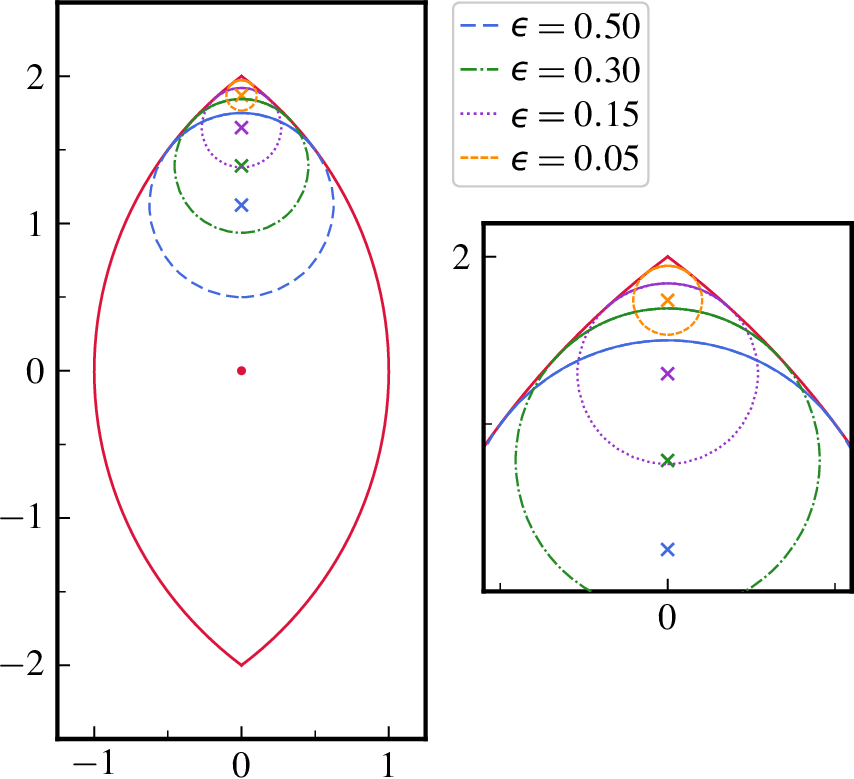} \qquad \quad 
\includegraphics[width=0.35\textwidth, valign=c]{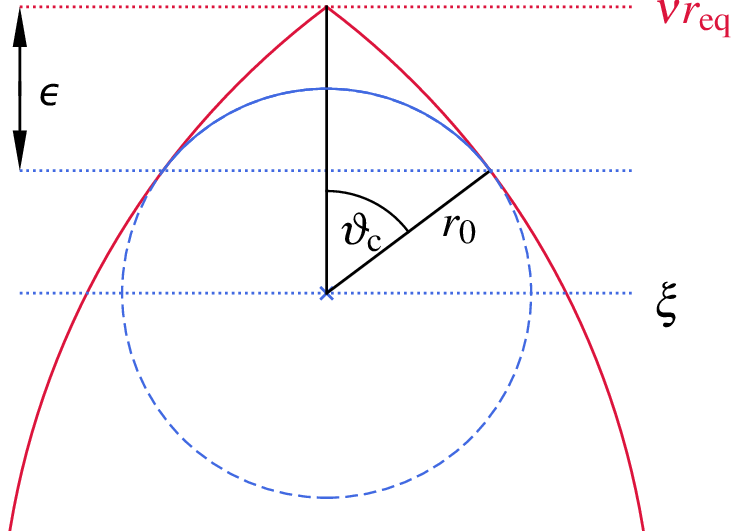}
\caption{Continuous completion of a hyperspindle with radius $r_\req=1$ and aspect ratio $\nu=2$ by hypersphere segments with radius $r_0$ and the vicinity to the apex $\epsilon$. The crosses indicate the centers of the hyperspheres with $z=\xi$.}\label{fig:spindle_approx}
\end{figure}

In the region continued by a hypersphere, the principal radii of curvature are 
\begin{align}
R_1=R_2=R_3=r_0
\end{align}
[Eqs. \eqref{eq:R1R2} and \eqref{eq:R3}]. 
Using Eq. \eqref{eq:SP_R1R2R3}, the second quermassintegral and the mean radius of curvature can, in the vicinity of the apices, be written as 
\begin{align}
W_2^{\prime \prime} (\epsilon) = \dfrac{1}{4} \int\limits_0^{\vartheta_\rc (\epsilon)} \int\limits_0^{\pi} \int\limits_0^{2 \pi} r_0^2 (\epsilon) \sin^2 \vartheta \sin \chi \,\rd \varphi \,\rd \chi \,\rd \vartheta
\end{align}
and
\begin{align}
\tilde{R}_\rP^{\prime \prime} (\epsilon) = \dfrac{1}{2 \pi^2} \int\limits_0^{\vartheta_\rc (\epsilon)} \int\limits_0^{\pi} \int\limits_0^{2 \pi} r_0 (\epsilon) \sin^2 \vartheta \sin \chi \,\rd \varphi \,\rd \chi \,\rd \vartheta
\end{align}
due to the continuous surface curvature of the hypersphere, using Eqs. \eqref{eq:Wi_K}. 
The critical angle $\vartheta_\rc (\epsilon)$ is 
\begin{align}
\vartheta_\rc (\epsilon) = \arccos \left[\dfrac{2 (\nu r_\req - \epsilon)}{(\nu^2+1)r_\req} \right]
\end{align}
(see Fig. \ref{fig:spindle_approx}) and performing the integration over $\varphi$ and $\chi$
\begin{align}
W_2^{\prime \prime} (\epsilon) = \pi \int\limits_0^{\vartheta_\rc (\epsilon)} r_0^2 (\epsilon) \sin^2 \vartheta \,\rd \vartheta
\end{align}
and 
\begin{align}
\tilde{R}_\rP^{\prime \prime} (\epsilon) = \dfrac{2}{\pi} \int\limits_0^{\vartheta_\rc (\epsilon)} r_0 (\epsilon) \sin^2 \vartheta \,\rd \vartheta
\end{align}
result. Hence, in the limit $\epsilon \to 0$  the contributions of the singularities to the second quermassintegral $W_2$
\begin{align}\label{eq:W2_apex}
W_2^{\prime \prime} = \lim\limits_{\epsilon \to 0} W_2^{\prime \prime} (\epsilon) = 0
\end{align}
and the mean radius of curvature $\tilde{R}_\rP$
\begin{align}\label{eq:RP_apex}
\tilde{R}_\rP^{\prime \prime} = \lim\limits_{\epsilon \to 0} \tilde{R}_\rP^{\prime \prime} (\epsilon) = 0
\end{align}
vanish. Thus, the second quermassintegral $W_2$ can be expressed as  
\begin{align}
W_2 &= W_2^\prime + 2 W_2^{\prime \prime}
	 = \dfrac{5}{12} \pi r_\req^2 \left[\left(\nu^4 + \dfrac{2}{5} \nu^2 +1 \right) \arcsin \left(\dfrac{2 \nu}{\nu^2 +1} \right) - 2 \nu^3 + 2 \nu \right] 
\end{align}
using Eqs. \eqref{eq:W2_an} and \eqref{eq:W2_apex} with the principal radii of curvature 
\begin{align}
R_1(z) = R_2(z) = \dfrac{\nu^2 +1}{2} r_\req + \dfrac{\left(1-\nu^2 \right) \left(\nu^2 +1 \right) r_\req^2}{2 \left[\left(\nu^2+1 \right)^2 r_\req^2 - 4 z^2 \right]^{1/2}} 
\end{align}
and 
\begin{align} 
R_3 = \dfrac{\nu^2 +1}{2} r_\req 
\end{align}
[Eqs. \eqref{eq:R1R2} and \eqref{eq:R3}]. For the mean radius of curvature, finally 
\begin{align}
\tilde{R}_\rP 	&= \tilde{R}_\rP^\prime + 2 \tilde{R}_\rP^{\prime \prime}
				= \dfrac{r_\req}{\pi} \left[\left(\nu^2 +1 \right) \arcsin\left(\dfrac{2 \nu}{\nu^2 +1}\right) - \dfrac{2}{3} \nu \,\dfrac{\nu^2-1}{\nu^2+1}  \right]
\end{align}
is obtained [Eqs. \eqref{eq:RP_an} and \eqref{eq:RP_apex}].

\subsubsection{Hyperlens}

A hyperlens is the section of two hyperspheres with identical radius $r_0$, separated by the distance $L$ with $0\le L < 2r_0$. 
It is an oblate solid of revolution with equatorial radius $r_\req$ and aspect ratio $\nu$ with $\nu \in [0,1]$.  
Using equatorial radius $r_\req$ and aspect ratio $\nu$, the radius of the intersecting hyperspheres can be written as 
\begin{align}
r_0 = \dfrac{\nu^2+1}{2\nu} r_\req
\end{align}
and for the distance of separation,  
\begin{align}
\dfrac{L}{2} = r_0 - \nu r_\req
\end{align}
results. 
Also, since a hyperlens consists of two equal hyperspherical caps, a critical angle $\theta_\rc$ with
\begin{align}\label{eq:hlens_thetac}
\cos\left(\dfrac{\theta_\rc}{2} \right) = \dfrac{1-\nu^2}{1+\nu^2}  
\end{align}
can be defined, identical to a lens in $\mathbb{R}^3$ \cite{Marienhagen2021} and a planar lens in $\mathbb{R}^2$ \cite{Kulossa2023b}. 
In the limit $\nu \to 1$, a hypersphere with radius $r_0=r_\req$ results while for the limit $\nu \to 0$, a spherical hyperplate with radius $r_\req$ is obtained. 
In Fig. \ref{fig:hlens}, the two-dimensional section of a hyperlens is shown with its characteristic parameters. 

\begin{figure}[ht]
\centering
\includegraphics[width=0.5\textwidth]{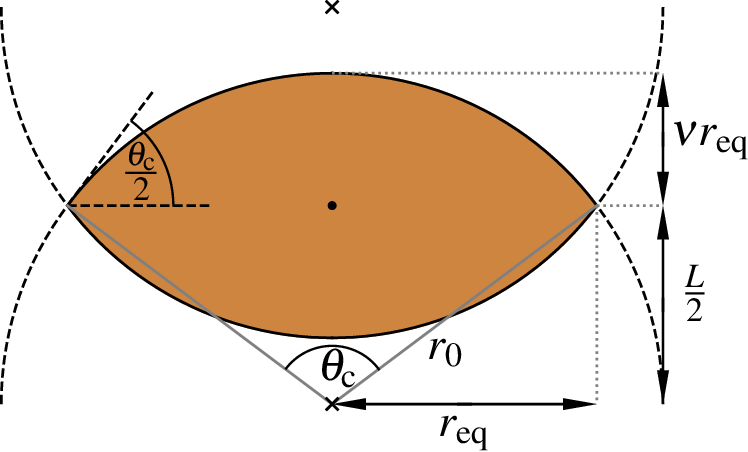}
\caption{Two-dimensional section of a hyperlens with radius $r_\req$ and aspect ratio $\nu=1/2$. The crosses indicate the centers of the generating hyperspheres with radius $r_0$ separated by the distance $L$. $\theta_\rc$ is the critical angle.}\label{fig:hlens}
\end{figure}

Placing the hyperlens' center in the origin of the coordinate system, the meridian curve $r(z)$ reads as
\begin{align}\label{eq:Hlens_rz}
r(z) = \left[\dfrac{\vert z \vert r_\req \left(\nu^2-1\right)+\nu \left(r_\req^2-z^2\right)}{\nu} \right]^{1/2}
\end{align}
where the hyperspheres' centers are located at $z=\pm (r_0 - \nu r_\req)$. 
Since a hyperlens consists of two attached hyperspherical caps, the principal radii of curvature are with
\begin{align}
R_1 = R_2 = R_3  = \dfrac{\nu^2 +1}{2 \nu} r_\req = r_0
\end{align}
equal to the radius of the hyperlens' generating hyperspheres [Eqs. \eqref{eq:R1R2} and \eqref{eq:R3}]. 

For the volume $V_\rP$ of a hyperlens, 
\begin{align}\label{eq:Hlens_VP}
V_\rP &= \dfrac{\pi r_\req^4}{32 \nu^4} \left\{\left(1+\nu^2 \right)^4 \left[\pi -2 \arcsin \left(\dfrac{1-\nu^2}{1+\nu^2} \right) \right]-4\nu\left[1-\nu^6 + \dfrac{11}{3} \nu^2 \left(1-\nu^2 \right) \right] \right\}
\end{align}
is obtained [Eqs. \eqref{eq:VP} and \eqref{eq:Hlens_rz}]. 
Its total surface area $S_\rP$ equals the lateral surface area $S_\rP^\prime$ and can be written as  
\begin{align}\label{eq:Hlens_SP}
S_\rP &= \dfrac{\pi \left(1+\nu^2 \right) r_\req^3}{4 \nu^3} \left\{\left(1+\nu^2 \right)^2 \left[\pi - 2 \arcsin \left(\dfrac{1-\nu^2}{1+\nu^2} \right) \right] - 4 \nu \left(1-\nu^2 \right) \right\}
\end{align}
using Eq. \eqref{eq:MP}. 
For the contribution of the continuous parts of the hyperlens to its second quermassintegral and mean radius of curvature,
\begin{align}
W_2^\prime &= \dfrac{\pi r_\req^2}{8 \nu^2} \left[\left(1+\nu^2 \right)^2 \pi - 2 \left(\nu^4+2\nu^2+1 \right) \arcsin\left(\dfrac{1-\nu^2}{1+\nu^2} \right) - 4 \nu \left(1-\nu^2 \right) \right]
\end{align}
and 
\begin{align}
\tilde{R}_\rP^\prime &=\dfrac{r_\req}{2 \pi \nu} \left\{\left(1+\nu^2 \right) \left[\pi - 2 \arcsin \left(\dfrac{1 - \nu^2}{1+\nu^2} \right) \right]-4 \nu \dfrac{1-\nu^2}{1+\nu^2} \right\}
\end{align} 
are obtained [Eqs. \eqref{eq:W2_an} and \eqref{eq:RP_an} with Eq. \eqref{eq:Hlens_rz}]. 

For the contributions of the equatorial singularity $W_2^{\prime \prime}$ and $\tilde{R}_\rP^{\prime \prime}$, the surface of a cut-out hyperlens in $-\epsilon \leq z \leq \epsilon$ can continuously be replaced by a toroidal hypercylinder with height $2 \epsilon$ and meridian curve 
\begin{align}\label{eq:rz_torus}
r^\sharp (z) = \xi + \left(r_\rt^2 -z^2 \right)^{1/2}
\end{align}
where 
\begin{align}
r^\sharp (\epsilon) = r(\epsilon)
\end{align}
and 
\begin{align}
\dot{r}^\sharp (\epsilon) = \dot{r} (\epsilon) 
\end{align}
are fulfilled (see Fig. \ref{fig:torus}). 
\begin{figure}[ht]
\centering
\includegraphics[width=0.45\textwidth,valign=c]{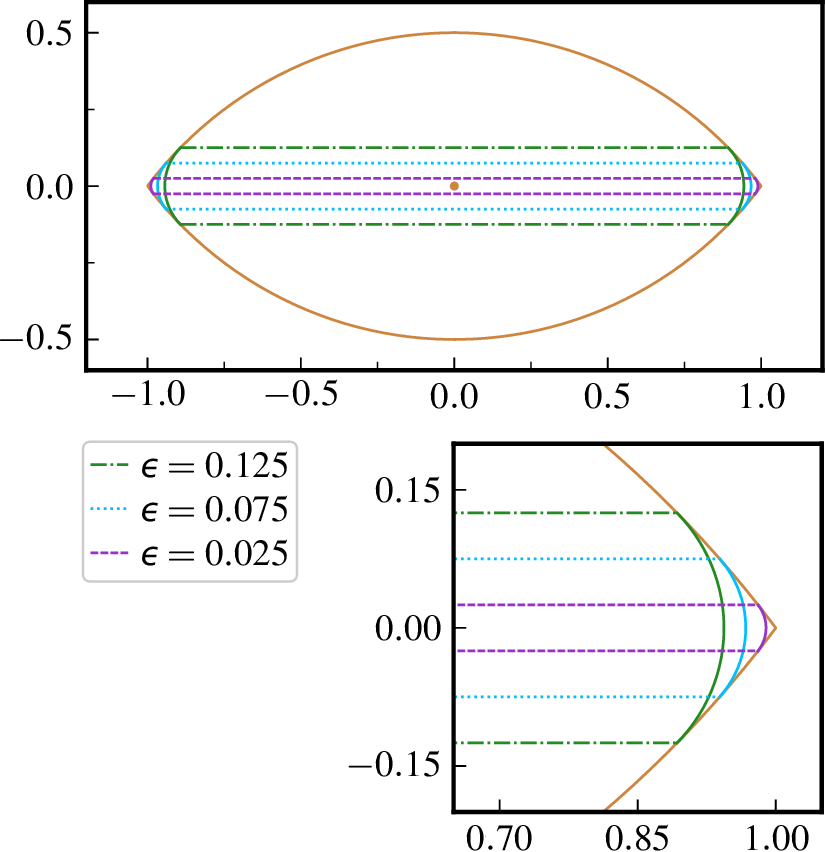} \qquad \qquad 
\includegraphics[width=0.25\textwidth,valign=c]{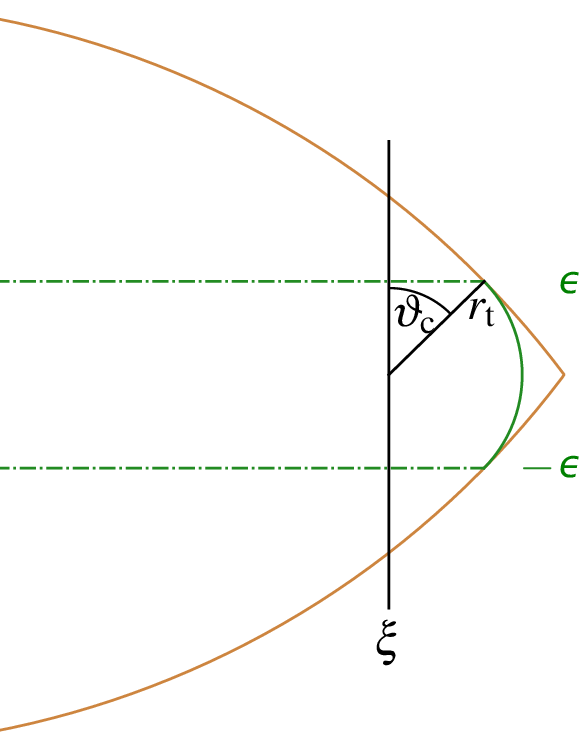}
\caption{Continuous completion of a hyperlens with radius $r_\req=1$ and aspect ratio $\nu=1/2$ by a hypertorus of height $2 \epsilon$ with meridian radius $r_\rt$ and distance $\xi$ between the center of the torus and the center of the meridian arc.}\label{fig:torus}
\end{figure}
In Eq. \eqref{eq:rz_torus} 
\begin{align}
r_\rt \left(\epsilon \right) = \dfrac{\epsilon\, r_\req \left(1+\nu^2 \right)}{2 \nu \epsilon + r_\req \left(1-\nu^2 \right)}
\end{align}
is the meridian radius of the hypertorus and 
\begin{align}
\xi\left(\epsilon \right) = -2 \epsilon \,\dfrac{\left[\nu \left(\nu r_\req - \epsilon \right) \left( \nu \epsilon + r_\req \right) \right]^{1/2}}{2 \nu \epsilon + \left(1-\nu^2 \right) r_\req} + \left[\dfrac{\left(\nu r_\req - \epsilon \right) \left( \nu \epsilon + r_\req \right)}{\nu} \right]^{1/2} 
\end{align}
the distance between the center of the hypertorus and the center of the meridian arc. 
In the region of the hypertorus, the principal radii of curvature can be written as 
\begin{subequations}
\begin{align}
R_1(z) &= R_2(z) = r_\rt \left[1+ \dfrac{\xi}{(r_\rt^2 - z^2)^{1/2}} \right] \,, \\
R_3(z) &= r_\rt
\end{align}
\end{subequations}
using Eqs. \eqref{eq:R1R2} and \eqref{eq:R3}. 
Employing $z=r_\rt \cos \vartheta$ in $\vartheta_\rc \leq \vartheta \leq \pi -\vartheta_\rc$ (see Fig. \ref{fig:torus}) with $\cos \vartheta_\rc (\epsilon) = \epsilon / r_\rt (\epsilon)$, the principal radii of curvature can be written as 
\begin{subequations}
\begin{align}
R_1(\vartheta) &= R_2(\vartheta) = r_\rt(\epsilon) + \dfrac{\xi(\epsilon)}{\sin \vartheta} \,, \\
R_3(\vartheta) &= r_\rt(\epsilon)
\end{align}
\end{subequations}
in dependence on the angular coordinate $\vartheta$. 
With Eq. \eqref{eq:SP_R1R2R3}, the second quermassintegral and the mean radius of curvature can, in the vicinity of the equatorial singularity, be written as 
\begin{align}
W_2^{\prime \prime} (\epsilon) &= \dfrac{\pi}{3} \int\limits_{\vartheta_\rc(\epsilon)}^{\pi-\vartheta_\rc(\epsilon)} \left[R_1(\vartheta) R_2(\vartheta) + R_1(\vartheta)R_3(\vartheta) + R_2(\vartheta)R_3(\vartheta) \right] \sin^2 \vartheta \,\rd \vartheta \nonumber \\
&= \dfrac{\pi}{3} \int\limits_{\vartheta_\rc(\epsilon)}^{\pi-\vartheta_\rc(\epsilon)} \left[3 r_\rt^2(\epsilon) \sin^2 \vartheta + 4 r_\rt(\epsilon) \xi(\epsilon) \sin \vartheta + \xi^2(\epsilon)\right] \,\rd \vartheta
\end{align}
and
\begin{align}
\tilde{R}_\rP^{\prime \prime} (\epsilon) &= \dfrac{2}{3\pi} \int\limits_{\vartheta_\rc(\epsilon)}^{\pi-\vartheta_\rc(\epsilon)} \left[R_1(\vartheta) + R_2(\vartheta) + R_3(\vartheta) \right] \sin^2 \vartheta \,\rd \vartheta \nonumber \\
&= \dfrac{2}{\pi} \int\limits_{\vartheta_\rc(\epsilon)}^{\pi-\vartheta_\rc(\epsilon)} \left[r_\rt(\epsilon) \sin^2 \vartheta + \dfrac{2}{3} \xi(\epsilon) \sin \vartheta \right]  \,\rd \vartheta 
\end{align}
due to the continuous surface curvature of the hypertorus [Eqs. \eqref{eq:Wi_K}]. 
The contributions of the equatorial singularity 
\begin{align}
W_2^{\prime \prime} = \lim\limits_{\epsilon \to 0} W_2^{\prime \prime} (\epsilon) = \dfrac{2}{3} \pi r_\req^2 \arcsin \left(\dfrac{1-\nu^2}{1+\nu^2} \right)
\end{align}
and 
\begin{align}
\tilde{R}_\rP^{\prime \prime}  &= \lim\limits_{\epsilon \to 0} \tilde{R}_\rP^{\prime \prime} (\epsilon)  = \dfrac{8}{3\pi} r_\req \, \dfrac{1-\nu^2}{1+\nu^2}
\end{align}
result in the limit $\epsilon \to 0$. 
Hence, for the second quermassintegral and the mean radius of curvature, the contributions 
\begin{align}\label{eq:Hlens_W2''}
W_2^{\prime \prime} = \dfrac{1}{3} \pi^2 r_\req^2 \, \dfrac{2}{\pi} \arcsin\left[ \cos \left(\dfrac{\theta_\rc}{2} \right) \right]
\end{align}
and 
\begin{align}\label{eq:Hlens_RP''}
\tilde{R}_\rP^{\prime \prime} = \dfrac{8}{3\pi} r_\req \, \cos\left(\dfrac{\theta_\rc}{2} \right)
\end{align}
result depending on the critical angle $\theta_\rc$ [Eq. \eqref{eq:hlens_thetac}]. In the limit of a spherical hyperplate with $\theta_\rc=0$, Eqs. \eqref{eq:Sph_plate_W2} and \eqref{eq:Sph_plate_RP}
are consistently obtained. Finally, 
\begin{align}\label{eq:Hlens_W2}
W_2 = W_2^\prime + W_2^{\prime \prime} &=\dfrac{\pi r_\req^2}{8 \nu^2}\left[\left(1+\nu^2\right)^2 \pi -2 \left(\nu^4-\dfrac{2}{3}\nu^2 +1 \right) \arcsin\left(\dfrac{1-\nu^2}{1+\nu^2} \right) -  \right. \nonumber \\ 
&\quad \left. 4 \nu \left(1-\nu^2 \right) \right] 
\end{align}
and 
\begin{align}\label{eq:Hlens_RP}
\tilde{R}_\rP &= \tilde{R}_\rP^\prime + \tilde{R}_\rP^{\prime\prime} = \dfrac{r_\req}{2 \pi \nu} \left\{\left(1+\nu^2 \right) \left[\pi - 2 \arcsin \left(\dfrac{1 - \nu^2}{1+\nu^2} \right) \right]+\dfrac{4}{3} \nu \dfrac{1-\nu^2}{1+\nu^2} \right\}
\end{align}
are obtained for the second quermassintgral $W_2$ and mean radius of curvature $\tilde{R}_\rP$ of a hyperlens.

Additionally, in the limit $\nu \to 0$, the geometric measures $S_\rP$, $W_2$, and $\tilde{R}_\rP$ with Eqs. \eqref{eq:Hlens_SP}, \eqref{eq:Hlens_W2}, and \eqref{eq:Hlens_RP} can, for numerical stability, be expanded in Maclaurin series as
\begin{align}
S_\rP &= \dfrac{8}{3} \pi r_\req^3 \left[1+ \dfrac{6}{5} \nu^2 + \mathcal{O}\left(\nu^4\right) \right] \,,
\end{align}
\begin{align}
W_2 &= \dfrac{1}{3} \pi r_\req^2 \left[\pi + \dfrac{32}{15} \nu^3 + \mathcal{O}\left(\nu^5\right) \right] \,,
\end{align}
and 
\begin{align}
\tilde{R}_\rP &= \dfrac{8}{3 \pi} r_\req \left[1+\dfrac{2}{5} \nu^4 + \mathcal{O}\left(\nu^6\right)\right]
\end{align}
to obtain the limits of a spherical hyperplate with Eqs. \eqref{eq:Sph_plate_SP}, \eqref{eq:Sph_plate_W2}, and \eqref{eq:Sph_plate_RP}. 
For the volume $V_\rP$ [Eq. \eqref{eq:Hlens_VP}], the expansion 
\begin{align}
V_\rP &= 16 \pi r_\req^4 \left[\dfrac{1}{15} \nu + \dfrac{1}{35} \nu^3 + \mathcal{O}\left(\nu^5\right)\right]
\end{align}
is obtained.

\subsubsection{Hyperspherical cap}\label{sec:hcap}

A hyperspherical cap is a hypersphere cut off by a single hyperplane resulting in half of a spherical hyperplate as a singularity on the surface of this geometry. 
The hyperspherical cap can be defined by the radius of the hypersphere $r_0$ and the critical angle $\theta_\rc$ (Fig. \ref{fig:hcap}). 
With these parameters, the height  
\begin{align}\label{eq:hcap_h}
    h = r_0 \left[1-\cos \left(\dfrac{\theta_\rc}{2}\right) \right]
\end{align}
and radius of the cap 
\begin{align}\label{eq:hcap_req}
    r_{\rm cap} = r_0 \sin \left(\dfrac{\theta_\rc}{2} \right) = \left[h \left( 2 r_0 -h \right) \right]^{1/2}
\end{align} 
are obtained, resulting in the aspect ratio 
\begin{align}\label{eq:hcap_nu}
\nu = \left\{\begin{array}{ccc}
\dfrac{h}{2 r_\rcap} & : & 0 \leq \theta_\rc \leq \pi \\
\\
\dfrac{h}{2 r_0} 	& : & \pi \leq \theta_\rc \leq 2 \pi
\end{array} \right.
\end{align} 
of hyperspherical caps. In the limit $\nu \to 1$, a hypersphere with radius $r_0$ results while for $\nu \to 0$, a spherical hyperplate with radius $r_\rcap$ is obtained. 
Placing the hypersphere's center in the origin of the coordinate system, the meridian curve $r(z)$ of the hyperspherical cap 
\begin{align*}
r(z) = \left(r_0^2 -z^2 \right)^{1/2}
\end{align*}
is identical to the meridian curve of the complete hypersphere [Eq. \eqref{eq:Hsp_rz}].

\begin{figure}[ht]
\centering
\includegraphics[width=0.35\textwidth,valign=t]{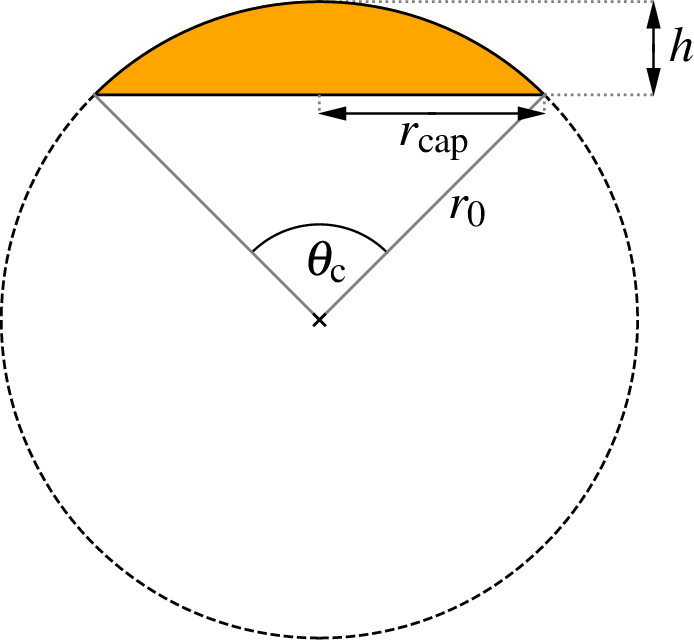} \qquad \qquad 
\includegraphics[width=0.35\textwidth,valign=t]{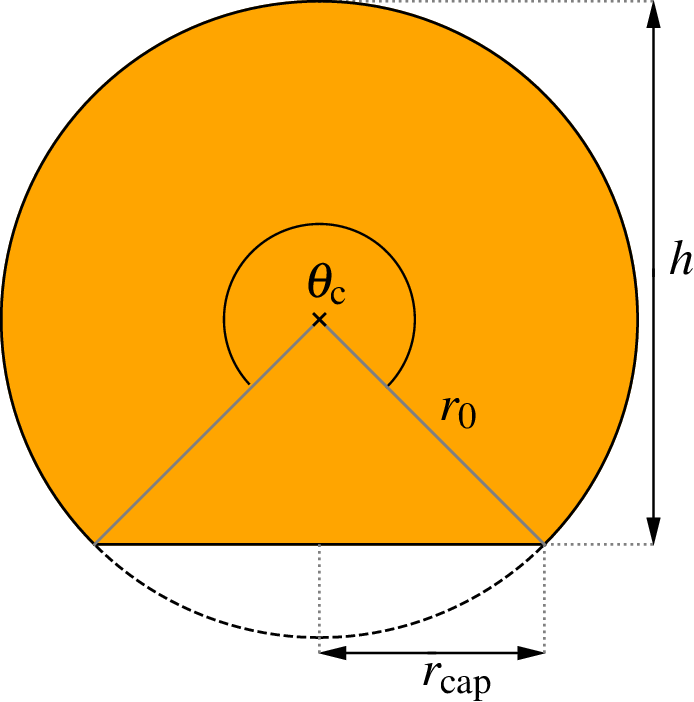}
\caption{Two-dimensional section of a hyperspherical cap with height $h$ and radius $r_\rcap$ . For a generating hypersphere with radius $r_0$, hyperspherical caps with critical angles $\theta_\rc = \pi/2$ (lhs) and $\theta_\rc= 3 \pi/2$ (rhs) are depicted.}\label{fig:hcap}
\end{figure} 

For the volume of a hyperspherical cap Eq. \eqref{eq:VP} can be used analogously to Eq. \eqref{eq:hsp_V} as 
\begin{align}\label{eq:hcap_VP}
V_\rP &= \dfrac{4}{3} \pi\int\limits_{r_0-h}^{r_0}  \left(r_0^2 -z^2 \right)^{3/2} \rd z \nonumber \\
&= \dfrac{\pi^2}{4} r_0^4 \left[1- \dfrac{2}{\pi} \arctan\left(\dfrac{r_0 -h}{r_\rcap} \right) - \dfrac{2 r_\rcap}{3 \pi r_0^4} \left(r_0-h \right) \left(3r_0^2 +2 r_\rcap^2 \right) \right]
\end{align}
by replacing the limits of the integral accordingly.   
This agrees with 
\begin{align}\label{eq:hcap_I_V}
V_\rP = \dfrac{\kappa_D}{2} r_0^D \mathcal{I}_{r_{\rm cap}^2/r_0^2}\left(\dfrac{D+1}{2},\dfrac{1}{2} \right) \qquad : \quad 0 \leq h \leq r_0
\end{align}
as the expression for the volume of $D$-dimensional spherical caps in $\mathbb{R}^D$ with $\theta_\rc \leq \pi$ provided in the literature \cite{Li2011}. 
Here $\mathcal{I}_x(a,b)$ with
\begin{align}
\mathcal{I}_x(a,b) &= \dfrac{\Gamma(a+b)}{\Gamma(a) \Gamma(b)} \int\limits_0^x t^{a-1} (1-t)^{b-1} ~\rd t  \nonumber \\
&= \dfrac{\Gamma(a+b)}{\Gamma(a) \Gamma(b)} \dfrac{x^a}{a} \,_2\mathcal{F}_1 \left(a,1-b,a+1;x \right) 
\end{align}
is the regularized incomplete beta function \cite{NIST:DLMF}. 
Also Eq. \eqref{eq:hcap_I_V} can easily be adapted to 
\begin{align}
V_\rP = \dfrac{\kappa_D}{2} r_0^D \left[ 2-  \mathcal{I}_{r_\rcap^2/r_0^2}\left(\dfrac{D+1}{2},\dfrac{1}{2} \right)\right] \qquad : \quad r_0 \leq h \leq 2r_0
\end{align}
for hyperspherical caps with critical angles $\theta_\rc \geq \pi$ while, in $\mathbb{R}^4$, Eq. \eqref{eq:hcap_VP} can be used for the volume of hyperspherical caps with arbitrary critical angles $\theta_\rc$.

Its total surface area $S_\rP$ can be written as the sum
\begin{align}
S_\rP = S_ \rP^\prime + S_\rP^{\prime \prime}
\end{align}
of the lateral surface area $S_\rP^\prime$ and the surface area at the singularity $S_\rP^{\prime \prime}$. 
For the lateral surface area $S_\rP^\prime$, Eqs. \eqref{eq:MP} and \eqref{eq:hcap_req} can be used to obtain
\begin{align}\label{eq:cap_S1}
S_\rP^\prime 
&= \pi^2 r_0^3 \left[1-\dfrac{2}{\pi} \arctan \left(\dfrac{r_0-h}{r_\rcap} \right) - \dfrac{2 r_\rcap}{\pi r_0^2} \left(r_0 -h \right) \right]
\end{align} 
while the surface area at the singularity 
\begin{align}
S_\rP^{\prime \prime} = \dfrac{4}{3} \pi r_\rcap^3
\end{align}
is half the surface area of a two-sided spherical hyperplate. 
Hence, the surface area of hyperspherical caps reads as
\begin{align}\label{eq:hcap_SP}
S_\rP = \pi^2 r_0^3 \left[1-\dfrac{2}{\pi} \arctan \left(\dfrac{r_0-h}{r_\rcap} \right) - \dfrac{2 r_\rcap}{\pi r_0^2} \left(r_0 -h \right) \right] + \dfrac{4}{3} \pi r_\rcap^3
\end{align}
in $\mathbb{R}^4$.

For the lateral surface area $S_\rP^\prime$ of $D$-dimensional spherical caps with $\theta_\rc\le \pi$
\begin{align}
S_\rP^\prime = \dfrac{\beta_D}{2}  r_0^{D-1} \mathcal{I}_{r_\rcap^2/r_0^2}\left(\dfrac{D-1}{2},\dfrac{1}{2}\right)  \qquad : \quad 0 \leq h \leq r_0
\end{align}
is available in the literature \cite{Li2011} which can be adapted to  
\begin{align}
S_\rP^\prime = \dfrac{\beta_D}{2}  r_0^{D-1} \left[2- \mathcal{I}_{r_\rcap^2/r_0^2}\left(\dfrac{D-1}{2},\dfrac{1}{2}\right)\right]  \qquad : \quad r_0 \leq h \leq 2 r_0
\end{align}
for $\theta_\rc > \pi$. In $\mathbb{R}^4$, Eq. \eqref{eq:cap_S1} can be used for the lateral surface area of hyperspherical caps with arbitrary critical angles $\theta_\rc$. 
Using 
\begin{align}
S_\rP^{\prime \prime} = \kappa_{D-1} r_\rcap^{D-1}
\end{align}
for the generalized contribution of the singularity, the expression 
\begin{align}
S_\rP = \left\{ \begin{array}{lcc}
\dfrac{\beta_D}{2}  r_0^{D-1} \mathcal{I}_{r_\rcap^2/r_0^2}\left(\dfrac{D-1}{2},\dfrac{1}{2}\right) + \kappa_{D-1} r_\rcap^{D-1} & : & 0 \leq h \leq r_0 \\
\dfrac{\beta_D}{2}  r_0^{D-1} \left[2- \mathcal{I}_{r_\rcap^2/r_0^2}\left(\dfrac{D-1}{2},\dfrac{1}{2}\right)\right] + \kappa_{D-1} r_\rcap^{D-1} & : & r_0 \leq h \leq 2 r_0
\end{array} \right.
\end{align}
results for the total surface area of $D$-dimensional spherical caps in $\mathbb{R}^D$.

For the second quermassintegral 
\begin{align}\label{eq:hcap_W2_0}
W_2 = W_2^\prime + W_2^{\prime \prime}\,,
\end{align}
the contribution for the continuous part $W_2^\prime$ can be written as 
\begin{align}\label{eq:hcap_W2_1}
W_2^\prime 
&= \dfrac{\pi^2}{4} r_0^2 \left[1-\dfrac{2}{\pi} \arctan \left(\dfrac{r_0-h}{r_\rcap} \right) - \dfrac{2 r_\rcap}{\pi r_0^2} \left(r_0 -h \right) \right]
\end{align}
using Eq. \eqref{eq:W2_an} with the principal radii of curvature  
\begin{align*}
R_1 = R_2 = R_3 = r_0
\end{align*}
resulting from the meridian curve $r(z)$ [Eq. \eqref{eq:hsp_R1R2R3}]. 
Using the regularized incomplete beta function $\mathcal{I}_x(a,b)$, also 
\begin{align}
W_2^\prime = \left\{ \begin{array}{ccc}
\dfrac{1}{4} \pi^2 r_0^{2} \mathcal{I}_{r_\rcap^2/r_0^2}\left(\dfrac{3}{2},\dfrac{1}{2}\right) & : & 0 \leq h \leq r_0 \vspace*{2mm} \\ 
\dfrac{1}{4} \pi^2 r_0^{2} \left[ 2-\mathcal{I}_{r_\rcap^2/r_0^2}\left(\dfrac{3}{2},\dfrac{1}{2}\right)\right] & : & r_0 \leq h \leq 2 r_0
\end{array} \right.
\end{align}
can be formulated in analogy to the previously described geometric measures. 
The contribution of the singularity $W_2^{\prime \prime}$ consists of half the contribution of a spherical hyperplate with radius $r_\rcap$ [Eq. \eqref{eq:Sph_plate_W2}] and half the contribution of the singularity of a hyperlens with radius $r_\rcap$ [Eq. \eqref{eq:Hlens_W2''}] where
\begin{align}\label{eq:hcap_W2_2}
W_2^{\prime \prime} &= \dfrac{1}{2} \, \dfrac{1}{3} \pi^2 r_\rcap^2 + \dfrac{1}{2} \, \dfrac{2}{3} \pi r_\rcap^2 \arcsin \left[ \cos \left(\dfrac{\theta_\rc}{2} \right) \right] \nonumber \\
&= \dfrac{1}{3} \pi r_\rcap^2  \left[\dfrac{\pi}{2}+ \arcsin\left( 1- \dfrac{h}{r_0} \right)\right]
\end{align}
is obtained. 
Using Eqs. \eqref{eq:hcap_W2_0}, \eqref{eq:hcap_W2_1}, and \eqref{eq:hcap_W2_2}
\begin{align}
W_2 &= \dfrac{\pi^2}{4} r_0^2 \left[1-\dfrac{2}{\pi} \arctan \left(\dfrac{r_0-h}{r_\rcap} \right) - \dfrac{2 r_\rcap}{\pi r_0^2} \left(r_0 -h \right) \right]\nonumber \\
&\qquad + \dfrac{\pi}{3}  r_\rcap^2  \left[\dfrac{\pi}{2} +  \arcsin\left(1-\dfrac{h}{r_0} \right) \right]
\end{align}
results for the second quermassintegral $W_2$ of hyperspherical caps in $\mathbb{R}^4$. 

For the mean radius of curvature 
\begin{align}
\tilde{R}_\rP = \tilde{R}_\rP^\prime + \tilde{R}_\rP^{\prime \prime}\,,
\end{align}
the contribution of the continuous part $\tilde{R}_\rP^\prime$ is 
\begin{align}
\tilde{R}_\rP^\prime 
&= \dfrac{r_0}{2} \left[1-\dfrac{2}{\pi} \arctan \left(\dfrac{r_0-h}{r_\rcap} \right) - \dfrac{2 r_\rcap}{\pi r_0^2} \left(r_0 -h \right) \right]
\end{align}
using Eqs. \eqref{eq:RP_an} and \eqref{eq:hcap_req}. 
Using the regularized incomplete beta function $\mathcal{I}_x(a,b)$, $\tilde{R}^\prime_\rP$ can be reformulated as 
\begin{align}
\tilde{R}_\rP^\prime = \left\{ \begin{array}{ccc}
\dfrac{r_0}{2}\, \mathcal{I}_{r_\rcap^2/r_0^2}\left(\dfrac{3}{2},\dfrac{1}{2}\right) & : & 0 \leq h \leq r_0 \vspace*{2mm} \\ 
\dfrac{r_0}{2} \left[ 2-\mathcal{I}_{r_\rcap^2/r_0^2}\left(\dfrac{3}{2},\dfrac{1}{2}\right)\right] & : & r_0 \leq h \leq 2 r_0
\end{array} \right.\,.
\end{align}
For the contribution at the singularity $\tilde{R}_\rP^{\prime \prime}$, again, half the contribution of a spherical hyperplate [Eq. \eqref{eq:Sph_plate_RP}] and half the contribution of the singularity of a hyperlens [Eq. \eqref{eq:Hlens_RP''}] are added as
\begin{align}
\tilde{R}_\rP^{\prime \prime} &= \dfrac{1}{2} \, \dfrac{8}{3\pi} r_\rcap + \dfrac{1}{2} \, \dfrac{8}{3 \pi} r_\rcap \cos \left(\dfrac{\theta_\rc}{2} \right) = \dfrac{4}{3 \pi} r_\rcap \left(2-\dfrac{h}{r_0} \right)
\end{align}
where
\begin{align}
\tilde{R}_\rP = \dfrac{r_0}{2} \left[1-\dfrac{2}{\pi} \arctan \left(\dfrac{r_0-h}{r_\rcap} \right) - \dfrac{2 r_\rcap}{\pi r_0^2} \left(r_0 -h \right) \right] + \dfrac{4}{3 \pi} r_\rcap \left(2-\dfrac{h}{r_0} \right)
\end{align}
results for the total mean radius of curvature $\tilde{R}_\rP$. 


\subsubsection{Hyperdoublecone} 

A hyperdoublecone is a uniaxial solid of revolution with inversion symmetry whose meridian curve can be written as
\begin{align}\label{eq:Hdcn_rz}
r(z) = r_\req - \dfrac{\vert z \vert}{\nu}
\end{align}
in dependence on the equatorial radius $r_\req$ and aspect ratio $\nu$.
Hyperdoublecones exist as oblate $(\nu\le 1)$ and prolate $(\nu\ge 1)$ solids with the limiting cases of a spherical hyperplate
for $\nu\to 0$ and a hard needle for $\nu\to\infty$ in $\mathbb{R}^4$ analogously to three-dimensional doublecones \cite{Herold2017}. 
The two-dimensional section of a hyperdoublecone with aspect ratio $\nu=2$ is shown in Fig. \ref{fig:hdc}. 
This geometry has an equatorial singularity with a critical angle $\theta_\rc$ which is defined by 
\begin{align}
\cos \left( \theta_\rc \right) = \dfrac{1}{\sqrt{1+\nu^2}}\,,
\end{align}
similar to hyperlenses. 
Additionally, two apical singularities exist at $z=\pm \nu r_\req$ similar to hyperspindles. 

\begin{figure}[ht]
\centering
\includegraphics[width=0.2\textwidth]{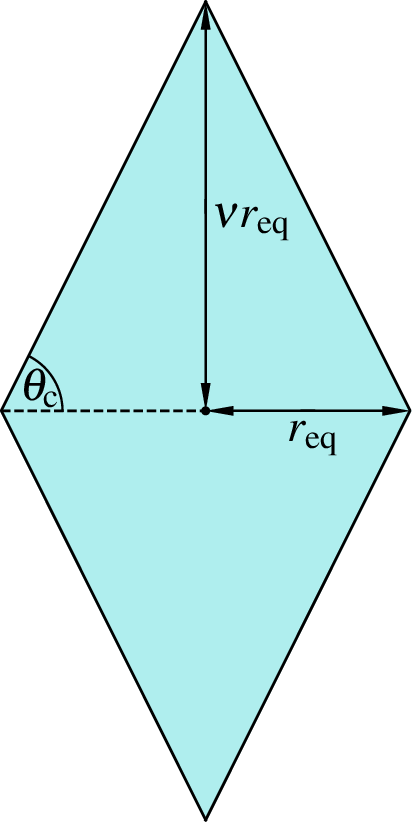}
\caption{Two-dimensional section of a hyperdoublecone with radius $r_\req$, aspect ratio $\nu=2$ and critical angle $\theta_\rc$.}\label{fig:hdc}
\end{figure}

The volume $V_\rP$ can be written as 
\begin{align}\label{eq:Hdcn_VP}
V_\rP &= \dfrac{4}{3} \pi \int\limits_{- \nu r_\req}^{\nu r_\req}  \left(r_\req - \dfrac{\vert z \vert}{\nu} \right)^3 \,\rd z = \dfrac{2}{3} \pi \nu r_\req^4
\end{align}
using Eqs. \eqref{eq:VP} and \eqref{eq:Hdcn_rz}. 
Since the lateral surface area $S_\rP^\prime$ of this geometry equals its total surface area $S_\rP$, Eq. \eqref{eq:MP} can be used and 
\begin{align}\label{eq:Hdcn_SP}
S_\rP = S_\rP^\prime = 4 \pi \dfrac{\sqrt{1+\nu^2}}{\nu}  \int\limits_{-\nu r_\req}^{\nu r_\req} \left(r_\req - \dfrac{\vert z \vert}{\nu} \right)^2  \,\rd z = \dfrac{8}{3} \pi r_\req^3 \sqrt{1+\nu^2}
\end{align}
is obtained.
For the principal radii of curvature 
\begin{align}
R_1(z) = R_2(z)= \left(r_\req - \dfrac{\vert z \vert }{\nu} \right) \dfrac{\sqrt{1+\nu^2}}{\nu}
\end{align}
and 
\begin{align}
\vert R_3\vert = \infty
\end{align}
result from Eqs. \eqref{eq:R1R2} and \eqref{eq:R3} using Eq. \eqref{eq:Hdcn_rz}.

For the second quermassintegral the contribution of the continuous part $W_2^\prime$ can be written as 
\begin{align}
\label{eq:Hdcn_W2a}
W_2^\prime = \dfrac{2}{3} \pi \int\limits_{-\nu r_\req}^{\nu r_\req} \left(r_\req - \dfrac{\vert z \vert}{\nu} \right)\,\rd z = \dfrac{2}{3} \pi \nu r_\req^2
\end{align}
[Eq. \eqref{eq:W2_an}]. 
For the contribution of the equatorial singularity $W_2^{\prime \prime}$, analogously to hyperlenses, an infinitely thin hypertorus can be used for a continuous completion. 
Therefore, 
\begin{align}
W_2^{\prime \prime} &= \dfrac{2}{3} \pi r_\req^2 \arcsin \left[\cos \left(\theta_\rc \right)  \right] = \dfrac{2}{3} \pi r_\req^2 \arcsin\left( \dfrac{1}{\sqrt{1+\nu^2}} \right)
\end{align} 
results, similar to hyperlenses [Eq. \eqref{eq:Hlens_W2''}]. 
For the contribution at the apical singularities $W_2^{\prime \prime \prime}$, similar to hyperspindles, 
\begin{align}
W_2^{\prime \prime \prime} = 0
\end{align}
is obtained [Eq. \eqref{eq:W2_apex}]. 
Therefore, the second quermassintegral can be written as 
\begin{align}\label{eq:Hdcn_W2}
W_2 &= W_2^\prime + W_2^{\prime \prime} + 2 W_2^{\prime \prime \prime} = \dfrac{2}{3} \pi r_\req^2 \left[\nu + \arcsin\left( \dfrac{1}{\sqrt{1+\nu^2}} \right)  \right]
\end{align}
in dependence on the equatorial radius $r_\req$ and the aspect ratio $\nu$. 

For the mean radius of curvature $\tilde{R}_\rP$, similar to the second quermassintegral $W_2$,  
\begin{align}\label{eq:Hdcn_W2'}
\tilde{R}_\rP^\prime 
= \dfrac{4}{3 \pi}  r_\req \dfrac{\nu^2}{\sqrt{1+\nu^2}}
\end{align}
is obtained from Eq. \eqref{eq:RP_an} for the contribution of the continuous part $\tilde{R}_\rP^\prime$, 
\begin{align}
\tilde{R}_\rP^{\prime\prime} = \dfrac{8}{3 \pi} r_\req \cos \left( \theta_\rc \right) = \dfrac{8}{3 \pi}  r_\req \dfrac{1}{\sqrt{1+\nu^2}} 
\end{align}
for the contribution of the equatorial singularity $\tilde{R}_\rP^{\prime\prime}$ [Eq. \eqref{eq:Hlens_RP''}], and 
\begin{align}
\tilde{R}_\rP^{\prime \prime \prime} = 0
\end{align}
for the contribution of the apical singularities $\tilde{R}_\rP^{\prime \prime \prime}$ [Eq. \eqref{eq:RP_apex}]. 
Hence, 
\begin{align}\label{eq:Hdcn_RP}
\tilde{R}_\rP &= \tilde{R}_\rP^{\prime}  + \tilde{R}_\rP^{\prime \prime }  +2 \tilde{R}_\rP^{\prime \prime \prime} = \dfrac{4}{3 \pi} r_\req \dfrac{\nu^2 +2}{\sqrt{1+\nu^2}}
\end{align}
results for the mean radius of curvature $\tilde{R}_\rP$.

\subsubsection{Hypercone}

Similar to a hyperdoublecone, placing the spherical base's center in the coordinate system's origin, the meridian curve of a hypercone reads as
\begin{align}\label{eq:Hcon_rz}
r(z) = r_0 - \dfrac{z}{\nu}
\end{align}
in dependence on its base radius $r_0$ and aspect ratio $\nu$.
In Fig. \ref{fig:hcn}, the two-dimensional section of a hypercone with aspect ratio $\nu =2$ is shown. 
Identical to the hyperdoublecone, a hypercone's critical angle $\theta_\rc$ is 
\begin{align}
\cos \left( \theta_\rc \right) = \dfrac{1}{\sqrt{1+\nu^2}}\,.
\end{align}

\begin{figure}[ht]
\centering
\includegraphics[width=0.4\textwidth]{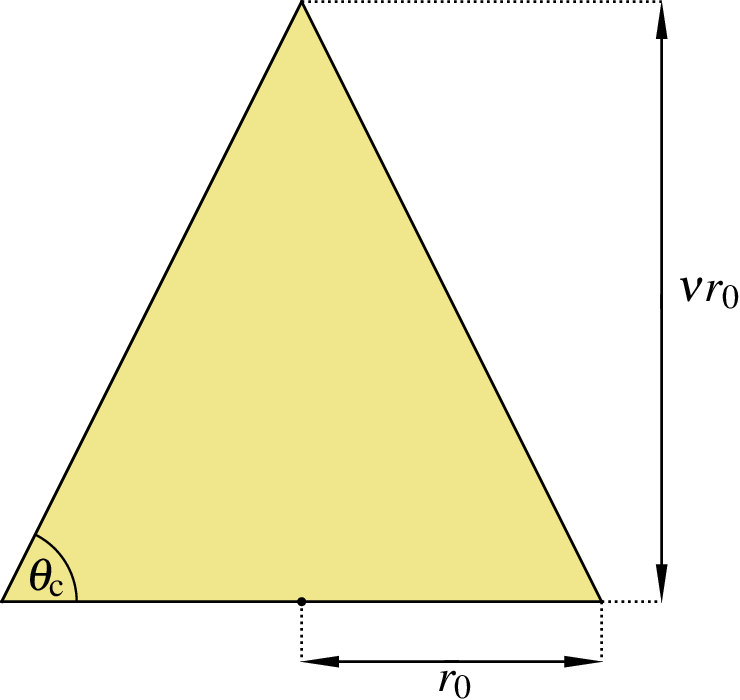}
\caption{Two-dimensional section of a hypercone with radius $r_0$, aspect ratio $\nu=2$ and critical angle $\theta_\rc$.}\label{fig:hcn}
\end{figure}

The volume of a four-dimensional hypercone reads as 
\begin{align}\label{eq:Hcon_VP}
V_\rP = \dfrac{4}{3} \pi \int\limits_0^{\nu r_0}  \left(r_0 - \dfrac{z}{\nu} \right)^3 \,\rd z 
= \dfrac{1}{3} \pi \nu r_0^4
\end{align}
[Eq. \eqref{eq:VP}], half the volume of a corresponding hyperdoublecone [Eq. \eqref{eq:Hdcn_VP}]. 
Its lateral surface area $S_\rP^\prime$ can be written as 
\begin{align}\label{eq:Hcon_SP'}
S_\rP^\prime 
= \dfrac{4}{3} \pi r_0^3 \sqrt{1+\nu^2} 
\end{align}
[Eq. \eqref{eq:MP}], half the total surface area of the corresponding hyperdoublecone [Eq. \eqref{eq:Hdcn_SP}]. With the contribution of the base 
\begin{align}
S_\rP^{\prime \prime} = \dfrac{4}{3} \pi r_0^3\,,
\end{align}
the total surface area reads as
\begin{align}\label{eq:Hcon_SP}
S_\rP &= S_\rP^\prime + S_\rP^{\prime \prime} = \dfrac{4}{3} \pi r_0^3 \left( \sqrt{1+\nu^2}\, +1 \right)
\end{align}
in dependence on the radius $r_0$ and aspect ratio $\nu$. 
For the contribution of the continuous part to the second quermassintegral
\begin{align}
W_2^\prime 
= \dfrac{1}{3} \pi \nu r_0^2
\end{align}
[Eq. \eqref{eq:W2_an}], half of the contribution of a corresponding hyperdoublecone is obtained [Eq. \eqref{eq:Hdcn_W2a}]. 
The principal radii of curvature with 
\begin{align}
R_1(z) = R_2(z) = \left(r_0 - \dfrac{z}{\nu} \right) \dfrac{\sqrt{1+\nu^2}}{\nu} 
\end{align} 
and
\begin{align}
\vert R_3\vert =  \infty
\end{align} 
of a hypercone are identical to those of a hyperdoublecone. Similar to hyperspherical caps, the contribution of the singularity at the base $W_2^{\prime \prime}$ consists of 
half the contribution of a spherical hyperplate and half the contribution of a hyperdoublecone's singularity.
Hence, the base singularity contributes 
\begin{align}
W_2^{\prime \prime} &= \dfrac{1}{6}\pi^2 r_0^2 + \dfrac{1}{3} \pi r_0^2 \arcsin\left[\cos\left(\theta_\rc \right) \right] \nonumber \\
&= \dfrac{1}{6} \pi r_0^2 \left[ \pi + 2 \arcsin\left(\dfrac{1}{\sqrt{1+\nu^2}}\right) \right]
\end{align}
to the second quermassintegral. Since the contribution of the apical singularity $W_2^{\prime \prime \prime}=0$ vanishes, the second quermassintegral $W_2$ of a hypercone can be written as 
\begin{align}\label{eq:Hcon_W2}
W_2 &= W_2^{\prime} + W_2^{\prime \prime} + W_2^{\prime \prime \prime} = \dfrac{1}{6} \pi r_0^2 \left[2\nu +\pi + 2\arcsin\left(\dfrac{1}{\sqrt{1+\nu^2}}\right) \right]\,.
\end{align}

Similar, the contribution $\tilde{R}_\rP^\prime$ of the continuous surface curvature to the mean radius of curvature  
\begin{align}
\tilde{R}_\rP^\prime 
=\dfrac{2}{3 \pi} r_0 \dfrac{\nu^2}{\sqrt{1+\nu^2}}
\end{align}
is half the contribution of a corresponding hyperdoublecone [Eq. \eqref{eq:Hdcn_W2'}]. 
For the contribution at the base singularity, analogously to the contribution of the second quermassintegral, half a spherical hyperplate and half a hyperdoublecone can be used, resulting in 
\begin{align}
\tilde{R}_\rP^{\prime\prime} &= \dfrac{1}{2} \, \dfrac{8}{3 \pi} r_0 + \dfrac{1}{2} \, \dfrac{8}{3 \pi} r_0 \cos\left(\theta_\rc \right)  = \dfrac{4}{3 \pi} r_0 \left( 1 + \dfrac{1}{\sqrt{1+\nu^2}} \right)\,.
\end{align}
Since the apical singularity with $ \tilde{R}_\rP^{\prime\prime\prime} = 0 $ [Eq. \eqref{eq:RP_apex}] does not contribute to the mean radius of curvature, $\tilde{R}_\rP$ 
of a four-dimensional hypercone can be written as
\begin{align}\label{eq:Hcon_RP}
\tilde{R}_\rP  &= \tilde{R}_\rP^\prime + \tilde{R}_\rP^{\prime\prime} + \tilde{R}_\rP^{\prime\prime\prime} = \dfrac{2}{3 \pi} r_0 \left(2+\dfrac{\nu^2+2}{\sqrt{1+\nu^2}} \right)\,.
\end{align}

\subsubsection{Truncated hypercone} 

When a hypercone with base radius $r_0$ and aspect ratio $\nu$ is cut off by a hyperplane parallel to the base at height $h$, a truncated hypercone results with radius $r_1$ at the second singularity. 
The section of such a truncated hypercone is shown for an aspect ratio of $\nu =2$ and height $h=r_0$ in Fig. \ref{fig:hct}. 
The meridian curve $r(z)$ is again 
\begin{align}
\label{eq:meridian_truncated_cone}
r(z) = r_0 - \dfrac{z}{\nu} = r_0 - \dfrac{r_0 -r_1}{h} \, z
\end{align}
with the center of the base with radius $r_0$ as origin of the coordinate system. 
The aspect ratio $\nu$ can be defined by 
\begin{align}\label{eq:hcnt_nu}
\nu = \dfrac{h}{r_0 -r_1} = \tan \left( \theta_\rc \right)
\end{align}
restricting $r_1 < r_0$. The case $r_1=r_0$ resulting in a hypercylinder with aspect ratio $\nu^\prime=h/(2 r_0)$ is discussed in section \ref{sec:hcyl}.
In the limits $\nu \to 0$ and $h \to 0$ a spherical hyperplate with radius $r_0 =r_1$ results while for the limit $r_1 / r_0 \to  0$ a hypercone with radius $r_0$ and aspect ratio $\nu$ is obtained. 

Again, $\theta_\rc$ is the critical angle with 
\begin{align}
\cos \left(\theta_\rc \right) = \dfrac{1}{\sqrt{1+\nu^2}}\,,
\end{align}
identical to hyperdoublecones and hypercones. 
\begin{figure}[ht]
\centering
\includegraphics[width=0.5\textwidth]{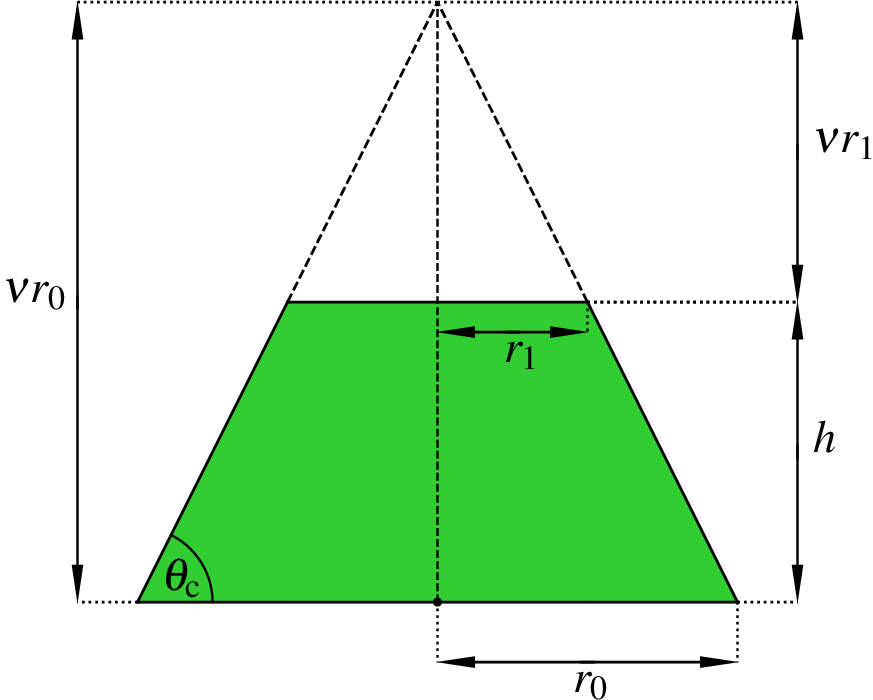}
\caption{Two-dimensional section of a truncated hypercone with radii $r_0$, $r_1$, height $h=r_0$,  aspect ratio $\nu=2$, and critical angle $\theta_\rc$.}\label{fig:hct}
\end{figure} 
The volume $V_\rP$ of a truncated hypercone can with Eq. \eqref{eq:VP} be written as 
\begin{align}
V_\rP &= \dfrac{4}{3} \pi \int\limits_0^h \left(r_0 - \dfrac{r_0 -r_1}{h}\,z \right)^3\,\rd z = \dfrac{h}{3} \pi \left(r_0^3 + r_0^2r_1 + r_0r_1^2 + r_1^3 \right)
\end{align}
which can be extended to 
\begin{align}
V_\rP = \kappa_{D-1} \dfrac{h}{D} \dfrac{r_1^D - r_0^D}{r_1 -r_0}
\end{align}
for $D$-dimensional truncated cones in Euclidean spaces $\mathbb{R}^D$. 
Its lateral surface area $S_\rP^\prime$ can be written as
\begin{align}
S_\rP^\prime 
&= \dfrac{4}{3} \pi \left(r_0^2 + r_0 r_1 + r_1^2 \right) \sqrt{h^2+(r_0-r_1)^2}
\end{align}
[Eq. \eqref{eq:MP}] and with the contributions  
\begin{align}
S_\rP^{\prime \prime} = \dfrac{4}{3} \pi r_0^3
\end{align}
and 
\begin{align}
S_\rP^{\prime \prime \prime} = \dfrac{4}{3} \pi r_1^3
\end{align}
at the singularities, the total surface area $S_\rP=S_\rP^\prime + S_\rP^{\prime\prime} + S_\rP^{\prime\prime\prime}$ as
\begin{align}
S_\rP = \dfrac{4}{3} \pi \left[ r_0^3 + r_1^3 + \left(r_0^2 + r_0 r_1 + r_1^2 \right) \sqrt{h^2 + \left(r_0 -r_1 \right)^2} \, \right]
\end{align}
is obtained. This can be generalized to $\mathbb{R}^D$ as 
\begin{align}
S_\rP = \kappa_{D-1} \left[r_0^{D-1} + r_1^{D-1} + \dfrac{r_1^{D-1}-r_0^{D-1}}{r_1-r_0} \sqrt{h^2 + \left(r_0 -r_1 \right)^2} \,\right]
\end{align}
using the common relation
\begin{align}
\kappa_{D} = \dfrac{\beta_{D}}{D}
\end{align}
of the volume $\kappa_{D}$ to the surface area $\beta_{D}$ for a $D$-dimensional unit sphere [Eqs. \eqref{eq:beta_D} and \eqref{eq:kappa}]. 
Combining the meridian curve [Eq. \eqref{eq:meridian_truncated_cone}] with Eqs. \eqref{eq:R1R2} and \eqref{eq:R3} 
\begin{align}
R_1(z) = R_2(z) &= \left(r_0 - \dfrac{r_0 -r_1}{h}\,z \right) \dfrac{\sqrt{h^2 + ( r_0 -r_1)^2}}{h}
\end{align} 
and
\begin{align}
\vert R_3\vert =  \infty
\end{align} 
result as principal radii of curvature.
Herewith, the contribution of the part with continuous surface curvature to the second quermassintegral of a truncated hypercone can be written as
\begin{align}
W_2^\prime &= \dfrac{1}{3} \pi h \left(r_0 + r_1 \right)
\end{align}
using Eq. \eqref{eq:W2_an}. 
For the contribution at the base singularity, similar to hypercones
\begin{align}
W_2^{\prime \prime} &= \dfrac{1}{2} \, \dfrac{1}{3} \pi^2 r_0^2 + \dfrac{1}{2} \, \dfrac{2}{3} \pi r_0^2 \arcsin \left[\cos \left(\theta_\rc \right) \right] \nonumber \\
 &= \dfrac{1}{6} \pi r_0^2 \left[\pi + 2 \arcsin\left(\dfrac{r_0 -r_1}{\sqrt{h^2+ ( r_0 - r_1 )^2}} \right) \right]
\end{align}
results, using half of a spherical hyperplate and half of a hyperdoublecone. 
For the contribution at the top singularity to the second quermassintegral 
\begin{align}
W_2^{\prime \prime \prime} &= \dfrac{1}{2} \, \dfrac{1}{3} \pi^2 r_1^2 - \dfrac{1}{2} \, \dfrac{2}{3} \pi r_1^2 \arcsin \left[\cos \left(\theta_\rc \right) \right] \nonumber \\
&= \dfrac{1}{6} \pi r_1^2 \left[ \pi- 2\arcsin\left(\dfrac{r_0-r_1}{\sqrt{h^2+(r_0-r_1)^2}} \right)\right] 
\end{align}
is obtained in analogy to three-dimensional truncated cones \cite{Herold2017}. 
In total,  
\begin{align}
W_2 = W_2^\prime + W_2^{\prime \prime} + W_2^{\prime \prime \prime} =& \dfrac{1}{3} \pi \left(r_0+r_1 \right) \left\{h + \left(r_0 -r_1 \right) \arcsin\left[\dfrac{r_0-r_1}{\sqrt{h^2+(r_0-r_1)^2}} \right] \right\} \nonumber \\ & \quad +\dfrac{1}{6} \pi^2 \left(r_0^2+r_1^2 \right) 
\end{align}
results as second quermassintegral $W_2$ for truncated hypercones in $\mathbb{R}^4$. 

Finally, the mean radius of curvature $\tilde{R}_\rP$ of a truncated hypercone consists of the contribution of the continuous part 
\begin{align}
\tilde{R}_\rP^\prime = \dfrac{2}{3\pi} \dfrac{h^2}{\sqrt{h^2 + (r_0-r_1)^2}}
\end{align}
[Eq. \eqref{eq:RP_an}] and the contributions of both singularities    
\begin{align}
\tilde{R}_\rP^{\prime \prime} &= \dfrac{1}{2} \, \dfrac{8}{3 \pi} r_0 + \dfrac{1}{2} \, \dfrac{8}{3 \pi} r_0 \cos \left(\theta_\rc \right) = \dfrac{4}{3 \pi} r_0 \left(1+ \dfrac{r_0-r_1}{\sqrt{h^2+(r_0-r_1)^2}} \right)
\end{align}
and
\begin{align}
\tilde{R}_\rP^{\prime \prime \prime} &= \dfrac{1}{2} \, \dfrac{8}{3 \pi} r_1 - \dfrac{1}{2} \, \dfrac{8}{3 \pi} r_1 \cos \left(\theta_\rc \right) = \dfrac{4}{3 \pi} r_1 \left(1-\dfrac{r_0-r_1}{\sqrt{h^2+(r_0-r_1)^2}} \right)\,.
\end{align} 
Hence, for the mean radius of curvature of a truncated hypercone 
\begin{align}
\tilde{R}_\rP &= \tilde{R}_\rP^{\prime} + \tilde{R}_\rP^{\prime \prime} +\tilde{R}_\rP^{\prime \prime \prime} = \dfrac{2}{3 \pi} \left[2 (r_0+r_1) + \dfrac{h^2 + 2 (r_0-r_1)^2}{\sqrt{h^2+(r_0-r_1)^2}} \right]
\end{align}
is obtained.


\section{Results}\label{sec4}

Using the geometric measures volume $V_\rP$, total surface area $S_\rP$, second quermassintegral $W_2$, and mean radius of curvature $\tilde{R}_\rP$, the second virial coefficient $B_2$, representing the excluded volume per particle for the four-dimensional, convex solids of revolution provided in Sec. \ref{eq_sec:measures_4d_specific}, can be obtained using Eq. \eqref{eq:B2(K)}. 
To investigate the influence of the specific particle shape in dependence on the aspect ratio $\nu$, these virial coefficients can be normalized to the particle volume $V_\rP$ using 
\begin{align}\label{eq:B2red_an}
B_2^\ast = \dfrac{B_2}{V_\rP} = 1+ \dfrac{S_\rP \tilde{R}_\rP}{V_\rP} + \dfrac{6}{\pi^2} \dfrac{W_2^2}{V_\rP}
\end{align}
as reduced second virial coefficients $B_2^\ast$. 
The resulting reduced virial coefficients, independent of the particle size, allow an expansion of the real gas factor $Z$ in powers of the packing fraction $\eta$ [Eq. \eqref{eq:real_gas_factor_volume_fraction}] 
and  thus a comparison of different geometric shapes.

\subsection{Solids with inversion symmetry} 

In Fig. \ref{fig:b2_nu}, the second virial coefficients $B_2$ of  hyperellipsoids, hyperspherocylinders, hyperspindles, hypercylinders, hyperdoublecones, and hyperlenses as four-dimensional, uniaxial solids of revolution with inversion symmetry are shown. While hyperspherocylinders and hyperspindles are strictly prolate geometries with $\nu \geq 1$,  hyperlenses are strictly oblate geometries with $0 \leq \nu \leq 1$.  
Hyperellipsoids, hypercylinders, and hyperdoublecones can be prolate or oblate with $0\le\nu < \infty$.

\begin{figure}[h]
\centering
\includegraphics[width=0.49\textwidth,valign=t]{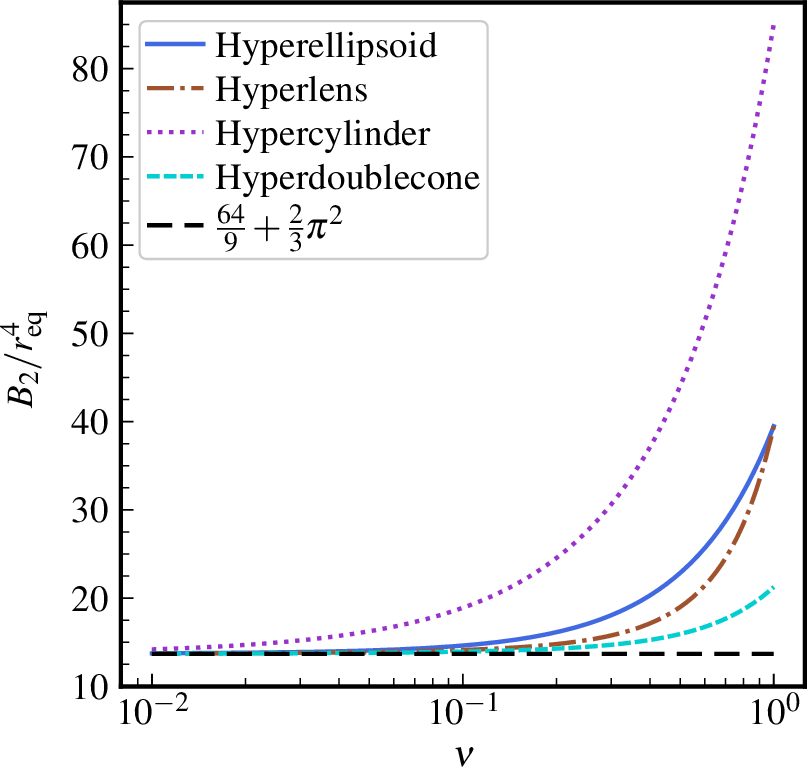}\hfill
\includegraphics[width=0.49\textwidth,valign=t]{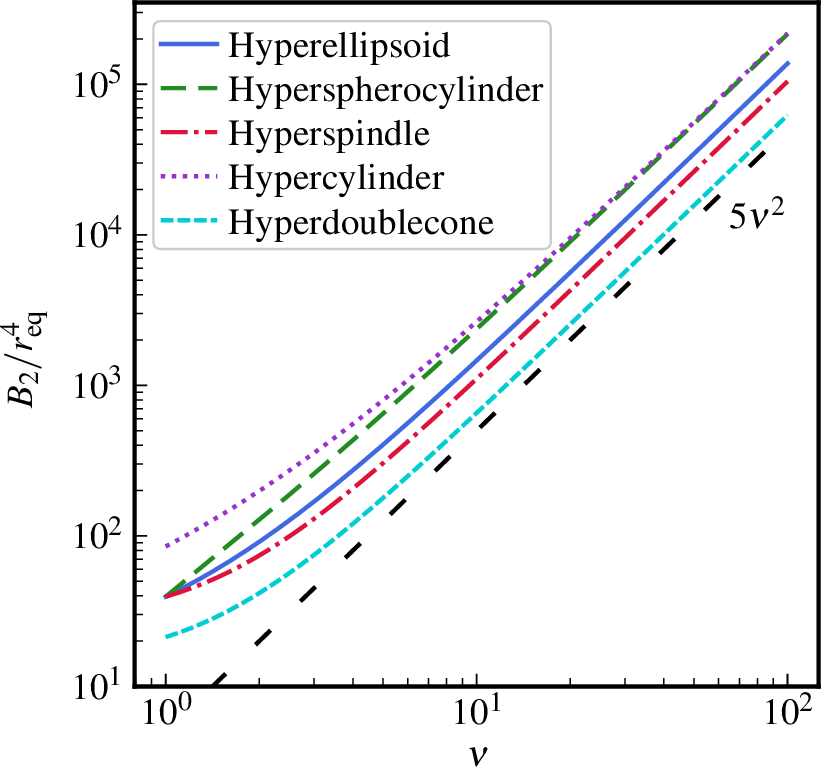}
\caption{Second virial coefficient $B_2/r_\req^4$ of oblate (lhs) and prolate (rhs) uniaxial solids of revolution with inversion symmetry in dependence on the aspect ratio $\nu$. For $\nu\to 0$, the common limit
$B_2/r_\req^4=64/9+2\pi^2/3$ of a spherical hyperplate is approached as visualized by the horizontal line. To indicate the common slope at infinite aspect ratio, $5\nu^2$ is depicted as a guide to the eye.}\label{fig:b2_nu}
\end{figure}

As visible in Fig. \ref{fig:b2_nu}, hyperellipsoids, hyperspherocylinders, hyperspindles, and hyperlenses approach in limit $\nu \to 1$  hyperspheres with radius $r_\req$ and 
\begin{align}
B_2^{ ( \mathrm{hsph} ) } = 4 \pi^2 r_\req^4
\end{align}
as their second virial coefficient. 

For oblate geometries, with decreasing aspect ratio all geometries approach in the limit 
\begin{align}
\lim\limits_{\nu \to 0} \dfrac{B_2^{(\mathrm{hdcn})}}{r_\req^4} &= 
\lim\limits_{\nu \to 0} \dfrac{B_2^{(\mathrm{hlen})}}{r_\req^4} =
\lim\limits_{\nu \to 0} \dfrac{B_2^{(\mathrm{hell})}}{r_\req^4} =
\lim\limits_{\nu \to 0} \dfrac{B_2^{(\mathrm{hcyl})}}{r_\req^4} 
= \dfrac{64}{9} + \dfrac{2}{3} \pi^2
\end{align}
spherical hyperplates with radius $r_\req$ and 
\begin{align}
B_2^{(\mathrm{shyp})} = \left(\dfrac{64}{9} + \dfrac{2}{3} \pi^2 \right) r_\req^4
\end{align}
as their second virial coefficient.

In the limit $\nu \to \infty$, hard hyperneedles result where hyperspherocylinders approach hypercylinders since the contribution of the capping hyperspheres becomes negligible. 
For sufficiently large aspect rations, the proportionality $B_2(\nu \gg 1) \propto \nu^2$ is observed as indicated by the black line $5\nu^2$. 
The limits of $B_2/(\nu^2 r_\req^4)$ with 
\begin{subequations}
\begin{align}
\lim\limits_{\nu \to \infty} \dfrac{B_2^{(\mathrm{hcyl})}}{\nu^2 r_\req^4} &= \dfrac{64}{3} \,, \\ 
\lim\limits_{\nu \to \infty} \dfrac{B_2^{(\mathrm{hscy})}}{\nu^2 r_\req^4} &= \dfrac{64}{3} \,, \\
\lim\limits_{\nu \to \infty} \dfrac{B_2^{(\mathrm{hell})}}{\nu^2 r_\req^4} &=  \dfrac{64}{9} + \dfrac{2}{3} \pi^2 \,, \\
\lim\limits_{\nu \to \infty} \dfrac{B_2^{(\mathrm{hspi})}}{\nu^2 r_\req^4} &= \dfrac{1408}{135}  \,, \\
\lim\limits_{\nu \to \infty} \dfrac{B_2^{(\mathrm{hdcn})}}{\nu^2 r_\req^4} &= \dfrac{56}{9}
\end{align}
\end{subequations}
decrease from hypercylinders (hcyl) and hyperspherocylinders (hscy) with identical value over hyperellipsoids (hell) and hyperspindles (hspi) to hyperdoublecones (hdcn).
This sequence is in accordance to the relation 
\begin{align}
K^{(\mathrm{hdcn})}\subset K^{(\mathrm{hspi})}\subset K^{(\mathrm{hell})}\subset K^{(\mathrm{hscy})}\subset K^{(\mathrm{hcyl})}
\end{align}
for convex sets $K\equiv K(\nu,r_\req)$ of hyperdoublecones, hyperspindles, hyperellipsoids, hyperspherocylinders, and hypercylinders with identical aspect ratio $\nu$ and equatorial radius $r_\req$.
This behavior is identical to their three-dimensional analogues \cite{Herold2017}.

\subsubsection{Prolate geometries}\label{sec:res_pro}

To analyze the influence of the detailed particle geometry, reduced virial coefficients $B_2^\ast$ are compared in Fig. \ref{fig:b2red_prolate} (lhs) in dependence on their aspect ratio $\nu$. 
For a better display of relative deviances, the ratios of reduced second virial coefficients $B_2^\ast(\nu)$ to the reduced second virial coefficient $B_2^{\ast\,\rm (hell)}(\nu)$ of hyperellipsoids are shown in Fig. \ref{fig:b2red_prolate} (rhs).  
In the limit $\nu \to 1$, hyperellipsoids, hyperspherocylinders, and hyperspindles approach hyperspheres with a reduced second virial coefficient of $B_2^{\ast\,(\mathrm{hsph})} = 8$. 
For arbitrary dimensions $D$, the reduced second virial coefficient of $D$-dimensional spheres is 
\begin{align}\label{eq:B2red_D_sphere}
B_2^{\ast\,(\mathrm{hsph})} = 2^{D-1}
\end{align}
as commonly known \cite{Urrutia2022}.

\begin{figure}[ht]
\centering
\includegraphics[width=0.475\textwidth,valign=t]{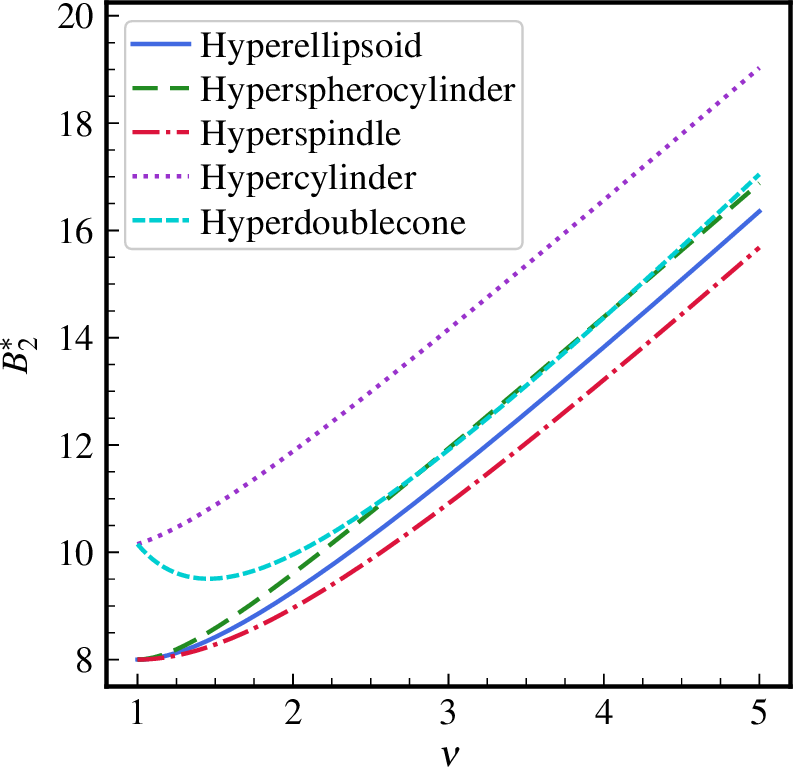} \hfill
\includegraphics[width=0.5\textwidth,valign=t]{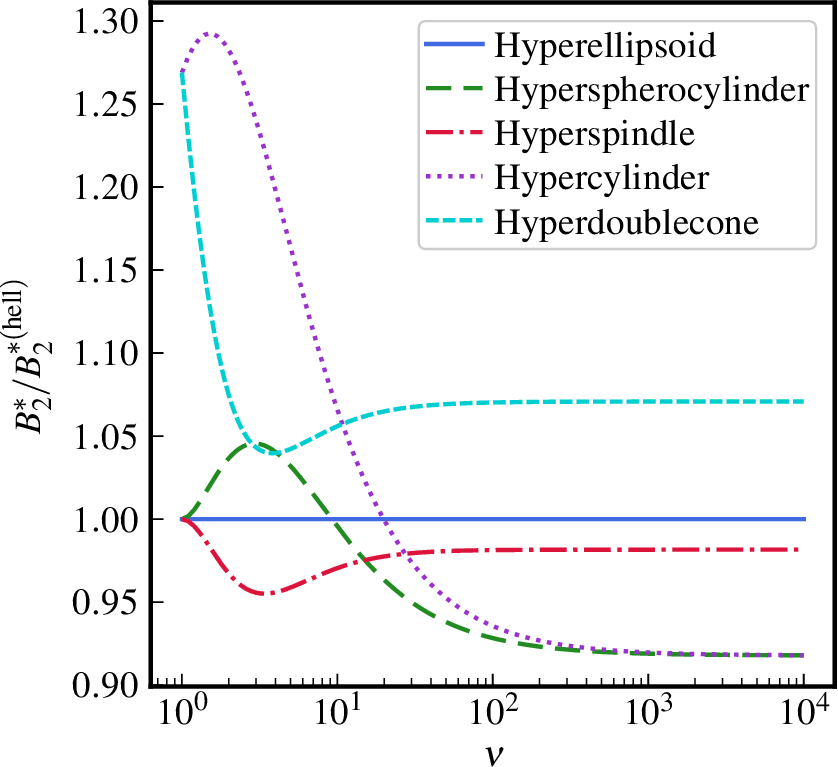}
\caption{Reduced second virial coefficients $B_2^\ast$ of prolate solids of revolution with inversion symmetry (lhs) and these coefficients normalized to the reduced second virial coefficients of hyperellipsoids $B_2^{\ast\,(\mathrm{hell})}$ (rhs) in dependence on the aspect ratio $\nu$.}\label{fig:b2red_prolate}
\end{figure}

With increasing aspect ratio $\nu$, the proportionality $B_2^\ast (\nu \gg 1) \propto \nu$ is observed with the limits 
\begin{subequations}\label{eq:b2_red_lim_pro}
\begin{align}
\lim\limits_{\nu \to \infty} \dfrac{B_2^{\ast\,(\mathrm{hcyl})}}{\nu} &= \dfrac{8}{\pi} \,,\\ 
\lim\limits_{\nu \to \infty} \dfrac{B_2^{\ast\,(\mathrm{hscy})}}{\nu} &= \dfrac{8}{\pi} \,,\\
\lim\limits_{\nu \to \infty} \dfrac{B_2^{\ast\,(\mathrm{hspi})}}{\nu} &= \dfrac{77}{9 \pi} \,,\\
\lim\limits_{\nu \to \infty} \dfrac{B_2^{\ast\,(\mathrm{hell})}}{\nu} &= \dfrac{4}{3} + \dfrac{128}{9 \pi^2}\,, \label{eq:b2_red_lim_pro_hell} \\
\lim\limits_{\nu \to \infty} \dfrac{B_2^{\ast\,(\mathrm{hdcn})}}{\nu} &= \dfrac{28}{3\pi} 
\end{align}
\end{subequations}
in increasing order [Fig. \ref{fig:b2red_prolate} (rhs)].
Again, for large aspect ratios hyperspherocylinders approach hypercylinders with a negligible contribution of the capping hyperspheres. 
The limit for four-dimensional hyperspherocylinders is already reported in a previous work \cite{Kulossa2022}. 

Particularly interesting is the limiting case $\nu \to 1$ for hypercylinders and hyperdoublecones: 
For four-dimensional hypercylinders and hyperdoublecones with aspect ratio $\nu=1$, the reduced second virial coefficient 
\begin{align}
\left[B_2^{\ast\,(\mathrm{hcyl})}(\nu=1)\right] = \left[B_2^{\ast\,(\mathrm{hdcn})}(\nu=1)\right] = 3 + \dfrac{20}{\pi} + \dfrac{\pi}{4} 
\end{align}
is identical and thus the rotation-averaged excluded volume $V_\rex$ for particles with the same volume $V_\rP$ is identical, too.  
For the volumes of $D$-dimensional cylinders and doublecones in $\mathbb{R}^D$, in general
\begin{align}
V_\rP^{(\mathrm{cyl})}(\nu) = D V_\rP^{(\mathrm{dcn})}(\nu) = 2  \kappa_{D-1} \nu r_\req^D
\end{align}
is obtained [Eqs. \eqref{eq:V_P^D}, \eqref{eq:Hc_rz}, and \eqref{eq:Hdcn_rz}]. 
The reduced second virial coefficients $B_2^\ast$ of four-dimensional hypercylinders and hyperdoublecones can also be compared to $B_2^\ast$ of their analogue geometries in lower-dimensional Euclidean spaces: 
In $\mathbb{R}^2$, for the meridian curve of a hypercylinder a rectangle and for the meridian curve of a hyperdoublecone a rhombus with half the area of the corresponding rectangle are obtained.
In the limit $\nu \to 1$, hard squares result for both geometries and therefore also the same reduced second virial coefficient $B_2^\ast = 1 + 4 / \pi$ results \cite{Kulossa2023b}. 

In $\mathbb{R}^3$, however, the reduced second virial coefficients of cylinders and doublecones with aspect ratio $\nu=1$ differ \cite{Herold2017}. 
This results lead to the conjecture that the rotation-averaged excluded volumes of $D$-dimensional, uniaxial cylinders and doublecones with aspect ratio $\nu=1$ and identical volumes $V_\rP^{\rm (cyl)}=V_\rP^{\rm (dcn)}$ are identical in even-dimensional Euclidean spaces and different in odd-dimensional spaces. The proof is beyond the scope of this work and will be topic of future work.

Comparing Fig. \ref{fig:b2red_prolate} (rhs) with the analogue three-dimensional results  (Fig. 7 in \cite{Herold2017}), characteristic differences are observed between $\mathbb{R}^3$ and $\mathbb{R}^4$:  
For large aspect ratios $\nu \gg 1$, the ratios of the reduced second virial coefficients $B_2^\ast/B_2^{\ast (\rm hell)}$ are more similar in $\mathbb{R}^4$ than in $\mathbb{R}^3$ due to the additional symmetry of the solid. 
The reduced second virial coefficients of hyperspherocylinders and hypercylinders in $\mathbb{R}^4$ normalized to those of hyperellipsoids similarly depend on the aspect ratio $\nu$  as the reduced second virial coefficients of spherocylinders and cylinders normalized to ellipsoids in $\mathbb{R}^3$. However, for solids with apical singularities, deviations are observed. 
Opposite to spindles in $\mathbb{R}^3$, for hyperspindles in $\mathbb{R}^4$ the ratio $\left[B_2^\ast(\nu)/B_2^{\ast (\rm hell)}(\nu)\right] <1$ results for $\nu \to \infty$. 
Also for hyperdoublecones with increasing aspect ratio $\nu$, the ratio $B_2^\ast(\nu)/B_2^{\ast (\rm hell)}(\nu)$ is significantly smaller than for doublecones in $\mathbb{R}^3$. Especially, the reduced second virial coefficients of geometries with apical singularities differ much less in $\mathbb{R}^4$ at large aspect ratios $\nu \gg 1$ from those without apical singularities  than in $\mathbb{R}^3$.

\subsubsection{Oblate geometries}

While hyperellipsoids of revolution, hypercylinders, and hyperdoublecones exist in addition to prolate aspect ratios $\nu\ge 1$ also as oblate solids of revolution with $0\le \nu\le 1$, hyperlenses are restricted to oblate aspect ratios. In Fig. \ref{fig:b2red_oblate} (lhs), reduced second virial coefficients $B_2^\ast$ of these geometries are shown in dependence on their inverse aspect ratio $\nu^{-1}$. 

\begin{figure}[ht]
\centering
\includegraphics[width=0.475\textwidth,valign=t]{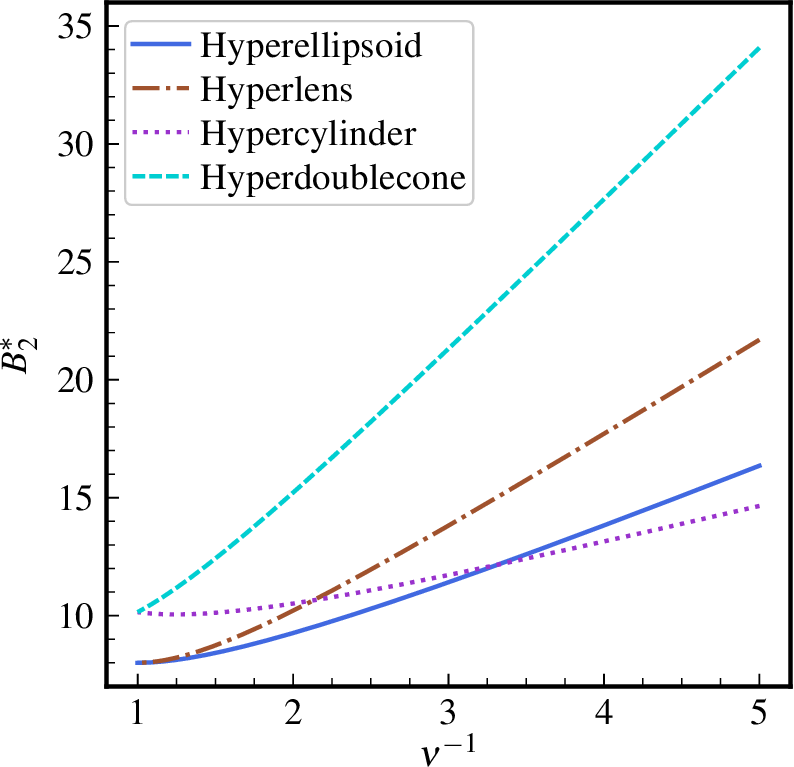} \hfill
\includegraphics[width=0.5\textwidth,valign=t]{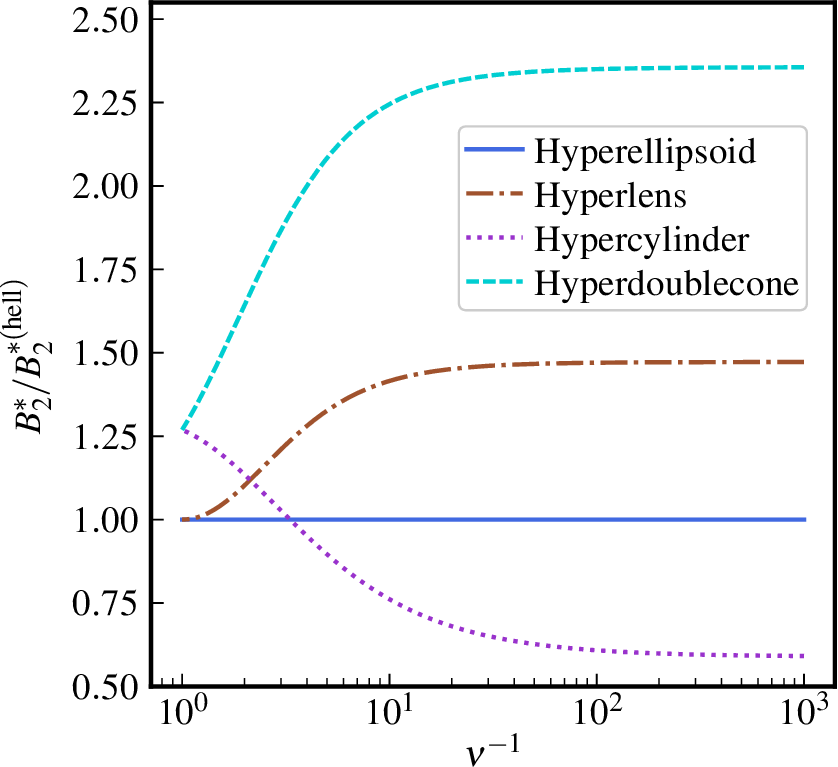}
\caption{Reduced second virial coefficients $B_2^\ast$ of oblate solids of revolution with inversion symmetry (lhs) and these coefficients normalized to the reduced second virial coefficients of hyperellipsoids $B_2^{\ast\,(\mathrm{hell})}$ (rhs) in dependence on the inverse aspect ratio $\nu^{-1}$.}\label{fig:b2red_oblate}
\end{figure}

In the limit $\nu^{-1} \to 1$, hyperlenses and hyperellipsoids approach hyperspheres with $B_2^\ast =8$. 
With decreasing aspect ratio $\nu$, the proportionality  $B_2^\ast(\nu \ll 1) \propto \nu^{-1}$ is observed with the limits 
\begin{subequations}\label{eq:b2_red_lim_obl}
\begin{align}
\lim\limits_{\nu \to 0} \left[\nu B_2^{\ast\,(\mathrm{hcyl})}\right] &= \dfrac{\pi}{4} + \dfrac{8}{3 \pi} \,,\\ 
\lim\limits_{\nu \to 0} \left[\nu B_2^{\ast\,(\mathrm{hell})}\right] &= \dfrac{4}{3} + \dfrac{128}{9 \pi^2} \,, \label{eq:b2_red_lim_obl_hell}\\
\lim\limits_{\nu \to 0} \left[\nu B_2^{\ast\,(\mathrm{hlen})}\right] &= \dfrac{5}{8} \pi + \dfrac{20}{3 \pi} \,,\\
\lim\limits_{\nu \to 0} \left[\nu B_2^{\ast\,(\mathrm{hdcn})}\right] &= \pi + \dfrac{32}{3 \pi} 
\end{align}
\end{subequations}
in increasing order. Note that for hyperellipsoids of revolution the relation
\begin{align}
\label{eq:hell_symmetry_0_infty}
\lim\limits_{\nu\to 0} \left[\nu B_2^\ast(\nu )\right] = \lim\limits_{\nu\to \infty} \dfrac{B_2^\ast(\nu )}{\nu}
\end{align}
is fulfilled [Eqs. \eqref{eq:b2_red_lim_pro_hell} and \eqref{eq:b2_red_lim_obl_hell}].

Despite hyperlenses and oblate hyperellipsoids are similar particle shapes, the reduced second virial coefficients of hyperlenses exceed those of the corresponding hyperellipsoids of revolution for all aspect ratios $\nu < 1$, indicating the influence of the detailed particle shape beyond size and aspect ratio.
In $\mathbb{R}^4$, for small aspect ratios $0<\nu\ll 1$, larger differences between the reduced second virial coefficients of these geometries are observed than between their analogues in $\mathbb{R}^3$ [Fig. 11 in \cite{Herold2017} and Fig. \ref{fig:b2red_oblate} (rhs)] which is opposite to the comparison of prolate geometries between $\mathbb{R}^3$ and $\mathbb{R}^4$.

\subsubsection{Comparison of shapes existing both as prolate and oblate solids of revolutions}\label{sec:res_pro_obl}

Hyperellipsoids, hypercylinders, and hyperdoublecones exist both as prolate and oblate solids of revolution with aspect ratios \mbox{$0 \leq \nu < \infty$}. Since in addition to the former geometries with inversion symmetry also hypercones (discussed later in Sec. \ref{sec:cones}) exist as prolate and oblate solids of revolution, they are for completeness included in this comparison.
In Fig. \ref{fig:b2_sym}, reduced second virial coefficients $B_2^\ast$ of these geometries are displayed as well as reduced second virial coefficients normalized to those of hyperellipsoids of revolution in dependence of their aspect ratio $\nu$.

\begin{figure}[ht]
\centering
\includegraphics[width=0.49\textwidth,valign=t]{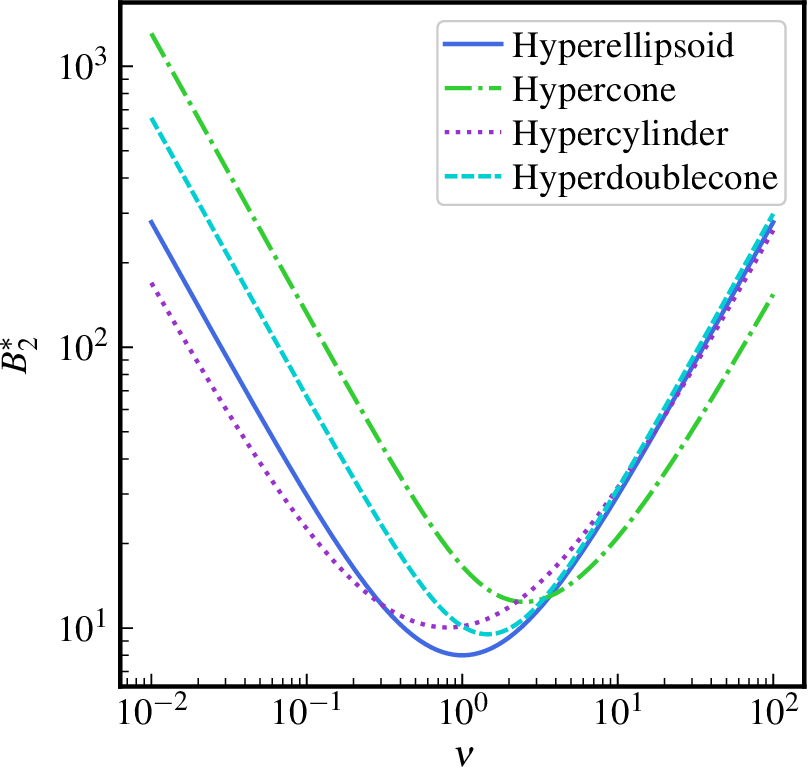} \hfill
\includegraphics[width=0.48\textwidth,valign=t]{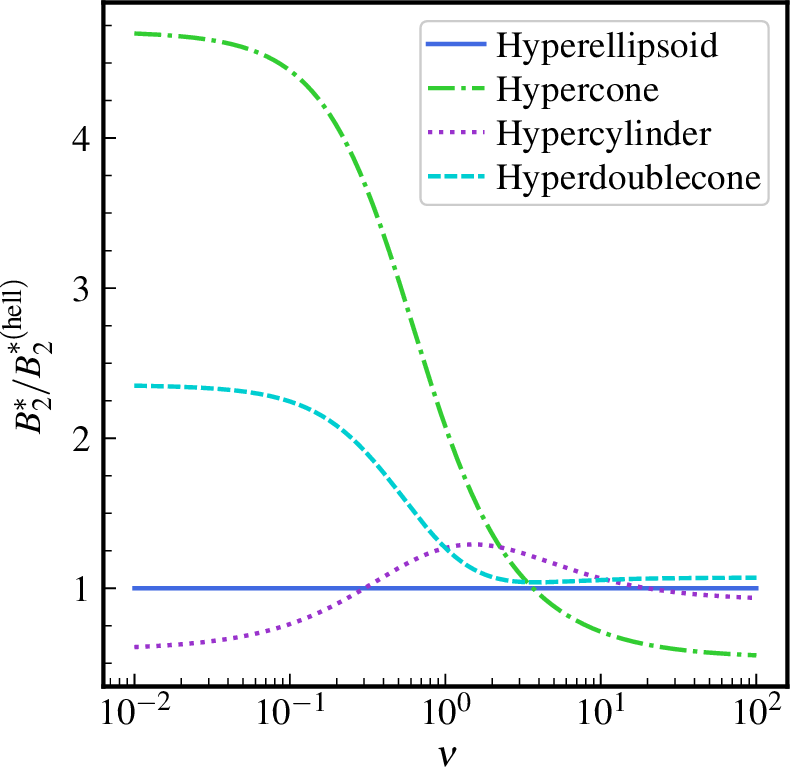}
\caption{Reduced second virial coefficients $B_2^\ast$ for shapes existing as prolate and oblate solids of revolution (lhs) and their reduced second virial coefficients normalized to those of hyperellipsoids $B_2^\ast / B_2^{\ast\,(\mathrm{hell})}$ (rhs) in dependence on the aspect ratio $\nu$.}\label{fig:b2_sym}
\end{figure} 

The reduced second virial coefficient $B_2^\ast$ for each geometry traverses a minimum $B_{2, \rmin}^\ast$ at $\nu_{\rm min}$ as summarized in Table \ref{tab:nu_min}. 
As expected, hyperellipsoids approach for $\nu \to 1$ a hypersphere with the smallest $B_2^\ast$ possible while for different geometries the minima are located at $\nu_{\min}\neq 1$.

\begin{table}[ht]
\caption{\label{tab:nu_min} Minima of reduced second virial coefficients $B_{2, \rmin}^\ast$ and corresponding aspect ratios $\nu_{\rm min}$.}
\begin{center}
\begin{tabular}{lcc}
\toprule 
Geometry & $\nu_{\rm min}$ & $B_{2, \rmin}^\ast(\nu_{\rm min})$\\ 
\midrule 
 Hyperellipsoid 		& $1$ & $8$	\\ 
 Hyperdoublecone 		& $1.456\,428\,\dots$ & $\phantom{1}9.506\,809\,\dots$	\\ 
 Hypercylinder 			& $0.801\,098\,\dots$ & $10.050\,852\,\dots$ \\ 
 Hypercone 					& $2.502\,640\,\dots$ & $12.394\,541\,\dots$ \\ 
\bottomrule
\end{tabular}
\end{center}
\end{table} 

In addition to the symmetry relation for reduced second virial coefficients of hyperellipsoids of revolution in the limits $\nu\to 0$ and $\nu\to \infty$ [Eq. \eqref{eq:hell_symmetry_0_infty}], a general
symmetry relation 
\begin{align}
B_2^{\ast\, (\mathrm{hell})} \left(\nu \right) = B_2^{\ast\, (\mathrm{hell})} (\nu^{-1})
\end{align}
can be identified in $\mathbb{R}^4$ which also exists for ellipsoids of revolution in $\mathbb{R}^3$ \cite{Herold2017}. This symmetry relation exists for $D$-dimensional ellipsoids of revolution in arbitrary dimensional Euclidean spaces $\mathbb{R}^D$ as proven in the following.
\begin{theorem}\label{lemma:hell}
For $D$-dimensional, uniaxial ellipsoids of revolution with aspect ratio $\nu$, the relation
\begin{align}
B_2^{\ast\,(\mathrm{ell}) }(D, \nu) &= B_2^{\ast\,(\mathrm{ell}) }(D, \nu^{-1})
\end{align}
is fulfilled for the reduced second virial coefficient $B_2^{\ast {\rm (ell)}}$ in Euclidean spaces $\mathbb{R}^D$ with arbitrary dimension $D$. 
\end{theorem}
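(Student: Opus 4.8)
The plan is to reduce the claimed symmetry to a single transformation identity for the Gauss hypergeometric function and then read it off from the manifestly symmetric structure of the Brunn--Minkowski double sum. First I would write the reduced second virial coefficient directly from Eqs.~\eqref{eq:Vex_K} and \eqref{eq:B2Vex}. Introducing the dimensionless quermassintegrals $\omega_i(\nu):=W_i(\nu)/(\kappa_D r_\req^{D-i})$ for the $D$-dimensional ellipsoid of revolution and using $V_\rP=W_0=\kappa_D\,\nu\,r_\req^{D}$, the Brunn--Minkowski theorem gives
\[
B_2^\ast(D,\nu)=\frac{1}{2\nu}\sum_{i=0}^{D}\binom{D}{i}\,\omega_i(\nu)\,\omega_{D-i}(\nu),
\]
in which $r_\req$ cancels, as it must for a size-independent quantity. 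From the closed form Eq.~\eqref{eq:hell_Wi} one has the oblate representation $\omega_i(\nu)=\nu^{i+1}\,_2\mathcal{F}_1(\tfrac{D+1}{2},\tfrac{i}{2},\tfrac{D}{2};1-\nu^2)$ and its Pfaff-transformed prolate counterpart $\omega_i(\nu)=\nu\,_2\mathcal{F}_1(-\tfrac12,\tfrac{i}{2},\tfrac{D}{2};1-\nu^{-2})$; since $\omega_i$ is a single analytic function of $\nu$ on $(0,\infty)$, either form represents it for oblate and prolate aspect ratios alike, and algebraic identities between these closed forms extend to all $\nu>0$ by analytic continuation.

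The heart of the argument is the single identity
\[
\omega_i(\nu)=\nu\,\omega_{D-i}(\nu^{-1}).
\]
To establish it I would start from $\omega_{D-i}(\nu^{-1})=\nu^{-1}\,_2\mathcal{F}_1(-\tfrac12,\tfrac{D-i}{2},\tfrac{D}{2};1-\nu^2)$ and apply Euler's transformation $_2\mathcal{F}_1(a,b,c;z)=(1-z)^{c-a-b}\,_2\mathcal{F}_1(c-a,c-b,c;z)$ \cite{NIST:DLMF} with $a=-\tfrac12$, $b=\tfrac{D-i}{2}$, $c=\tfrac{D}{2}$, and $z=1-\nu^2$. The exponent is $c-a-b=\tfrac{i+1}{2}$, so the prefactor $(1-z)^{c-a-b}=\nu^{i+1}$ appears while the parameters become $c-a=\tfrac{D+1}{2}$ and $c-b=\tfrac{i}{2}$; this reproduces precisely $\nu^{i}$ times the oblate form of $\omega_i(\nu)$, i.e. $\omega_{D-i}(\nu^{-1})=\nu^{-1}\omega_i(\nu)$. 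The boundary cases $i=0$ and $i=D$ furnish a quick consistency check against $\omega_0=\nu$ and $\omega_D=1$.

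With the key identity in hand the conclusion is immediate. Replacing $\nu$ by $\nu^{-1}$ yields $\omega_i(\nu^{-1})=\nu^{-1}\omega_{D-i}(\nu)$, and applying it to both factors in each summand gives
\[
B_2^\ast(D,\nu^{-1})=\frac{\nu}{2}\sum_{i=0}^{D}\binom{D}{i}\,\omega_i(\nu^{-1})\,\omega_{D-i}(\nu^{-1})=\frac{1}{2\nu}\sum_{i=0}^{D}\binom{D}{i}\,\omega_i(\nu)\,\omega_{D-i}(\nu)=B_2^\ast(D,\nu),
\]
where the last step uses only that the product $\omega_i\,\omega_{D-i}$ commutes. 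This proves the asserted parity for every dimension $D$.

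The main obstacle is the transformation identity of the middle paragraph: one must correctly match the hypergeometric parameters under the \emph{simultaneous} index reflection $i\mapsto D-i$ and argument inversion $\nu^2\mapsto\nu^{-2}$, and justify that the resulting relation holds across both the oblate and prolate branches (analytic continuation in $\nu$). Once this identity is secured, the $\nu\leftrightarrow\nu^{-1}$ symmetry of $B_2^\ast$ follows purely formally from the binomially symmetric Brunn--Minkowski sum, without any need to evaluate the individual geometric measures $V_\rP$, $S_\rP$, $W_2$, or $\tilde R_\rP$ in closed form.
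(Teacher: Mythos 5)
Your proposal is correct and follows essentially the same route as the paper's proof: both express $B_2^\ast$ through the Brunn--Minkowski sum built from the closed-form hypergeometric quermassintegrals of Eq.~\eqref{eq:hell_Wi}, then use a hypergeometric transformation that swaps $i \leftrightarrow D-i$ under $\nu \to \nu^{-1}$, and finish with the symmetry of the binomial double sum. Your key lemma $\omega_i(\nu)=\nu\,\omega_{D-i}(\nu^{-1})$, derived via Euler's transformation, is exactly the paper's pair of Pfaff relations [Eqs.~\eqref{eq:hell_pfaff1} and \eqref{eq:hell_pfaff2}] in rescaled form (Euler's transformation being the composition of the two Pfaff transformations), so the arguments are mathematically equivalent.
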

\begin{proof}
Starting from the Brunn-Minkowski theorem
\begin{align*}
V_\rex(K) = \dfrac{1}{\kappa_D} \sum\limits_{i=0}^D \begin{pmatrix}
D \\ i\\
\end{pmatrix}  W_i(K) W_{D-i}(K)
\end{align*}
[Eq. \eqref{eq:Vex_K}] for the rotation-averaged excluded volume of two identical convex particles $K$, their reduced second virial coefficient $B_2^\ast (K)$ can be written as 
\begin{align}
B_2^\ast (K) = \dfrac{1}{2 W_0(K) \kappa_D} \sum\limits_{i=0}^D \begin{pmatrix}
D \\ i\\
\end{pmatrix}  W_i(K) W_{D-i}(K)
\end{align}
with $B_2^\ast = B_2 / V_\rP$ [Eq. \eqref{eq:B2red_an}], $V_\rex(K) = 2 B_2(K)$ [Eq. \eqref{eq:B2Vex}], and $V_\rP(K) = W_0(K)$ [Eq. \eqref{eq:W_0}]. 
For $D$-dimensional, uniaxial ellipsoids of revolution, the quermassintegral $W_i$ can be determined by
\begin{align*}
W_i = \kappa_D \nu^{i+1} r_\req^{D-i} \,_2\mathcal{F}_1\left(\dfrac{D+1}{2},\dfrac{i}{2},\dfrac{D}{2};1-\nu^2 \right)
\end{align*}
[Eq. \eqref{eq:hell_Wi}]. 
With 
\begin{align}
W_0 = \kappa_D \nu r_\req^{D}\,,
\end{align}
the reduced second virial coefficient $B_2^\ast$ for a $D$-dimensional, uniaxial ellipsoid of revolution reads as 
\begin{align}
B_2^{\ast\,(\mathrm{ell}) }(D, \nu) = \dfrac{1}{2\kappa_D^2 \nu r_\req^D} &\sum\limits_{i=0}^D \left[ \begin{pmatrix}
D \\ i\\
\end{pmatrix}  \kappa_D^2 \nu^{D+2} r_\req^D \,_2\mathcal{F}_1 \left(\dfrac{D+1}{2},\dfrac{i}{2},\dfrac{D}{2};1-\nu^2 \right) \right. \nonumber \\
&\qquad \left. \,_2\mathcal{F}_1 \left(\dfrac{D+1}{2},\dfrac{D-i}{2},\dfrac{D}{2};1-\nu^2 \right) \right]
\end{align}
which can be rewritten as
\begin{align}\label{b2red_hell_d_nu}
B_2^{\ast\,(\mathrm{ell}) }(D, \nu) = \dfrac{1}{2} \nu^{D+1} &\sum\limits_{i=0}^D \left[ \begin{pmatrix}
D \\ i\\
\end{pmatrix} \,_2\mathcal{F}_1 \left(\dfrac{D+1}{2},\dfrac{i}{2},\dfrac{D}{2};1-\nu^2 \right) \right. \nonumber \\
&\qquad \left. \,_2\mathcal{F}_1 \left(\dfrac{D+1}{2},\dfrac{D-i}{2},\dfrac{D}{2};1-\nu^2 \right) \right] \,.
\end{align}
Analogously, for aspect ratios $\nu^{-1}$,
\begin{align}\label{b2red_hell_d_nu_inv_1}
B_2^{\ast\,(\mathrm{ell}) }(D, \nu^{-1}) = \dfrac{1}{2}  \nu^{-(D+1)} &\sum\limits_{i=0}^D \left[ \begin{pmatrix}
D \\ i\\
\end{pmatrix} \,_2\mathcal{F}_1 \left(\dfrac{D+1}{2},\dfrac{i}{2},\dfrac{D}{2};1-\dfrac{1}{\nu^2} \right) \right. \nonumber \\
&\qquad \left. \,_2\mathcal{F}_1 \left(\dfrac{D+1}{2},\dfrac{D-i}{2},\dfrac{D}{2};1-\dfrac{1}{\nu^2} \right) \right]
\end{align}
is obtained. 
Using the Pfaff transformation \cite{NIST:DLMF} 
\begin{subequations}
\begin{align}
\,_2\mathcal{F}_1 \left(a,b,c;z \right) &= \left(1-z \right)^{-a} \,_2\mathcal{F}_1 \left(a,c-b,c;\dfrac{z}{z-1} \right) \\
&=\left(1-z \right)^{-b} \,_2\mathcal{F}_1 \left(c-a,b,c;\dfrac{z}{z-1} \right)
\end{align}
\end{subequations}
the relations 
\begin{align}\label{eq:hell_pfaff1}
\,_2\mathcal{F}_1 \left(\dfrac{D+1}{2},\dfrac{i}{2},\dfrac{D}{2};1-\nu^2 \right) = \nu^{-(D+1)} \,_2\mathcal{F}_1 \left(\dfrac{D+1}{2},\dfrac{D-i}{2},\dfrac{D}{2};1-\dfrac{1}{\nu^2} \right) 
\end{align}
and
\begin{align}\label{eq:hell_pfaff2}
\,_2\mathcal{F}_1 \left(\dfrac{D+1}{2},\dfrac{D-i}{2},\dfrac{D}{2};1-\nu^2 \right) = \nu^{-(D+1)} \,_2\mathcal{F}_1 \left(\dfrac{D+1}{2},\dfrac{i}{2},\dfrac{D}{2};1-\dfrac{1}{\nu^2} \right) 
\end{align}
result. 
Using Eqs. \eqref{eq:hell_pfaff1} and \eqref{eq:hell_pfaff2}, Eq. \eqref{b2red_hell_d_nu_inv_1} can be written as 
\begin{align}\label{b2red_hell_d_nu_inv}
B_2^{\ast\,(\mathrm{ell}) }(D, \nu^{-1}) = \dfrac{1}{2} \nu^{D+1} &\sum\limits_{i=0}^D \left[ \begin{pmatrix}
D \\ i\\
\end{pmatrix} \,_2\mathcal{F}_1 \left(\dfrac{D+1}{2},\dfrac{i}{2},\dfrac{D}{2};1-\nu^2 \right) \right. \nonumber \\
&\qquad \left. \,_2\mathcal{F}_1 \left(\dfrac{D+1}{2},\dfrac{D-i}{2},\dfrac{D}{2};1-\nu^2 \right)\right] \,,
\end{align}
equal to Eq. \eqref{b2red_hell_d_nu}. 
Hence, the theorem 
\begin{align*}
B_2^{\ast\,(\mathrm{ell}) }(D, \nu) = B_2^{\ast\,(\mathrm{ell}) }(D, \nu^{-1})
\end{align*}
results. This completes the proof. 
\end{proof}

\subsection{Solids without inversion symmetry}

Hyperspherical caps, hypercones, and truncated hypercones are convex, uniaxial solids of revolution without inversion symmetry for which
second virial coefficients $B_2$ and reduced second virial coefficients $B_2^\ast$  can be obtained analytically from the geometric measures 
$V_\rP$, $S_\rP$, $W_2$, and $\tilde{R}_\rP$ [Eq. \eqref{eq:B2red_an}].

\subsubsection{Hyperspherical caps}

The second virial coefficients of hyperspherical caps can either be expressed in dependence on the critical angle $\theta_\rc$ or the aspect ratio $\nu$. The latter quantities are related by 
\begin{align}
\nu = \left\{\begin{array}{ccc}
\left[1- \cos\left( \dfrac{\theta_\rc}{2} \right) \right]\left[2 \sin \left(\dfrac{\theta_\rc}{2} \right)\right]^{-1} & : & 0 \leq \theta_\rc \leq \pi \\
\\
\dfrac{1}{2} \left[1-\cos \left(\dfrac{\theta_\rc}{2} \right) \right] 	& : & \pi \leq \theta_\rc \leq 2 \pi
\end{array} \right.
\end{align} 
[Eqs. \eqref{eq:hcap_h}, \eqref{eq:hcap_req}, and  \eqref{eq:hcap_nu}]. 
\begin{figure}[ht]
\centering
\includegraphics[width=0.475\textwidth ,valign=t]{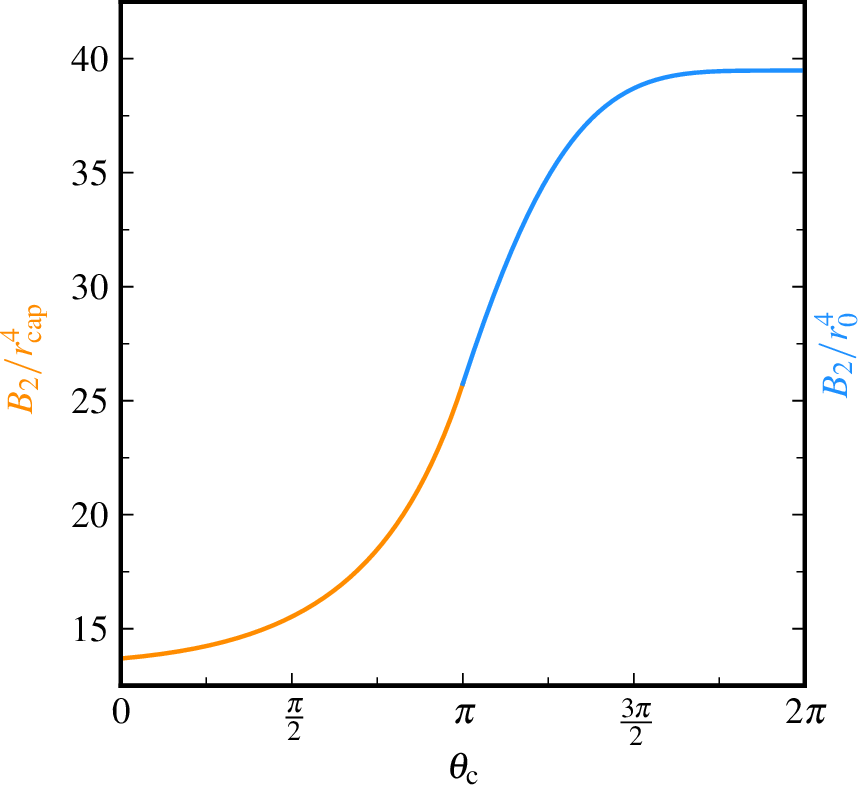} \hfill
\includegraphics[width=0.475\textwidth ,valign=t]{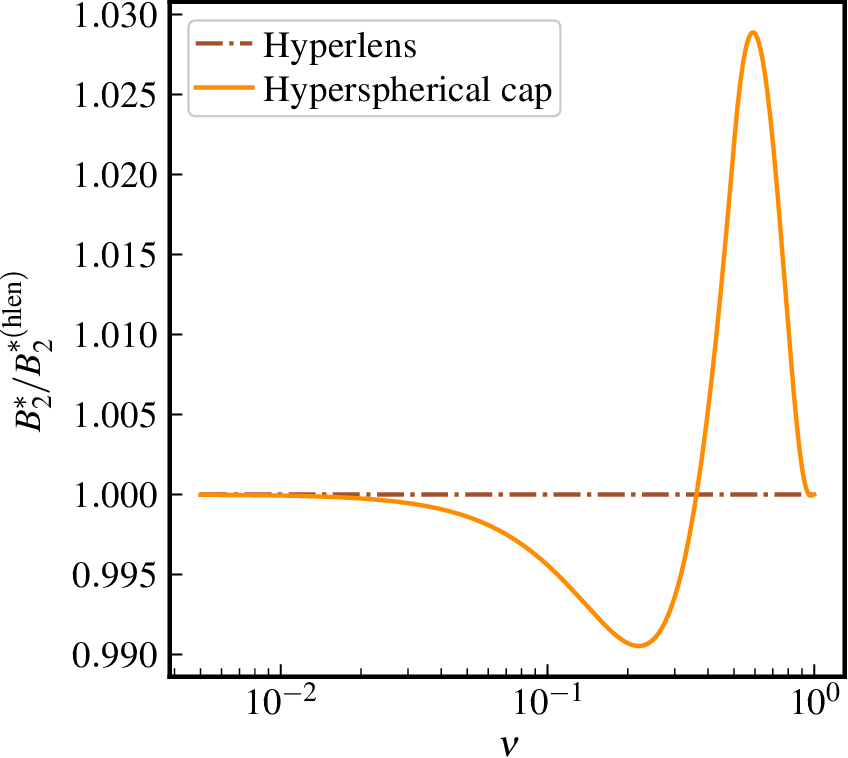}
\caption{Second virial coefficient $B_2$ of hyperspherical caps with radius of the generating hypersphere $r_0$ and radius of the spherical singularity $r_\rcap$ in dependence on the critical angle $\theta_\rc$ (lhs) and reduced second virial coefficients $B_2^\ast(\nu) / B_2^{\ast\,(\mathrm{hlen})}(\nu)$ normalized to those of hyperlenses in dependence on their aspect ratio $\nu$ (rhs). }\label{fig:b2_hcap}
\end{figure}

In Fig. \ref{fig:b2_hcap} (lhs), the second virial coefficients $B_2$ of hyperspherical caps are shown in dependence on the critical angle $\theta_\rc$. 
For critical angles $\theta_\rc \to 0$, a spherical hyperplate with radius $r_\rcap$ and second virial coefficient $B_2 = (\frac{64}{9} + \frac{2}{3} \pi^2) r_{\rm cap}^4$ results, while for $\theta_\rc \to 2 \pi$, a hypersphere with radius $r_0$ and second virial coefficient $B_2 = 4 \pi^2 r_0^4$ is obtained. 
For $\theta_\rc = \pi$, simply $r_\rcap = r_0 = h$ with second virial coefficient $B_2 = \left( \frac{43}{24} \pi^2 + 2 \pi + \frac{16}{9}\right) r_0^4$ results. 

The ratio of reduced second virial coefficients of hyperspherical caps and hyperlenses $B_2^{\ast}(\nu)/B_2^{\ast\,(\mathrm{hlen})}(\nu)$ is displayed in Fig. \ref{fig:b2_hcap} (rhs) in dependence on the aspect ratio $\nu$.
The reduced second virial coefficients are nearly identical with a maximum ratio at $\nu\approx 0.59$, a minimum ratio at $\nu\approx 0.22$ and a second, less pronounced minimum ratio at $\nu\approx 0.95$. 
In the limit $\nu \to 0$, both geometries approach a spherical hyperplate.

\subsubsection{Hypercones \& truncated hypercones}\label{sec:cones}

The second virial coefficients $B_2$ and reduced second virial coefficients $B_2^\ast$ of truncated hypercones are depicted in Fig. \ref{fig:b2_hcnt} in dependence on the aspect ratio $\nu$ for selected ratios $r_1/r_0$. 
In the limit $r_1/r_0 \to 0$, truncated hypercones approach hypercones. For the second virial coefficient $B_2^{(\mathrm{hcon})}$ of hypercones, in the limit 
\begin{align}
\lim\limits_{\nu \to \infty} \dfrac{B_2^{(\mathrm{hcon})}}{\nu^2 r_0^4} = \dfrac{14}{9}
\end{align} 
the proportionality $B_2 (\nu \gg1) \propto \nu^2$ is observed while for the reduced second virial coefficient in this limit
\begin{align}
\lim\limits_{\nu \to \infty} \dfrac{B_2^{\ast\,(\mathrm{hcon})}}{\nu} = \dfrac{14}{3 \pi}
\end{align}
the proportionality $B_2^\ast( \nu \gg 1) \propto \nu$ results. 

\begin{figure}[ht]
\centering
\includegraphics[width=0.48\textwidth ,valign=t]{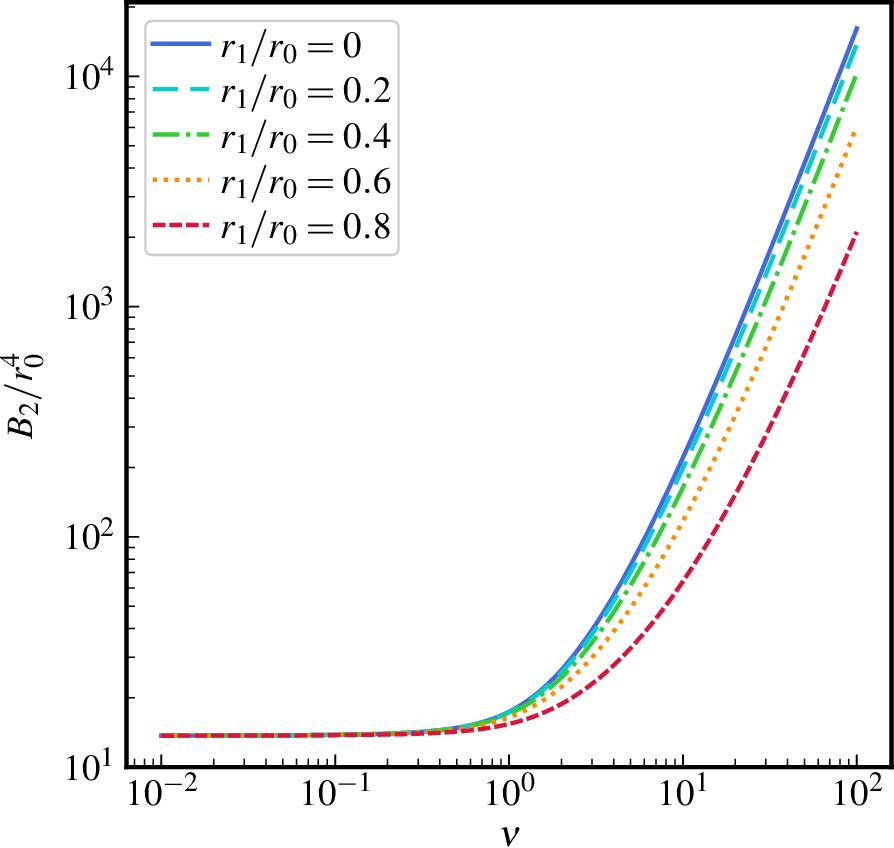}\hfill
\includegraphics[width=0.48\textwidth ,valign=t]{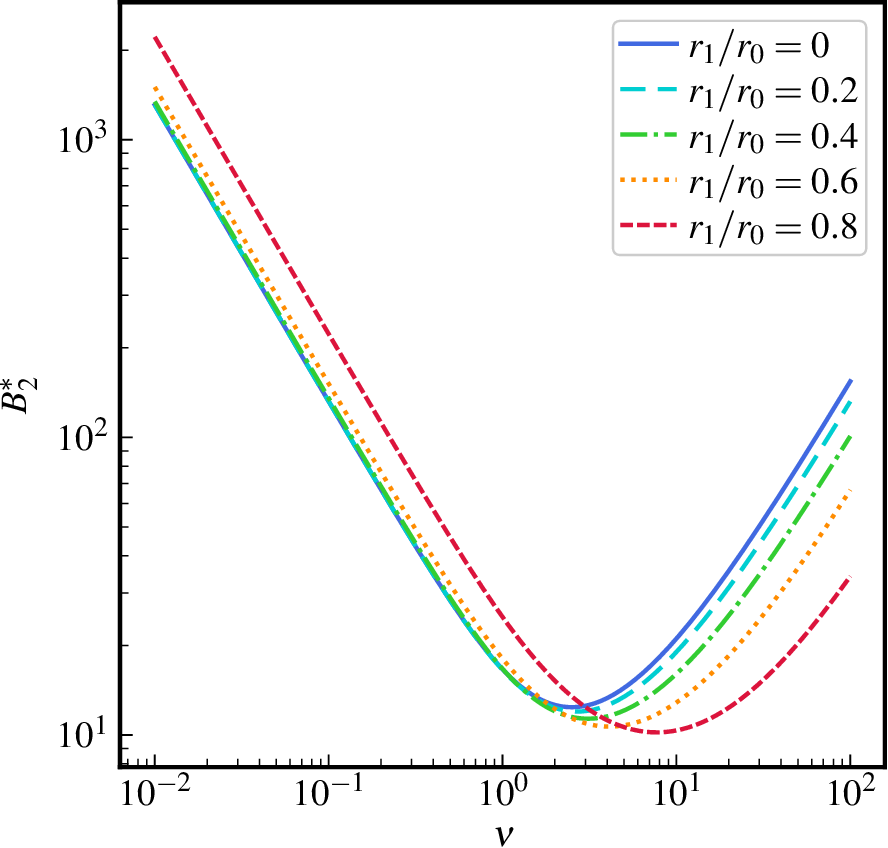}
\caption{Second virial coefficient $B_2$ (lhs) and reduced second virial coefficient $B_2^\ast$ (rhs) of truncated hypercones in dependence on the aspect ratio $\nu$ for selected ratios of radii $r_1/r_0$. In the limit $r_1/r_0 \to 0$, a hypercone results.}\label{fig:b2_hcnt}
\end{figure}

The second virial coefficients of both, hypercones and truncated hypercones, approach in the limit  
\begin{align}
\lim\limits_{\nu \to 0} \dfrac{B_2}{r_0^4} = \dfrac{64}{9} + \dfrac{2}{3} \pi^2  
\end{align}
the expected value of a spherical hyperplate. 
In the limit $r_1 / r_0 \to 0$, approaching hypercones with the reduced second virial coefficient $B_2^{\ast\,(\mathrm{hcon})}$, in the limit 
\begin{align}
\lim\limits_{\nu \to 0} \left[ \nu B_2^{\ast\,(\mathrm{hcon})} \right] = 2 \pi + \dfrac{64}{3 \pi}\,,
\end{align}
the proportionality $B_2^{\ast\,(\mathrm{hcon})} (\nu \ll 1) \propto \nu^{-1}$ is observed for vanishing aspect ratios $\nu$. 

With increasing aspect ratio $\nu$, the second virial coefficient $B_2$ of truncated hypercones decreases with increasing ratio of radii $r_1/r_0$. 
The minima of reduced second virial coefficients $B_{2}^\ast$ with respect to the aspect ratio $\nu$ decrease with rising ratio of radii $r_1/r_0$ and shift to larger aspect ratios.

\subsection{Excluded volume of $D$-dimensional spherocylinders}

Using the general expression for quermassintegrals of $D$-dimensional, uniaxial spherocylinders in Euclidean spaces $\mathbb{R}^D$ 
\begin{align*}
W_i = \left[\kappa_D + 2 \left(\nu-1\right) \dfrac{ D-i}{D} \kappa_{D-1} \right] r_\req^{D-i} 
\end{align*}
[Eq. \eqref{eq:Hsc_Wi_D}] and the Brunn-Minkowski theorem [Eq. \eqref{eq:Vex_K}], the excluded volume of identical particles can be written as
\begin{align}
V_\rex^{(\mathrm{scyl})} &= 2^D V_\rP + 2^{D-2} \left(S_\rP \tilde{R}_\rP - D V_\rP \right) 
\end{align}
for any dimension $D$, depending only on quermassintegrals of orders $i=0$, $i=1$ and $i=D-1$.

\begin{theorem}\label{lemma:hscy}
For identical, uniaxial $D$-dimensional spherocylinders with aspect ratio $\nu$ and radius $r_\req$, the rotation-averaged excluded volume $V_\rex^{(\mathrm{scyl})}$ can be written as
\begin{align}\label{eq:lemma_hscy}
V_\rex^{(\mathrm{scyl})} 
&= \left(2 r_\req \right)^D \left[ \kappa_D + 2 \left(\nu -1 \right) \kappa_{D-1} + \left(\nu -1 \right)^2 \,\dfrac{D-1}{D}\, \dfrac{\kappa_{D-1}^2}{\kappa_D} \right] \\
&= 2^D V_\rP + 2^{D-2} \left(S_\rP \tilde{R}_\rP - D V_\rP \right)  \nonumber
\end{align}
in Euclidean spaces $\mathbb{R}^D$ with arbitrary dimension $D$.
\end{theorem}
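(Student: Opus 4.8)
The plan is to prove both equalities by substituting the explicit spherocylinder quermassintegrals \eqref{eq:Hsc_Wi_D} into the Brunn-Minkowski theorem \eqref{eq:Vex_K} and exploiting the fact that $W_i$ depends only linearly on the index $i$. First I would write $W_i = a_i\, r_\req^{D-i}$ with $a_i = A - B i$, where $A = \kappa_D + 2(\nu-1)\kappa_{D-1}$ and $B = 2(\nu-1)\kappa_{D-1}/D$ are independent of $i$. Since then $W_i W_{D-i} = a_i a_{D-i}\, r_\req^D$, every summand in \eqref{eq:Vex_K} carries the common factor $r_\req^D$, and the product of the two affine factors is $a_i a_{D-i} = A^2 - ABD + B^2\, i(D-i)$: the terms linear in $i$ cancel against those in $D-i$, leaving only a constant piece and the symmetric quadratic $i(D-i)$.

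The key step is then the combinatorial evaluation $\sum_{i=0}^D \binom{D}{i} = 2^D$ together with $\sum_{i=0}^D \binom{D}{i}\, i(D-i) = D(D-1)\, 2^{D-2}$, the latter following from the identity $i(D-i)\binom{D}{i} = D(D-1)\binom{D-2}{i-1}$ after reindexing. Substituting these into $\frac{r_\req^D}{\kappa_D}\sum_i \binom{D}{i} a_i a_{D-i}$ and using the algebraic simplification $A^2 - ABD = A\kappa_D$ collapses the sum to $2^D\big[\kappa_D^2 + 2(\nu-1)\kappa_D\kappa_{D-1} + (\nu-1)^2\, \tfrac{D-1}{D}\, \kappa_{D-1}^2\big]$. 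Dividing by $\kappa_D$ and pulling $2^D$ together with $r_\req^D$ into $(2r_\req)^D$ yields exactly the first closed form in \eqref{eq:lemma_hscy}.

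For the second equality I would read off the geometric measures directly from \eqref{eq:W} specialized via \eqref{eq:Hsc_Wi_D}, namely $V_\rP = W_0$, $S_\rP = D W_1$, and $\tilde{R}_\rP = W_{D-1}/\kappa_D$, and verify the elementary algebra $S_\rP \tilde{R}_\rP - D V_\rP = \tfrac{4(\nu-1)^2(D-1)\kappa_{D-1}^2}{D\kappa_D}\, r_\req^D$. Combining this with $2^D V_\rP = 2^D[\kappa_D + 2(\nu-1)\kappa_{D-1}]\, r_\req^D$ shows that $2^D V_\rP + 2^{D-2}(S_\rP \tilde{R}_\rP - D V_\rP)$ reproduces the same closed form, completing the chain of equalities.

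I expect no genuine obstacle: the entire content is the cancellation of the $i$-linear terms in $a_i a_{D-i}$ and the binomial sum $\sum_i \binom{D}{i}\, i(D-i)$. Together these explain the conceptual point behind the statement, namely that only the three quermassintegrals $W_0$, $W_1$, and $W_{D-1}$ — equivalently $V_\rP$, $S_\rP$, and $\tilde{R}_\rP$ — survive in the excluded volume of spherocylinders, since the quadratic term $B^2 i(D-i)$ contributes only through $W_1$ and $W_{D-1}$ (the coefficients that encode $B$) while all intermediate orders drop out; everything else is routine bookkeeping.
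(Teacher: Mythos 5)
Your proposal is correct and follows essentially the same route as the paper: substituting the spherocylinder quermassintegrals \eqref{eq:Hsc_Wi_D} into the Brunn--Minkowski theorem \eqref{eq:Vex_K}, reducing the sum via $\sum_i \binom{D}{i}=2^D$ and $\sum_i \binom{D}{i}\,i(D-i)=D(D-1)2^{D-2}$ (the paper's version of this identity), and then verifying the second form from $V_\rP=W_0$, $S_\rP=DW_1$, $\tilde{R}_\rP=W_{D-1}/\kappa_D$. Your affine parameterization $W_i=(A-Bi)\,r_\req^{D-i}$ is a slightly tidier bookkeeping of the same expansion, but the substance is identical.
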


\begin{proof}
Starting from the Brunn-Minkowski theorem
\begin{align*}
V_\rex(K) = \dfrac{1}{\kappa_D} \sum\limits_{i=0}^D \begin{pmatrix}
D \\ i\\
\end{pmatrix}  W_i(K) W_{D-i}(K)
\end{align*}
[Eq. \eqref{eq:Vex_K}] for the rotation-averaged excluded volume between any two identical convex solids $K$, the expression
\begin{align}
V_\rex^{(\mathrm{scyl})} = \dfrac{1}{\kappa_D} &\sum\limits_{i=0}^D \begin{pmatrix}
D \\ i\\
\end{pmatrix} \left\{ \left[\kappa_D r_\req^{D-i} + 2 \left(\nu-1 \right) \dfrac{D-i}{D} \kappa_{D-1} r_\req^{D-i} \right] \right. \nonumber \\
&\qquad \left. \left[\kappa_D r_\req^i + 2 \left(\nu -1 \right) \dfrac{i}{D} \kappa_{D-1} r_\req^i \right] \right\}
\end{align}
results for $D$-dimensional spherocylinders using Eq. \eqref{eq:Hsc_Wi_D}. 
This can be rewritten as 
\begin{align}
V_\rex^{(\mathrm{scyl})} &= \sum\limits_{i=0}^D \begin{pmatrix}
D \\ i\\
\end{pmatrix} \kappa_D r_\req^D  +\sum\limits_{i=0}^D \begin{pmatrix}
D \\ i\\
\end{pmatrix} 2 \left( \nu -1 \right) \kappa_{D-1} r_\req^D  +  \nonumber \\
&\quad  \sum\limits_{i=0}^D \begin{pmatrix}
D \\ i\\
\end{pmatrix} 4 \left( \nu -1 \right)^2 \dfrac{(D-i)i}{D^2} \dfrac{\kappa_{D-1}^2}{\kappa_D} r_\req^D  \,.
\end{align}
Using the binomial theorem 
\begin{align}
\sum\limits_{i=0}^D \begin{pmatrix}
D \\ i\\
\end{pmatrix}  =2^D \,,
\end{align}
for the excluded volume between identical, $D$-dimensional spherocylinders 
\begin{align}
V_\rex^{(\mathrm{scyl})} =\left(2 r_\req \right)^D \left[\kappa_D + 2 \left(\nu -1\right) \kappa_{D-1} \right] + \dfrac{\left(\nu-1 \right)^2}{D} \dfrac{\kappa_{D-1}^2}{\kappa_D} r_\req^D\, \sum\limits_{i=0}^D \begin{pmatrix}
D \\ i\\
\end{pmatrix} 4 \dfrac{(D-i)\,i}{D}\,,
\end{align}
is obtained.  With 
\begin{align}
\sum\limits_{i=0}^D \begin{pmatrix}
D \\ i\\
\end{pmatrix} 4 \dfrac{(D-i)\,i}{D} = 2^D \left(D-1 \right)
\end{align}
immediately
\begin{align}\label{eq:Hscy_Vex_D_nu}
V_\rex^{(\mathrm{scyl})} = \left(2 r_\req \right)^D \left[ \kappa_D + 2 \left(\nu -1 \right) \kappa_{D-1} + \left(\nu -1 \right)^2 \,\dfrac{D-1}{D}\, \dfrac{\kappa_{D-1}^2}{\kappa_D} \right]
\end{align}
results. Using the representation of geometric measures $V_\rP$, $S_\rP$, and $\tilde R_\rP$ via quermassintegrals
\begin{subequations}
\begin{align}
V_\rP &= W_0 = \kappa_D r_\req^D + 2 \left(\nu-1 \right) \kappa_{D-1} r_\req^D \,, \label{eq:hsc_VP_W0}\\
S_\rP &= D \,W_1 = D \kappa_D r_\req^{D-1} + 2 \left(\nu -1 \right) \left(D-1 \right) \kappa_{D-1} r_\req^{D-1} \,,\\
\tilde{R}_\rP &= \dfrac{1}{\kappa_D} W_{D-1} = r_\req + 2 \left(\nu-1 \right) \dfrac{1}{D} \dfrac{\kappa_{D-1}}{\kappa_D} r_\req\,,
\end{align}
\end{subequations}
the relation
\begin{align}\label{eq:hsc_SR}
S_\rP \tilde{R}_\rP - D V_\rP = 4 \left(\nu-1 \right)^2 \dfrac{D-1}{D} \dfrac{\kappa_{D-1}^2}{\kappa_D} r_\req^D
\end{align}
is obtained for $D$-dimensional spherocylinders [Eqs. \eqref{eq:W} and \eqref{eq:Hsc_Wi_D}]. With
Eqs. \eqref{eq:hsc_VP_W0} and \eqref{eq:hsc_SR}, the excluded volume can be rewritten as
\begin{align*}
V_\rex^{(\mathrm{scyl})} &= 2^D V_\rP + 2^{D-2} \left(S_\rP \tilde{R}_\rP - D V_\rP \right)
\end{align*}
[Eq. \eqref{eq:Hscy_Vex_D_nu}]. This completes the proof.  
\end{proof}

As a unique feature of four-dimensional hyperspherocylinders, their excluded volume $V_\rex^{(\mathrm{hscy})}$ [Eq. \eqref{eq:lemma_hscy}] does not depend on the
particle volume $V_\rP$. Using Eq. \eqref{eq:B2Vex}, the second virial coefficient of $D$-dimensional, uniaxial spherocylinders reads as
\begin{align}
B_2^{(\mathrm{scyl})} = 2^{D-1} V_\rP + 2^{D-3} \left(S_\rP \tilde{R}_\rP - D V_\rP \right)
\end{align}
and the reduced second virial coefficient $B_2^\ast$ as 
\begin{align}
B_2^{\ast \,(\mathrm{scyl})} = 2^{D-1} + 2^{D-3} \left(\dfrac{S_\rP \tilde{R}_\rP}{V_\rP}-D \right)
\end{align}
[Eq. \eqref{eq:B2red_an}].

The influence of the aspect ratio $\nu$ is visible from the reformulation 
\begin{align}
B_2^{\ast \,(\mathrm{scyl})} = 2^{D-1} \left[1+ \dfrac{D-1}{D} \dfrac{\kappa_{D-1}}{\kappa_D} \dfrac{\left(\nu-1 \right)^2}{2 \left(\nu-1 \right)+ \kappa_D\kappa_{D-1}^{-1}} \right]
\end{align}
with the limits 
\begin{align*}
\lim\limits_{\nu \to 1} B_2^{\ast \,(\mathrm{scyl})}(D, \nu) = 2^{D-1}
\end{align*}
for a $D$-dimensional sphere [Eq. \eqref{eq:B2red_D_sphere}] and  
\begin{align}\label{eq:lim_needle_D}
\lim\limits_{\nu \to \infty} \dfrac{B_2^{\ast \,(\mathrm{scyl})} (D,\nu)}{\nu} = 2^D \dfrac{D-1}{8 \sqrt{\pi}} \dfrac{\Gamma\left(D/2 \right)}{\Gamma\left((D+1)/2 \right)}
\end{align}
for an infinitely long $D$-dimensional spherocylinder. In the latter limit, the proportionality $B_2^\ast\propto \nu$ is obtained in accordance to the results in dimensions $2\le D\le 4$
reported in the literature \cite{Kulossa2023b,Herold2017,Kulossa2022}. 
Since in the limit $\nu \to \infty$ the contributions of the capping $D$-dimensional hemispheres are negligible (see Sec. \ref{sec:res_pro}), the limits 
\begin{align}
\lim\limits_{\nu \to \infty} \dfrac{B_2^{\ast \,(\mathrm{scyl})} (D,\nu)}{\nu} = \lim\limits_{\nu \to \infty} \dfrac{B_2^{\ast \,(\mathrm{cyl})} (D,\nu)}{\nu}
\end{align}
are identical for $D$-dimensional spherocylinders (scyl) and $D$-dimensional cylinders (cyl).  
In Table \ref{tab:hyperneedle} the results for these limits are provided for Euclidean spaces $\mathbb{R}^D$ with $D\leq 16$.

\begin{table}[ht]
\caption{\label{tab:hyperneedle} Reduced second virial coefficients $B_2^\ast$ for $D$-dimensional spherocylinders with aspect ratio $\nu$ in the limit $\nu \to \infty$ [Eq. \eqref{eq:lim_needle_D}].}
\begin{center}
\begin{tabular}{rcrc}
\toprule 
$D$ & $\lim\limits_{\nu  \to \infty} \dfrac{B_2^{\ast \,(\mathrm{scyl})}(D,\nu)}{\nu}$ & $D$ & $\lim\limits_{\nu  \to \infty} \dfrac{B_2^{\ast \,(\mathrm{scyl})}(D,\nu)}{\nu}$ \\ 
\midrule \vspace*{3mm}
  1 & $0$ 		&  2 & $\dfrac{1}{\pi}$ 		 \\ \vspace*{3mm}  
  3 & $1$ 		&  4 & $\dfrac{2^{3}}{\pi}$ 		 \\ \vspace*{3mm} 
  5 & $6$ 		&  6 & $\dfrac{2^{7}}{3\pi}$ 	 \\ \vspace*{3mm} 
  7 & $30$ 		&  8 & $\dfrac{2^{11}}{10\pi}$ 	 \\ \vspace*{3mm}
  9 & $140$ 	& 10 & $\dfrac{2^{15}}{35\pi}$ \\ \vspace*{3mm}
 11 & $630$   & 12 & $\dfrac{2^{19}}{126\pi}$ \\ \vspace*{3mm}
 13 & $2\,772$ 	& 14 & $\dfrac{2^{23}}{462\pi}$ \\ \vspace*{3mm}
 15 & $12\,012$	& 16 & $\dfrac{2^{27}}{1\,716\pi}$ \\ 
\bottomrule
\end{tabular}
\end{center}
\end{table}

Instead of an infinitely long $D$-dimensional spherocylinder with radius $r_\req$, a rod with length $l$ and radius $r_\req=0$ is an alternative description of a hard needle in $\mathbb{R}^D$ \cite{Kulossa2023b}.  
The quermassintegrals $W_i$ read as 
\begin{align}
W_i = \left\{\begin{array}{lcl}
0 & : & i \leq D-2 \vspace*{2mm}\\
\dfrac{\kappa_{D-1}}{D} \,l & : & i = D-1 \vspace*{2mm}\\
\kappa_D & : & i=D 
\end{array} \right.
\end{align}
for such line segments \cite{Santalo2004,Torquato2022}. 
With Eq. \eqref{eq:Vex_K}, in $\mathbb{R}^1$ the excluded volume  $V_\rex = 2 l$ and in $\mathbb{R}^2$ the rotation-averaged excluded volume $V_\rex=2l^2/\pi$ of hard rods are obtained. 
In Euclidean spaces $\mathbb{R}^D$ with $D\geq 3$, the rotation-averaged excluded volume $V_\rex=0$ of infinitely thin rods with length $l$ vanishes \cite{Torquato2022}.

\section{Summary and Outlook} 

In this work, analytical expressions for the geometric measures volume $V_\rP$, surface area $S_\rP$, second quermassintegrals $W_2$ and mean radius of curvature $\tilde{R}_\rP$ are derived for selected convex solids of revolution in $\mathbb{R}^4$. 
The results are summarized in Tables \ref{tab:4d_1}, \ref{tab:4d_2}, and \ref{tab:4d_3}. 
Using Eqs. \eqref{eq:W}, analytical expressions for so far unknown quermassintegrals $W_i$ are obtained. 
Employing these quantities, with Eq. \eqref{eq:intrinsic_volumes} the intrinsic volumes $\upsilon_i$ of these convex solids 
of revolution are analytically accessible, too.

In addition to hyperspheres, hyperellipsoids of revolution, hyperspherocylinders, hypercylinders, and spherical hyperplates with already known quermassintegrals (Table \ref{tab:4d_1}), 
analytical expression for so far unknown quermassintegrals of hyperspindles, hyperlenses, hyperdoublecones, hypercones, truncated hypercones, and hyperspherical caps are
provided (Tables \ref{tab:4d_2} and \ref{tab:4d_3}). The latter geometries possess removable singularities in their surface curvature. While apical, zero-dimensional singularities do not contribute to
quermassintegrals, higher-dimensional singularities contribute as exemplarily shown in detail for hyperlenses. This contribution, in general depends on the critical angle enclosed within the $\epsilon$-vicinity
of the respective singularity.  For uniaxial solids of revolution with infinitely large aspect ratios in $\mathbb{R}^4$, a general proportionality $B_2^\ast \propto \nu$ of reduced second virial coefficients to the aspect ratio arises whereas for infinitely thin oblate geometries, the general proportionality $B_2^\ast\propto\nu^{-1}$ results. 

With known principal radii of curvature, the quermassintegrals are accessible for arbitrary convex solids in $\mathbb{R}^4$ [Eqs. \eqref{eq:Wi_K}]. The generalization to dimensions $D>4$ for geometries with continuous surfaces curvature [Eq. \eqref{eq:W_i_continuous}] is straight forward. Possible contributions of singularities can
be determined analogously as demonstrated in $\mathbb{R}^3$ \cite{Herold2017} and $\mathbb{R}^4$ (this work).

For selected geometries, general expressions for their quermassintegrals are compiled in Table \ref{tab:Wi}. Using such general expressions, the parity $B_2^\ast(\nu)=B_2^\ast(\nu^{-1})$ for uniaxial $D$-dimensional ellipsoids is proven. Additionally, 
for $D$-dimensional, uniaxial spherocylinders the dependence of the reduced second virial coefficient $B_2^\ast$ on at most three quermassintegrals $W_0$, $W_1$, and $W_{D-1}$ is proven, too. 

A remaining task is the derivation of a general expression for quermassintegrals of $D$-dimensional doublecones in $\mathbb{R}^D$ needed to prove or disprove the conjecture that in 
even dimensions $D$ reduced second virial coefficients of cylinders and doublecones are identical at aspect ratio $\nu=1$.

\newpage

\begin{table}[h!] 
\caption{\label{tab:4d_1} Geometric measures of four-dimensional hyperspheres, hyperellipsoids of revolution, hyperspherocylinders, hypercylinders, and spherical hyperplates.}
\begin{center}
\begin{tabular}{lll}
\toprule 
Hypersphere & $V_\rP=\dfrac{1}{2} \pi^2 r_0^4$  & $S_\rP=2 \pi^2 r_0^3$\\
\\
						& $W_2=\dfrac{1}{2}\pi^2 r_0^2$ & $\tilde{R}_\rP = r_0$  \\
\midrule
Hyperellipsoid & $V_\rP=\dfrac{1}{2} \pi^2 \nu r_\req^4$  & $W_2=\dfrac{1}{3} \pi^2 r_\req^2 \left( \dfrac{1-\nu^3}{1-\nu^2}\right)$ \\
\\
& $S_\rP= 2 \pi^2 \nu^2 r_\req^3 \, _2\mathcal{F}_1\left(\dfrac{5}{2},\dfrac{1}{2},2;1-\nu^2\right)$ \\
\\
& $\tilde{R}_\rP = \nu^4 r_\req \, _2\mathcal{F}_1\left(\dfrac{5}{2},\dfrac{3}{2},2;1-\nu^2\right)$ \\
\midrule
Hypersphero- & $V_\rP=\left[\dfrac{1}{2}\pi^2+\dfrac{8}{3} \left(\nu-1\right) \pi \right] r_\req^4$  & $S_\rP= \left[2\pi^2 + 8 \left(\nu-1\right)\pi \right] r_\req^3$\\
cylinder \\
 &  $W_2= \left[\dfrac{1}{2} \pi^2 + \dfrac{4}{3} \left(\nu-1 \right) \pi \right] r_\req^2$ & $\tilde{R}_\rP = \left[1+\dfrac{4}{3 \pi} \left(\nu-1\right) \right] r_\req$  \\
\midrule
Hypercylinder & $V_\rP =\dfrac{8}{3} \pi \nu r_\req^4$  & $S_\rP = \dfrac{8}{3} \pi \left(3 \nu+ 1 \right) r_\req^3$\\
 \\
& $W_2=\dfrac{1}{3} \pi \left(4\nu+\pi\right) r_\req^2$ & $\tilde{R}_\rP = \dfrac{4}{3 \pi} \left(\nu+2\right) r_\req$  \\
\midrule
Spherical  & $V_\rP =0$ & $S_\rP =\dfrac{8}{3} \pi r_0^3$ \\
hyperplate \\
			 & $W_2=\dfrac{1}{3} \pi^2 r_0^2$ & $\tilde{R}_\rP =\dfrac{8}{3 \pi} r_0$ \\
\bottomrule
\end{tabular} 
\end{center}
\end{table}

\newpage 

\begin{table}[h!]
\caption{\label{tab:4d_2} Geometric measures of four-dimensional hyperspindles, hyperlenses, and hyperspherical caps.}
\begin{center}
\begin{tabular}{lll}
\toprule  
Hyperspindle & \multicolumn{2}{l}{$V_\rP =\dfrac{5\pi}{16}r_\req^4 \left[ \left(\nu^4-\dfrac{6}{5}\nu^2+1 \right) \left(\nu^2+1 \right)^2 \arcsin\left(\dfrac{2\nu}{\nu^2+1} \right) - 2 \nu^7 \right.$ \vspace*{2mm}} \\
& \multicolumn{2}{l}{ \qquad \quad $\left. -\dfrac{14}{15}\nu^5  + \dfrac{14}{15}\nu^3 +2\nu \right]$} \\
\\
& \multicolumn{2}{l}{$S_\rP =\dfrac{1}{2} \pi r_\req^3 \left[\left(3 \nu^6 + \nu^4 + \nu^2 +3 \right) \arcsin \left(\dfrac{2 \nu}{\nu^2 +1} \right) -6\nu^5 +6\nu \right]$} \\
\\
& \multicolumn{2}{l}{$W_2 = \dfrac{5 \pi}{12} r_\req^2 \left[\left(\nu^4 + \dfrac{2}{5} \nu^2 +1 \right) \arcsin \left(\dfrac{2\nu}{\nu^2+1} \right) - 2\nu^3 + 2 \nu \right]$} \\ 
\\
& \multicolumn{2}{l}{$\tilde{R}_\rP=\dfrac{r_\req}{\pi}\left[\left(\nu^2 +1 \right)\arcsin\left(\dfrac{2\nu}{\nu^2 +1} \right)-\dfrac{2}{3}\nu\,\dfrac{\nu^2-1}{\nu^2+1} \right]$} \\
\midrule 
Hyperlens & \multicolumn{2}{l}{$V_\rP=\dfrac{\pi r_\req^4}{32 \nu^4}\left\{\left(1+\nu^2\right)^4 \left[\pi-2\arcsin	\left(\dfrac{1-\nu^2}{1+\nu^2} \right) \right] - 4\nu \left[1-\nu^6 \phantom{\dfrac{1}{1}}\right. \right. $ \vspace*{2mm}} \\
& \multicolumn{2}{l}{\qquad \quad $\left. \left. + \dfrac{11}{3} \nu^2 \left(1-\nu^2 \right) \right] \right\} $ }\\
\\
& \multicolumn{2}{l}{$S_\rP=\dfrac{\pi\left(1+\nu^2 \right)r_\req^3}{4\nu^3}\left\{\left(1+\nu^2 \right)^2 \left[\pi - 2 \arcsin \left(\dfrac{1-\nu^2}{1+\nu^2} \right) \right] \right.$ \vspace*{2mm}} \\
& \multicolumn{2}{l}{\qquad \quad $ \left. \phantom{\dfrac{1}{1}} -4\nu\left(1-\nu^2\right) \right\}$}\\
\\
& \multicolumn{2}{l}{$W_2=\dfrac{\pi r_\req^2}{8 \nu^2}\left[\left(1+\nu^2\right)^2 \pi -2 \left(\nu^4-\dfrac{2}{3}\nu^2 +1 \right) \arcsin\left(\dfrac{1-\nu^2}{1+\nu^2} \right)\right. $ \vspace*{2mm}} \\ 
& \multicolumn{2}{l}{\qquad \quad $ \left. \phantom{\dfrac{1}{1}} - 4 \nu \left(1-\nu^2 \right) \right]$}\\
\\
& \multicolumn{2}{l}{$\tilde{R}_\rP=\dfrac{r_\req}{2 \pi \nu}\left\{\left(1+\nu^2 \right) \left[\pi -2 \arcsin\left(\dfrac{1-\nu^2}{1+\nu^2} \right) \right] + \dfrac{4 \nu}{3} \dfrac{1-\nu^2}{1+\nu^2} \right\}$} \\
\midrule
Hyperspherical & \multicolumn{2}{l}{$V_\rP = \dfrac{\pi^2}{4} r_0^4 \left[1- \dfrac{2}{\pi} \arctan\left(\dfrac{r_0 -h}{r_\rcap} \right) - \dfrac{2 r_\rcap}{3 \pi r_0^4} \left(r_0-h \right) \left(3r_0^2 +2 r_\rcap^2 \right) \right]$}  \\
 cap  \\
 & \multicolumn{2}{l}{$S_\rP =\pi^2 r_0^3 \left[1-\dfrac{2}{\pi} \arctan \left(\dfrac{r_0-h}{r_\rcap} \right) - \dfrac{2 r_\rcap}{\pi r_0^2} \left(r_0 -h \right) \right] + \dfrac{4}{3} \pi r_\rcap^3$}  \\ 
 \\
 & \multicolumn{2}{l}{$W_2 = \dfrac{\pi^2}{4} r_0^2 \left[1-\dfrac{2}{\pi} \arctan \left(\dfrac{r_0-h}{r_\rcap} \right) - \dfrac{2 r_\rcap}{\pi r_0^2} \left(r_0 -h \right) \right] $ \vspace*{2mm} }\\
 & \multicolumn{2}{l}{\qquad \quad $+ \dfrac{\pi}{3} r_\rcap^2  \left[\dfrac{\pi}{2}+ \arcsin\left( 1- \dfrac{h}{r_0} \right) \right]$}\\
\\
& \multicolumn{2}{l}{$\tilde{R}_\rP = \dfrac{r_0}{2} \left[1-\dfrac{2}{\pi} \arctan \left(\dfrac{r_0-h}{r_\rcap} \right) - \dfrac{2 r_\rcap}{\pi r_0^2} \left(r_0 -h \right) \right] $ \vspace*{2mm}}  \\
& \multicolumn{2}{l}{\qquad \quad $+ \dfrac{4}{3 \pi} r_\rcap \left(2-\dfrac{h}{r_0} \right) $ }\\
\bottomrule
\end{tabular}
\end{center}
\end{table}

\newpage

\begin{table}[h!]
\caption{\label{tab:4d_3} Geometric measures of four-dimensional hyperdoublecones, hypercones, and truncated hypercones.}
\begin{center}
\begin{tabular}{lll}
\toprule  
Hyperdoublecone & $V_\rP=\dfrac{2}{3}\pi \nu r_\req^4$ & $S_\rP =\dfrac{8}{3}\pi r_\req^3 \sqrt{1+\nu^2} $ \\
\\
& \multicolumn{2}{l}{$W_2 = \dfrac{2}{3} \pi r_\req^2 \left[\nu+ \arcsin\left(\dfrac{1}{\sqrt{1+\nu^2}} \right)\right]$} \\ 
\\
& $\tilde{R}_\rP =\dfrac{4}{3 \pi}  r_\req \dfrac{\nu^2+2}{\sqrt{1+\nu^2}}$ \\
\midrule 
Hypercone & $V_\rP =\dfrac{1}{3} \pi \nu r_0^4$ & $S_\rP =\dfrac{4}{3} \pi r_0^3 \left(\sqrt{1+\nu^2}  +1 \right)$ \\
\\
& \multicolumn{2}{l}{$W_2 = \dfrac{1}{6} \pi r_0^2 \left[2\nu +\pi + 2\arcsin\left(\dfrac{1}{\sqrt{1+\nu^2}}\right) \right]$} \\
\\
& $\tilde{R}_\rP = \dfrac{2}{3 \pi}  r_0 \left(2+\dfrac{\nu^2+2}{\sqrt{1+\nu^2}} \right)$  \\
\\
\midrule
Truncated  & \multicolumn{2}{l}{$V_\rP=\dfrac{h}{3} \pi \left(r_0^3 + r_0^2r_1 + r_0r_1^2 +r_1^3 \right)$} \\
hypercone \\
& \multicolumn{2}{l}{$S_\rP =\dfrac{4}{3} \pi \left[r_0^3 + r_1^3 + \left(r_0^2 +r_0r_1 +r_1^2 \right) \sqrt{h^2+\left(r_0-r_1\right)^2}\,  \right]$}  \\
\\ 
& \multicolumn{2}{l}{$W_2 = \dfrac{1}{3} \pi \left(r_0+r_1 \right) \left\{h + \left(r_0 -r_1 \right) \arcsin\left[\dfrac{r_0-r_1}{\sqrt{h^2+(r_0-r_1)^2}} \right] \right\} $ \vspace*{2mm}} \\
& \multicolumn{2}{l}{\qquad \quad $ +\dfrac{1}{6} \pi^2 \left(r_0^2+r_1^2 \right)$}\\
\\
& \multicolumn{2}{l}{$\tilde{R}_\rP=\dfrac{2}{3 \pi} \left[2 \left(r_0 + r_1 \right) + \dfrac{h^2+2(r_0 -r_1)^2}{\sqrt{h^2 + (r_0-r_1)^2}} \right]$} \\
\bottomrule
\end{tabular}
\end{center}
\end{table}

\newpage

\begin{table}[h!]
\caption{\label{tab:Wi} Quermassintegrals $W_i$ of $D$-dimensional, uniaxial solids of revolution.}
\begin{center}
\begin{tabular}{ll}
\toprule 
Sphere & $W_i = \kappa_D r_0^{D-i}$ \\ 
\midrule 
Ellipsoid 	 & $W_i =  \kappa_D \nu^{i+1} r_\req^{D-i} \,_2\mathcal{F}_1\left(\dfrac{D+1}{2},\dfrac{i}{2}, \dfrac{D}{2}; 1-\nu^2 \right)$	\\ 
\midrule 
Spherocylinder & $W_i = \left[ \kappa_D + 2 \left(\nu-1 \right)\, \dfrac{D-i}{D}\, \kappa_{D-1} \right] r_\req^{D-i}$ \\ 
\midrule
Cylinder 		 & $W_i = \left\{\begin{array}{lcc}
2  \kappa_{D-1} \nu  r_\req^D & : & i=0 \vspace*{2mm} \\
\dfrac{2}{D} \kappa_{D-1} \left[\left(D-i \right) \nu + \dfrac{\beta_{i}}{2 \kappa_{i-1}}  \right] r_\req^{D-i} & : & i \geq 1\\
\end{array} \right.$ \\ 
\midrule 
Spherical cap\\
$0 \leq h \leq r_0:$ & $W_0 = \dfrac{\kappa_D}{2} r_0^D \mathcal{I}_{r_\rcap^2/r_0^2}\left(\dfrac{D+1}{2},\dfrac{1}{2} \right)$\\
\\
&$W_1 = \dfrac{\kappa_D}{2}  r_0^{D-1} \mathcal{I}_{r_\rcap^2/r_0^2}\left(\dfrac{D-1}{2},\dfrac{1}{2}\right) + \dfrac{\kappa_{D-1}}{D} r_\rcap^{D-1} $ \vspace*{2mm}\\
\\
$r_0 \leq h \leq 2 r_0:$ & $W_0 = \dfrac{\kappa_D}{2} r_0^D \left[ 2-  \mathcal{I}_{r_\rcap^2/r_0^2}\left(\dfrac{D+1}{2},\dfrac{1}{2} \right)\right]$ \vspace*{2mm}\\
& $ W_1 = \dfrac{\kappa_D}{2}  r_0^{D-1} \left[2- \mathcal{I}_{r_\rcap^2/r_0^2}\left(\dfrac{D-1}{2},\dfrac{1}{2}\right)\right] + \dfrac{\kappa_{D-1}}{D} r_\rcap^{D-1} $ \\ 
\\
\midrule
Doublecone 	& $W_0 = \dfrac{2}{D} \kappa_{D-1} \nu r_\req^D$ \qquad \qquad  $W_1 = \dfrac{2}{D} \kappa_{D-1} \sqrt{1+\nu^2}\, r_\req^{D-1}$  \\ 
\midrule
Cone 	& $W_0 = \dfrac{1}{D} \kappa_{D-1} \nu r_0^D$  \qquad \qquad $W_1 = \dfrac{1}{D} \kappa_{D-1} \left(\sqrt{1+\nu^2} +1 \right) r_0^{D-1}$ \\
\midrule
Truncated & $W_0 = \dfrac{\kappa_{D-1}}{D} \, h\, \dfrac{r_1^D - r_0^D}{r_1 -r_0} $ \\
cone & \\
					& $W_1 = \dfrac{\kappa_{D-1}}{D}  \left[r_0^{D-1} + r_1^{D-1} + \dfrac{r_1^{D-1} - r_0^{D-1}}{r_1 - r_0} \sqrt{h^2 + \left(r_0-r_1 \right)^2} \right]$ \\
\midrule
Spherical 	& \multirow{2}{*}{$W_i = \left\{\begin{array}{lcc}
0 & : & i=0 \vspace*{2mm} \\
\dfrac{\kappa_{D-1}}{\kappa_{i-1}} \dfrac{i}{D} \kappa_i r_0^{D-i} & : & i \geq 1\\
\end{array} \right. $}\\ 
plate\\
\\
\\
\bottomrule
\end{tabular}
\end{center}
\end{table}

\newpage

\newpage

\backmatter


%
%




\end{document}